\tikzset{>=stealth}
\newcolumntype{P}[1]{>{\centering\arraybackslash}m{#1}}
\newcolumntype{U}[1]{>{\centering\arraybackslash}p{#1}}
\newcolumntype{L}[1]{>{\centering\arraybackslash}l{#1}}
\def\autorefapp#1{\hyperref[#1]{Appendix~\ref{#1}}}
\newcommand{\killpunct}[1]{}
\renewcommand{\bra}[1]{\langle #1 |}
\renewcommand{\ket}[1]{|#1\rangle}
\DeclareMathOperator{\arccosh}{\mathrm{arccosh}}
\renewcommand{\braket}[2]{\langle#1 |  #2\rangle}
\def\ketbra#1{ |{#1}\rangle\!\langle{#1}| }
\def\ketAbraB#1#2{|{#1}\rangle\!\langle{#2}| }
\newcommand{\nrm}[1]{\lVert #1 \rVert}
\renewcommand{\vec}[1]{\boldsymbol{#1}}
\newcommand{\spn}{\mathrm{span}}
\newcommand{\CPiNOT}{\mathrm{C}_{\Pi}\mathrm{NOT}}
\newcommand{\EV}{\mathbb{E}}
\newtheorem{lemma}{Lemma}
\newtheorem{corollary}{Corollary}
\newtheorem{theorem}{Theorem}
\begin{document}

\title{\textbf{A shortcut to an optimal quantum linear system solver}}

\author{\vspace{6pt}  Alexander M. Dalzell \vspace{-6pt}}
\affil{  AWS Center for Quantum Computing, Pasadena, CA}

\date{}

\twocolumn[
\maketitle 
\vspace{-14pt}
\begin{onecolabstract}
\vspace{10pt}
Given a linear system of equations $A\vec{x} = \vec{b}$, quantum linear system solvers (QLSSs) approximately prepare a quantum state $\ket{\vec{x}}$ for which the amplitudes are proportional to the solution vector $\vec{x}$. Asymptotically optimal QLSSs have query complexity $O(\kappa \log(1/\varepsilon))$, where $\kappa$ is the condition number of $A$, and $\varepsilon$ is the approximation error. However, runtime guarantees for existing optimal and near-optimal QLSSs do not have favorable constant prefactors, in part because they rely on complex or difficult-to-analyze techniques like variable-time amplitude amplification and adiabatic path-following. Here, we give a conceptually simple QLSS that does not use these techniques. If the solution norm $\nrm{\vec{x}}$ is known exactly, our QLSS requires only a single application of \textit{kernel reflection}---a straightforward extension of the eigenstate filtering (EF) technique of previous work---and the query complexity of the QLSS is $(1+O(\varepsilon))\kappa \ln(2\sqrt{2}/\varepsilon)$. 
If the norm is unknown, our method allows it to be estimated up to a constant factor using $O( \log\log(\kappa))$ applications of \textit{kernel projection}---a direct generalization of EF---yielding a straightforward QLSS with near-optimal $O(\kappa \log\log(\kappa)\log\log\log(\kappa)+\kappa\log(1/\varepsilon))$ total complexity. Alternatively, by reintroducing a concept from the adiabatic path-following technique, we show that $O(\kappa)$ complexity can be achieved for estimating the norm up to a constant factor, yielding an optimal QLSS with $O(\kappa\log(1/\varepsilon))$ complexity while still avoiding the need to analyze the adiabatic theorem. Finally, we give explicit upper bounds on the constant prefactors of the complexity statements: we show that the query complexity of our optimal QLSS is at most $56\kappa + 1.05\kappa \ln(1/\varepsilon) + o(\kappa)$, saving more than an order of magnitude compared to existing QLSS complexity guarantees. 
\vspace{14pt}
\end{onecolabstract}
]

\section{Introduction}

Quantum linear system solvers (QLSSs) efficiently produce a quantum state $\ket{\vec{x}}$ that encodes the solution to a linear system of equations, $A\vec{x} = \vec{b}$. 
Since their discovery in 2008 \cite{harrow2009QLinSysSolver}, they have been a key driver of enthusiasm for quantum computing. After all, the need to numerically solve large linear systems appears in a multitude of applications and already represents a key use case for advanced classical computational hardware, such as graphics processing units. However, compared to their classical counterparts, quantum algorithms begin at an orders-of-magnitude disadvantage due to slower physical clock speeds and severe overheads from quantum error correction (see, e.g., \cite{babbush2021FocusBeyondQuadratic}). Thus, providing a substantial quantum speedup with the QLSS will require making optimizations at every level of the computational stack. 

Toward this end, the asymptotic performance of the QLSS has been steadily improved over time. The key parameters that determine the QLSS complexity are the condition number
$\kappa$ of the matrix to be inverted---that is, the ratio of its largest and smallest singular value---and the error $\varepsilon$ sought by the solution. The original QLSS had cost $O(\kappa^2/\varepsilon)$ queries to the input data comprising $A$ and $\vec{b}$ \cite{harrow2009QLinSysSolver}. A sequence of improvements \cite{ambainis2010VTAA, childs2015QLinSysExpPrec, chakraborty2018BlockMatrixPowers, subasi2019QAlgSysLinEqsAdiabatic, an2022QLSStimeDepAdiabatic, lin2019OptimalQEigenstateFiltering,chakraborty2023quantumRegularized,costa2021OptimalLinearSystem} has reduced this complexity to $O(\kappa \log(1/\varepsilon))$ \cite{costa2021OptimalLinearSystem},\footnote{Optimal $O(\kappa \log(1/\epsilon))$ complexity was also obtained by alternative methods in work concurrent \cite{cunningham2024eigenpathTraversal} and subsequent \cite{low2026vtaa} to the first version of this paper. } which matches lower bounds showing that linear-in-$\kappa$ is optimal \cite{harrow2009QLinSysSolver,Orsucci2021solvingclassesof}.\footnote{ It has been shown that $O(\kappa\log(1/\varepsilon))$ is \textit{jointly} optimal in both $\kappa$ and $1/\varepsilon$ \cite{costa2023discreteConstantFactors,mori2026sparsityLowerBound}. However, in this paper, unless otherwise stated, when we write ``optimal'' and ``near-optimal'' we mean only with respect to $\kappa$, where near-optimal refers to scaling $\kappa \, \mathrm{polylog}(\kappa)$. }

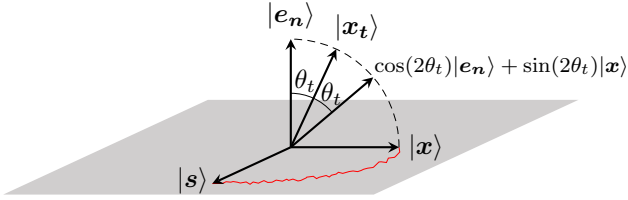
\begin{figure}[t]
    \centering
\definecolor{bkgray}{HTML}{D0CECE}
\begin{tikzpicture}[scale=1.26,thick,baseline=0mm,bezier bounding box]
\def\sinAngle{0.42}
\def\cosAngle{0.907}
\def\sinThetat{0.42}
\def\cosThetat{0.907}
\def\thetatDegrees{24.8}
\def\cosThetatOverTwo{0.9766}
\def\sinThetatOverTwo{0.2147}
\def\height{1}
\def\width{1.75}
\def\radius{5.16/4.52}
\def\offsetX{0}
\def\arcRadius{\radius*0.5}
%
\coordinate (A) at (-\width/2-\height*\cosAngle/\sinAngle+\offsetX,-\height/2);
\coordinate (B) at (-\width/2+\offsetX,\height/2);
\coordinate (C) at (\width/2+\height*\cosAngle/\sinAngle+\offsetX,\height/2);
\coordinate (D) at (\width/2+\offsetX,-\height/2);
\tikzset{rwpath/.style={red, very thin, decorate, decoration = {random steps, segment length=1.5pt, amplitude=0.7pt}}}
\draw[draw=none,fill=bkgray] (A)--(B)--(C)--(D)--cycle;
\coordinate (E) at (0,\radius);
\coordinate (F) at (\radius*\sinThetat, \radius*\cosThetat);
\coordinate (G) at (\radius*2*\sinThetat*\cosThetat,\radius*\cosThetat*\cosThetat-\radius*\sinThetat*\sinThetat);
\coordinate (H) at (\radius,0);
\coordinate (I) at (-\radius*0.8*\cosAngle,-\radius*0.8*\sinAngle);
%
\draw[rwpath] (H) arc(0:-90:{{\radius + \radius*0.8*\cosAngle} and {\radius*0.8*\sinAngle}});
%
\draw[black, thick, ->] (0,0) -- (E);
\draw[black, ->] (0,0) -- (F);
\draw[black, ->] (0,0) -- (G);
\draw[black, ->] (0,0) -- (H);
\draw[black, ->] (0,0) -- (I);
 \draw[black,densely dashed,very thin] (H) arc[start angle=0, end angle=90, radius=\radius];
 %
\node at (\radius*1.25,0) {$\ket{\vec{x}}$};
\node at (0,\radius*1.22) {$\ket{\vec{e_n}}$};
\node at (\radius*\sinThetat*1.25+0.1,\radius*\cosThetat*1.25-0.03) {$\ket{\vec{x_t}}$};
\node at (-\radius*0.8*\cosAngle*1.25,-\radius*0.8*\sinAngle*0.9) {$\ket{\vec{s}}$};
\node at (1.25*\radius*2*\sinThetat*\cosThetat+\width*0.65,1.25*\radius*\cosThetat*\cosThetat-1.25*\radius*\sinThetat*\sinThetat-0.07) {\footnotesize $\cos(2\theta_t)\ket{\vec{e_n}} + \sin(2\theta_t)\ket{\vec{x}}$};
%
\draw[ultra thin, black] (\arcRadius*2*\sinThetat*\cosThetat,\arcRadius*\cosThetat*\cosThetat-\arcRadius*\sinThetat*\sinThetat) arc(90-2*\thetatDegrees:90:\arcRadius);
\node at (\arcRadius*\sinThetatOverTwo*1.25,\arcRadius*\cosThetatOverTwo*1.25) {$\theta_t$};
\node at (\arcRadius*\sinThetatOverTwo*\cosThetat*1.25+\arcRadius*\cosThetatOverTwo*\sinThetat*1.25,\arcRadius*\cosThetatOverTwo*\cosThetat*1.25-\arcRadius*\sinThetatOverTwo*\sinThetat*1.25) {$\theta_t$};
\end{tikzpicture}

    \caption{Illustration of the conceptual idea of the quantum linear system solver in this paper.  Roughly speaking, previous methods approximately follow an adiabatic path from an initial state $\ket{\vec{s}}$ to the solution $\ket{\vec{x}}$, depicted by the red line, and then refine the resulting state with eigenstate filtering (EF). In contrast, our method begins in a known state $\ket{\vec{e_n}}$ that is deliberately orthogonal to $\ket{\vec{x}}$. Given an estimate $t$ for $\nrm{\vec{x}}$, it uses the quantum singular value transformation (QSVT) to reflect about the solution $\ket{\vec{x_t}}$ to an augmented linear system, arriving at $\cos(2\theta_t)\ket{\vec{e_n}}+ \sin(2\theta_t)\ket{\vec{x}}$, where $\theta_t = \arctan(\nrm{\vec{x}}/t)$ is the initial angle between $\ket{\vec{e_n}}$ and $\ket{\vec{x_t}}$. Then, it projects onto the image of $I-\ketbra{\vec{e_n}}$ (shaded plane) with success probability $\sin^2(2\theta_t)$, to arrive at $\ket{\vec{x}}$, up to errors induced by imperfect reflection. The success probability is $\Omega(1)$ as long as $t$ is a constant-factor approximation of $\nrm{\vec{x}}$. }
    \label{fig:two_step}
\end{figure}

The optimal QLSS of Ref.~\cite{costa2021OptimalLinearSystem} achieves this state-of-the-art complexity by combining two techniques: (i) \textit{eigenstate filtering} (EF), and (ii) \textit{adiabatic path-following} via the ``quantum walk method.'' It utilizes EF \cite{lin2019OptimalQEigenstateFiltering} to approximately project onto the ideal state $\ket{\vec{x}}$ at cost $O(\kappa \log(1/\varepsilon))$. However, to apply this technique, one must first prepare an ansatz state $\ket{\vec{x_{\rm ans}}}$ that has constant overlap $\gamma = |\braket{\vec{x_{\rm ans}}}{\vec{x}}| = \Omega(1)$, so that EF succeeds with substantial probability, and need not be repeated more than $1/\gamma^2 = O(1)$ times on average. Extending ideas first explored in Refs.~\cite{subasi2019QAlgSysLinEqsAdiabatic,an2022QLSStimeDepAdiabatic}, this ansatz state is prepared by applying a sequence of unitary ``walk'' operators, where the eigenstates of the operators follow a smoothly varying ``eigenpath.'' The discrete adiabatic theorem guarantees that if the path is traversed sufficiently slowly, the state of the system will approximately track the eigenpath. For the QLSS, one chooses the sequence of walk operators so that the path begins at a simple-to-prepare state related only to the vector $\vec{b}$. As the sequence progresses, information about $A$ is introduced, and the path ends at a state from which $\ket{\vec{x}}$ can be easily extracted.  Ultimately, the method prepares a state $\ket{\vec{x_{\rm ans}}}$ achieving overlap $\gamma$ at total cost $O(\kappa/\sqrt{1-\gamma^2})$. 

However, rigorous treatment of adiabatic algorithms are notoriously difficult \cite{albash2018AQCreview}. Accordingly, the analysis of the discrete adiabatic theorem in Ref.~\cite{costa2021OptimalLinearSystem} is intricate and introduces a large constant prefactor, on the order of $10^5$ (see \cite[Eq.~(L2) of arXiv v1]{jennings2023QLSS}), into the rigorous bound on the complexity of preparing the ansatz state via the quantum walk method. A rigorous analysis \cite{jennings2023QLSS} of an alternative adiabatic path-following strategy called the ``randomization method'' \cite{subasi2019QAlgSysLinEqsAdiabatic} has a similar conclusion: ansatz state preparation is far more costly than the subsequent EF step, even for very small values of $\varepsilon$. 
This large cost of the QLSS---and in particular ansatz state preparation---contributes to the large resource estimates reported for QLSS-based applications, such as financial portfolio optimization \cite{dalzell2022socp} and solving differential equations \cite{jennings2023costDiffEQ}. 
Numerical simulations \cite{costa2023discreteConstantFactors} on small random linear systems of size up to $16 \times 16$ suggest that for both the randomization method and the quantum walk method, the constant prefactors involved are much smaller than their explicit upper bounds. Nevertheless, it remains desirable to have a tighter upper bound that is guaranteed to hold for larger systems, and in the worst case.

In this paper, we introduce a new conceptual idea, depicted in \autoref{fig:two_step}, that eliminates the need for ansatz state preparation via adiabatic path-following. Briefly, the idea is to \textit{augment} the linear system by adding a single additional variable and a single uncoupled linear constraint on that variable. The solution to the augmented linear system simultaneously has constant overlap with the solution $\ket{\vec{x}}$ and with an easy-to-prepare state we label $\ket{\vec{e_n}}$. Thus, we can quickly navigate from $\ket{\vec{e_n}}$ to $\ket{\vec{x}}$ by approximately reflecting about the solution to the augmented system---implementing this reflection is an extension of the EF technique and has cost  $O(\kappa \log(1/\varepsilon))$. The result is a simper QLSS that relies only on EF-like operations combined with a few straightforward linear algebra observations. 

A caveat is that augmenting the linear system in the right way requires knowing (or guessing) an estimate of the Euclidean norm $\nrm{\vec{x}}$ of the solution to the linear system. Thus, in a sense, our work might be viewed as \textit{replacing} the ansatz-state-preparation step in prior QLSSs with a solution-norm-estimation step. This trade is beneficial since $\nrm{\vec{x}}$ is a single real number that need only be learned once, rather than a high-dimensional quantum state that must be reprepared each time the linear system is to be solved. If the norm is known to within a constant multiplicative factor, our QLSS is asymptotically optimal, with $O(\kappa \log(1/\varepsilon))$ cost. If the norm is unknown, we also show that our framework offers a simple way to learn it (up to a constant factor) in near-optimal cost $O(\kappa\log\log(\kappa)\log\log\log(\kappa))$.
Alternatively, by using our method together with some of the intuition from the adiabatic approach, we show that the norm can be learned up to a constant factor in optimal $O(\kappa)$ complexity, while still avoiding the intricate analysis of the adiabatic theorem.  
Furthermore, the approximation ratio of the norm estimate can be improved from constant to $1+\varepsilon$ incurring additional  cost equal to $O(\kappa\log(1/\varepsilon)/\varepsilon)$, shaving multiple $\log(\kappa)$ and $\log(1/\varepsilon)$ factors off of the previous state-of-the-art method for norm estimation from Ref.~\cite{chakraborty2018BlockMatrixPowers}, and nearly matching the lower bound of $\Omega(\kappa/\varepsilon)$ that we show in the appendix.  Depending on which norm estimation method is used, this gives a full QLSS with optimal or near-optimal query complexity.

In \autoref{sec:QLSP}, we establish notation and background information about the quantum linear system problem, and in \autoref{sec:KP_KR_main} we introduce \textit{kernel projection} and \textit{kernel reflection}, the EF-like operations that constitute the main technical ingredient of our QLSS. In \autoref{sec:QLSS_given_norm_main}, we present our main algorithm for solving the quantum linear system problem when an estimate for the solution norm is known, and in \autoref{sec:estimating_the_norm}, we present several algorithms for estimating the norm, along with compact, non-optimized proofs of correctness. In \autoref{sec:constant_factors}, we provide a discussion of constant prefactors, based on a more detailed version of our QLSS that we develop in the appendix.  Finally, in \autoref{sec:conclusion} we provide some concluding remarks.

\section{The quantum linear system problem}\label{sec:QLSP}

The input to the linear system problem is an $m \times n$ matrix $A \in \mathbb{C}^{m\times n}$ and a vector $\vec{b} \in \mathbb{C}^m$. 
The solution is a vector $\vec{x} \in \mathbb{C}^n$ for which $A\vec{x} = \vec{b}$, assuming such a solution exists.\footnote{If no solution exists (because $\vec{b}$ is not in the column space of $A$), we ideally seek the vector for which the least-squares metric $\nrm{A\vec{x} - \vec{b}}$ is minimized. However, our QLSS does not offer a solution in this case, and throughout we assume that $\vec{b}$ is in the column space of $A$. The least-squares case can be handled by the near-optimal QLSS of Ref.~\cite{chakraborty2018BlockMatrixPowers}.} 
When more than one such solution exists, we seek the $\vec{x}$ with minimum Euclidean norm $\nrm{\vec{x}}$. 

Let $s = \lceil \log_2(1+\max(m,n))\rceil$ (here we add one to leave room for an extra row/column, as discussed later). Consider an $s$-qubit system with orthonormal computational basis states $\{\ket{\vec{e_j}}\}_{j=0}^{2^s-1}$. For $k \leq 2^s$, we associate $k$-dimensional vectors $\vec{u} = (u_0,\ldots,u_{k-1})\in \mathbb{C}^k$ with normalized $s$-qubit states $\ket{\vec{u}} = \nrm{\vec{u}}^{-1}\sum_{j=0}^{k-1} u_j\ket{\vec{e_j}}$, for which the coefficients in the computational basis are proportional to the corresponding vector entries. For $k_r,k_c \leq 2^s$, we associate $k_r \times k_c$ matrices $M$ with the operator $\sum_{i=0}^{k_r-1}\sum_{j=0}^{k_c-1} m_{ij} \ketAbraB{\vec{e_i}}{\vec{e_j}}$, where $m_{ij}$ are the matrix entries of $M$. Then, following standard conventions, we assume we have access to the data in the $m$-dimensional vector $\vec{b}$ via a unitary operator $U_{\vec{b}}$ for which $U_{\vec{b}}\ket{\vec{e_0}}=\ket{\vec{b}}$. We also assume that $\nrm{A} \leq 1$ (where $\nrm{\cdot}$ denotes spectral norm for matrix arguments),  and that we have access to the matrix $A$ via a $(\alpha,a)$-block-encoding of $A$, that is, an $(a+s)$-qubit unitary\footnote{Since $\max(m,n) < 2^s$, this assumes that the block-encoded matrix already has at least one row and column of zero-padding. If this is not the case, this padding can be created with one ancilla qubit; see \autorefapp{sec:how_to_pad}.} operation $U_A$ for which
\begin{equation} \nonumber
A =  \alpha(\bra{0}^{\otimes a} \otimes I_{2^s})U_A (\ket{0}^{\otimes a} \otimes I_{2^s})
\end{equation}
where $I_{d}$ denotes the identity operator on a Hilbert space of dimension $d$, here the Hilbert space for the $s$-qubit system. Note that unitarity requires $\alpha \geq \nrm{A}$.

The quantum linear system problem (QLSP) takes as input an error parameter $\varepsilon$ and asks to produce a quantum state $\ket{\vec{\tilde{x}}}$ for which the trace distance $\frac{1}{2}\nrm{\ketbra{\vec{x}}-\ketbra{\vec{\tilde{x}}}}_1$ is at most $\varepsilon$, while minimizing the number of queries to $U_A$ and $U_{\vec{b}}$, as well as their controlled versions and their inverses. We also accept randomized algorithms that output a mixed state $\tilde{\rho}$ for which  $\frac{1}{2}\nrm{\ketbra{\vec{x}}-\tilde{\rho}}_1\leq \varepsilon$. Our analysis could have been performed for other related metrics, such as $\nrm{\ket{\vec{x}}-\ket{\vec{\tilde{x}}}}$, as used in Ref.~\cite{costa2021OptimalLinearSystem}.

Additionally, henceforth we assume the convention that $\nrm{\vec{b}} = 1$ and $\alpha=1$.  This convention is also without loss of generality since the quantum linear system problem asks only to prepare a quantum state proportional to $\vec{x}$, making it insensitive to scaling of $A$ and $\vec{b}$. We suppose there is a known value of $\kappa$ such that all nonzero singular values of $A$ fall in the interval $[\kappa^{-1},1]$---thus $\kappa$ is an upper bound on the condition number of $A$, when $A$ is restricted to act on inputs orthogonal to its kernel.\footnote{Actually, we need only consider singular values corresponding to singular vectors on which $\vec{b}$ has support, when computing the required choice of $\kappa$. } These conventions imply that 
\begin{equation}
    1 \leq \nrm{\vec{x}} \leq \kappa\,.
\end{equation}

\section{Kernel projection and kernel reflection}\label{sec:KP_KR_main}

Kernel projection (KP) is the term we give to the technique of eigenstate filtering, generalized to non-Hermitian and potentially non-square matrices. As its name suggests, when combined with postselection, it leads to approximate projection onto the kernel of a matrix. Similarly, kernel reflection (KR) leads to approximate reflection about the kernel. In either case, the first step is to construct a matrix $G$ out of $A$ and $\vec{b}$ for which $\vec{x}$ is in the kernel. Namely, following prior work \cite{subasi2019QAlgSysLinEqsAdiabatic,an2022QLSStimeDepAdiabatic,lin2019OptimalQEigenstateFiltering}, we let
\begin{equation}\label{eq:G}
    G = Q_{\vec{b}}A 
\end{equation}
where the $m \times m$ matrix $Q_{\vec{b}} = I_m - \vec{b}\vec{b}^\dagger $
is the projector onto the subspace orthogonal to $\vec{b}$. The kernel of $G$ is the span of the kernel of $A$ and the vector $\vec{x}$. Here, we note that since we have defined $\vec{x}$ to be the solution to $A\vec{x} = \vec{b}$ of minimum norm $\nrm{\vec{x}}$, $\vec{x}$ is orthogonal to the kernel of $A$.  We can easily construct a $(1,a+1)$-block-encoding $U_G$ for $G$ using one query to each of $U_A$, $U_{\vec{b}}$, and $U_{\vec{b}}^\dagger$; see \autoref{app:block_encodings}.

Furthermore, we can assert that the smallest nonzero singular value of $G$ is at least $\kappa^{-1}$, by the following argument (see also  \cite{subasi2019QAlgSysLinEqsAdiabatic,an2022QLSStimeDepAdiabatic}). Suppose $\sigma$ is a nonzero singular value of $G$, with normalized left singular vector $\vec{u}$ and right singular vector $\vec{v}$. Since $\sigma$ is nonzero, $\vec{v}$ is orthogonal to $\vec{x}$ and to the kernel of $A$. Moreover, by definition of singular vectors, $\sigma \vec{u} = Q_{\vec{b}} A \vec{v} = A\vec{v} - (\vec{b}^\dagger A\vec{v})\vec{b} = A(\vec{v} - (\vec{b}^\dagger A\vec{v})\vec{x})$, implying that $\vec{u}$ is in the image of $A$, and also that $Q_{\vec{b}} \vec{u} = \vec{u}$.  By the definition of singular values, $\sigma^2 = \vec{u}^\dagger Q_{\vec{b}} A A^\dagger Q_{\vec{b}} \vec{u} = \vec{u}^\dagger AA^\dagger\vec{u}\geq 1/\kappa^2$, where the last inequality follows from the fact that $\vec{u}$ is in the image of $A$, and thus can be expressed as a linear combination of eigenvectors of $AA^\dagger$, all of which have eigenvalue at least $1/\kappa^2$.

With these facts established, we can now define the technique of kernel projection. It takes as input two parameters $(\kappa,\eta)$ and only works under the promise that the nonzero singular values of $G$ lie in the interval $[\kappa^{-1},1]$ (which we proved above in the case $G=Q_{\vec{b}}A$). Consider an arbitrary normalized $s$-qubit state $\ket{\phi} = \gamma \ket{\vec{w}} + \nu \ket{\vec{w_\perp}}$, where $\vec{w}$ is a unit vector in the kernel of $G$ (typically $\ket{\vec{w}} = \ket{\vec{x}}$), and $\vec{w_\perp}$ is a unit vector orthogonal to the kernel of $G$. KP enacts the transformation
\begin{equation}\label{eq:impact_KP}
\gamma \ket{\vec{w}} + \nu \ket{\vec{w}_{\perp}} \overset{\rm KP}{\mapsto} \gamma \ket{\vec{w}} + \nu\delta_1\ket{\vec{w_{\perp}}}+\nu\delta_2\ket{\vec{w_{\perp}'}}
\end{equation}
where $\delta_1$ and $\delta_2$ are real parameters (dependent on $\vec{w}$), that satisfy $\sqrt{\delta_1^2+\delta_2^2}\leq \eta$, and $\vec{w_{\perp}'}$ is another unit vector orthogonal to $\vec{w_{\perp}}$ and to the kernel of $G$. In other words, KP leaves the kernel of $G$ untouched while shrinking the norm of the portion of the state orthogonal to the kernel by a factor of $\eta$ (or more). The right-hand side is subnormalized: the probability that KP succeeds is given by its squared-norm $|\gamma|^2+|\nu|^2(\delta_1^2+\delta_2^2)$. This probability can be small if the initial state $\ket{\phi}$ has small overlap $\gamma$ with the kernel of $G$, necessitating the step of ansatz state preparation in prior work \cite{lin2019OptimalQEigenstateFiltering,costa2021OptimalLinearSystem}. 

Similarly, KR can be understood as enacting the transformation 
\begin{equation}\label{eq:impact_KR}
\gamma \ket{\vec{w}} + \nu \ket{\vec{w}_{\perp}} \overset{\rm KR}{\mapsto} \gamma \ket{\vec{w}} - \nu(1-\delta_1')\ket{\vec{w_{\perp}}}+\nu\delta_2'\ket{\vec{w_{\perp}'}}\,.
\end{equation}

\noindent For fixed $\vec{w}$ and $\vec{w_{\perp}}$, the vector $\vec{w_{\perp}'}$ in Eq.~\eqref{eq:impact_KR} is the same as the one in Eq.~\eqref{eq:impact_KP}. Moreover, we have the identities $\delta_1' = \frac{2\eta+2\delta_1}{1+\eta}$ and $\delta_2' = \frac{2\delta_2}{1+\eta}$, from which we can derive relations 
\begin{equation}\label{eq:delta_deltaprime_relations}
 \delta_1' \geq 0, \;\;\;  \sqrt{\delta_1'^2+\delta_2'^2} \leq \frac{4\eta}{1+\eta}, \;\;\; |\delta'_2|\leq \frac{2\eta}{1+\eta}   
\end{equation}
These identities are justified in \autoref{app:KP_KR}, and follow from the close relationship between KP and KR. 
Namely, both KP and KR are enacted as an application of the quantum singular value transformation (QSVT) \cite{gilyen2018QSingValTransf, martyn2021GrandUnificationQAlgs}. Briefly, the QSVT procedure involves preparing the state $\ket{0}^{\otimes(a+2)} \otimes \ket{\phi}$, applying a sequence of gates including $U_G$ and $U_G^\dagger$, and then postselecting on the first register being $\ket{0}^{\otimes(a+2)}$---success or failure of KP and KR is heralded by the outcome of these measurements. The circuit is constructed to preserve the right singular vectors of $G$ while applying a certain polynomial transformation to the singular values, and the total cost of the procedure scales with the degree of the polynomial. To perform KP, we choose a polynomial $p$ for which $p(0)=1$ and $|p(x)| \leq \eta$ for all $x \geq \kappa^{-1}$. To perform KR, we choose a related polynomial where $p(0) = 1$ and $-1\leq p(x) \leq -1+4\eta/(1+\eta)$ for all $x \geq \kappa^{-1}$.  
In both cases, the degree of this polynomial is $2\ell$, where 
\begin{equation}
    \ell = \lceil \kappa \ln(2/\eta)/2\rceil
\end{equation}
The cost of implementing the KP or KR unitary is $\ell$ calls to the block-encoding $U_G$, $\ell$ calls to its inverse $U_G^{\dagger}$,  $4\ell$ multi-controlled Toffoli gates, and $2\ell$ single-qubit rotation gates. 
Details of this implementation can be found in \autoref{app:KP_KR}.

KP is equivalent to EF when the matrix $A$ is Hermitian. EF was developed in Ref.~\cite{lin2019OptimalQEigenstateFiltering}, where it was applied to the QLSP to give the first QLSS with error dependence strictly linear in $\log(1/\varepsilon)$. The idea was to use existing QLSSs to first produce a state $\ket{\vec{x_{\rm ans}}}$ with $\gamma = \Omega(1)$ overlap with $\ket{\vec{x}}$, and then apply EF to project to a state $\varepsilon$-close to $\ket{\vec{x}}$, succeeding with probability roughly $|\gamma|^2$. Later, Ref.~\cite{costa2021OptimalLinearSystem} showed that a QLSS based on the discrete adiabatic theorem can produce a state $\ket{\vec{x_{\rm ans}}}$ with $O(\kappa/\sqrt{1-\gamma^2})$ complexity, which, combined with EF and taking $\gamma = \Omega(1)$, gives an optimal QLSS with overall $O(\kappa\log(1/\varepsilon))$ complexity. 


\section{Optimal QLSS given constant-factor estimate for norm}\label{sec:QLSS_given_norm_main}

The main conceptual idea presented in this paper is to form an augmented linear system by introducing one new variable to the system, and adding one new (uncoupled) linear constraint involving that variable. This adds a known singular vector to the linear system, with a known, tunable singular value. If the known singular value is chosen appropriately, then the solution to the augmented linear system simultaneously has substantial overlap with the solution $\ket{\vec{x}}$ of the original linear system, and with the known singular vector. This produces a navigable path from the known singular vector to the solution $\ket{\vec{x}}$ that bypasses the need for sophisticated ansatz-preparation methods---a shortcut to an optimal QLSS.  

Recall that $\nrm{\vec{b}} = 1$ and $\nrm{A} \leq 1$ by convention, and that this implies that $1 \leq \nrm{\vec{x}} \leq \kappa$. Let $t \in [1,\kappa]$ be an estimate of $\nrm{\vec{x}}$ and define 
\begin{equation}
    \theta_t = \arctan\left(\frac{\nrm{\vec{x}}}{t}\right)\,.
\end{equation}
Let $A_t$ be an $(m+1)\times(n+1)$ matrix, where the upper left $m \times n$ block is $A$, the lower right entry is $1/t$, and the other entries are zero:
\begin{equation}\label{eq:A_t}
    A_t = A + \frac{1}{t}\vec{e}_m \vec{e_n}^\dagger = 
    \begin{bmatrix}
        A & 0 \\
        0 & t^{-1}
    \end{bmatrix}\,.
\end{equation}
All nonzero singular values of $A_t$ lie in the interval $[\kappa^{-1},1]$. 
Furthermore, define
\begin{equation}
    \vec{b'} = \frac{1}{\sqrt{2}}\left(\vec{b} + \vec{e_m}\right) = 
    \begin{bmatrix}
        \vec{b}/\sqrt{2} \\
        1/\sqrt{2}
    \end{bmatrix}
\end{equation}
It is then easy to verify that the solution to the augmented system $A_t\vec{x_t} = \vec{b'}$ is
\begin{equation}
    \vec{x_t} = \frac{1}{\sqrt{2}}\left(\vec{x} + t \vec{e_n}\right) = 
    \begin{bmatrix}
        \vec{x} /\sqrt{2}\\
        t / \sqrt{2}
    \end{bmatrix} \,.
\end{equation}
The vectors $\vec{x}$, $\vec{x_t}$, and $\vec{e_n}$ are depicted in \autoref{fig:two_step}. All three lie in the plane spanned by $\vec{x}$ and $\vec{e_n}$ with $\theta_t$ the angle between $\vec{e_n}$ and $\vec{x_t}$. 

With one ancilla qubit and one controlled query to $U_{\vec{b}}$, we can easily construct a unitary $U_{\vec{b'}}$ that prepares $\ket{\vec{b'}}$. Similarly, with one controlled query to $U_A$, we can construct a $(1,a+1)$-block-encoding $U_{A_t}$ for $A_t$. This block-encoding can be turned into a $(1, a+2)$-block-encoding $U_{G_t}$ of the matrix  (cf.~Eq.~\eqref{eq:G})
\begin{equation}\label{eq:G_t}
    G_t = Q_{\vec{b'}}A_t
\end{equation}
for which $\vec{x_t}$ lies in the kernel, and the nonzero singular values are contained in $[\kappa^{-1},1]$; these block-encoding constructions are provided in \autoref{app:block_encodings}.

Our main algorithm proposes to begin in the state $\ket{\vec{e_n}}$ and end in the orthogonal state $\ket{\vec{x}}$. In this journey through the $n$-dimensional Hilbert space, the vector $\ket{\vec{x_t}}$ is the essential landmark that charts the correct path. The QSVT-based technique of KR provides the vehicle for traversing this path, using the matrix $G_t$ as its compass, as $G_t$ encodes $\ket{\vec{x_t}}$ into its kernel. As we will see, KR is most effective when the state $\ket{\vec{x_t}}$ lies equally far from $\ket{\vec{e_n}}$ and $\ket{\vec{x}}$, corresponding to $\theta_t = \pi/4$. If $\theta_t$ is larger or smaller, the algorithm requires more repetitions due to reduced success probability. 

Formally, the algorithm takes as input the value of $\kappa$, as well as a choice for $t \in [1,\kappa]$ and for $\eta \in (0,1]$. The procedure has three steps, described in \autoref{algo:main_algo}. 

\SetKwInput{Input}{Input}
\SetKwInput{Output}{Output}
\SetKwInput{QCost}{Query Cost}
\SetKwInput{Set}{Set}
\SetKwFunction{ApprSolve}{ApprSolve}
\SetStartEndCondition{ }{}{}%
\SetKwProg{Fn}{def}{\string:}{}
\SetKwFunction{Range}{range}
\SetKw{KwTo}{in}\SetKwFor{For}{for}{\string:}{}%
\SetKwIF{If}{ElseIf}{Else}{if}{:}{elif}{else:}{}%
\SetKwFor{While}{while}{:}{fintq}%
\newcommand{\forcond}{$i$ \KwTo\Range{$n$}}
\AlgoDontDisplayBlockMarkers\SetAlgoNoEnd\SetAlgoNoLine%
\setlength{\algomargin}{1em}
\begin{algorithm}
\caption{QLSS given norm estimate }\label{algo:main_algo}
\DontPrintSemicolon
\SetAlgoLined
\LinesNumbered
\Input{$(A, \vec{b}, \kappa, \eta, t)$}
\Output{$\ket{\vec{\tilde{x}}}$ with probability $p_{\rm succ}$, or ``fail'' with probability $1-p_{\rm succ}$}
\BlankLine
Prepare $\ket{\vec{e_n}}$ \;
Apply KR to approximately reflect about the kernel of $G_t$ (defined in Eq.~\eqref{eq:G_t}), with parameters $(\kappa,\eta)$. If KR fails, output ``fail.''\;
Project onto $\spn\{\ket{\vec{e_j}}\}_{j=0}^{n-1}$ to produce $\ket{\vec{\tilde{x}}}$, by measuring the operator $I-\ketbra{\vec{e_n}}$. If projection fails, output ``fail.''\;
\end{algorithm}
We now analyze each step of this algorithm. The state prepared in step 1 can be decomposed as
\begin{equation}
    \ket{\vec{e_n}} = \cos(\theta_t)\ket{\vec{x_t}} + \sin(\theta_t)\ket{\vec{y_t}}
\end{equation}
where 
\begin{align}
    \ket{\vec{x_t}} &= \phantom{-}\sin(\theta_t)\ket{\vec{x}} + \cos(\theta_t)\ket{\vec{e_n}}\,, \\
    \ket{\vec{y_t}} &= -\cos(\theta_t)\ket{\vec{x}} + \sin(\theta_t)\ket{\vec{e_n}}
\end{align}
are orthogonal states. Since $\ket{\vec{x_t}}$ is in the kernel of $G_t$ and $\ket{\vec{y_t}}$ is orthogonal to $\ket{\vec{x_t}}$ and to the kernel of $A_t$ (to see this, observe that both $\ket{\vec{x}}$ and $\ket{\vec{e_n}}$ are orthogonal to the kernel of $A_t$), we can assert that $\ket{\vec{y_t}}$ is orthogonal to the kernel of $G_t$. Thus, we may invoke Eq.~\eqref{eq:impact_KR}, and we find that step 2 sends this state to
\begin{equation}\label{eq:post-step2}
    \cos(\theta_t)\ket{\vec{x_t}} -(1-\delta'_1)\sin(\theta_t)\ket{\vec{y_t}} +  \delta'_2 \sin(\theta_t)\ket{\vec{z}}
\end{equation}
where $\delta_1'$ and $\delta_2'$ obey Eq.~\eqref{eq:delta_deltaprime_relations}. The state $\ket{\vec{z}}$ is a unit vector orthogonal to $\ket{\vec{x_t}}$ and $\ket{\vec{y_t}}$, and therefore orthogonal to $\ket{\vec{x}}$ and $\ket{\vec{e_n}}$. 
Step 3 filters out $\ket{\vec{e_n}}$, bringing the state to
\begin{equation}\label{eq:post-step3}
    \ket{\vec{\tilde{x}}} \propto \cos(\theta_t)\sin(\theta_t)(2-\delta'_1)\ket{\vec{x}} + \delta'_2 \sin(\theta_t)\ket{\vec{z}}\,.
\end{equation}

The probability that steps 2 and 3 both succeed, denoted by $p_{\rm succ}$, is given by the squared norm of the subnormalized state on the right-hand side of Eq.~\eqref{eq:post-step3}. Using Eq.~\eqref{eq:delta_deltaprime_relations}, it is seen to satisfy the bounds
\begin{equation}\label{eq:p_succ_bounds_main}
\sin^2(2\theta_t)\frac{\left(1-\eta\right)^2}{\left(1+\eta\right)^2} \leq p_{\rm succ} \leq \sin^2(2\theta_t) + \frac{4\eta^2}{(1+\eta)^2}
\end{equation}
where the lower (upper) bound is generated by replacing $\delta'_1$ with its maximum (minimum) value and $\delta'_2$ with its minimum (maximum) value. The success probability is plotted in \autoref{fig:succ_prob} for $\eta \rightarrow 0$. Examining Eq.~\eqref{eq:post-step3}, the normalized state $\ket{\vec{\tilde{x}}}$ is seen to satisfy
\begin{align}
    \frac{1}{2}\nrm{\ketbra{\vec{x}}-\ketbra{\tilde{\vec{x}}}}_1 &= \sqrt{1-|\braket{\vec{x}}{\vec{\tilde{x}}}|^2} \\
    &= |\delta_2'|\sin(\theta_t)p_{\rm succ}^{-1/2}\\
    &\leq \frac{\eta+ O(\eta^2)}{\cos(\theta_t)}
\end{align}
Actually, a more careful analysis, performed in \autoref{thm:QLSS_known_norm} of \autoref{app:QLSS_with_norm_estimate}, shows that the unspecified $O(\eta^2)$ term in the final line above is not necessary. 

The cost of step 2 is $\ell = \lceil \kappa\ln(2/\eta)/2\rceil$ controlled queries to each of $U_A$ and $U_A^\dagger$, and $2\ell$ controlled queries to each of $U_{\vec{b}}$ and $U_{\vec{b}}^\dagger$, as well as $O(\ell)$ other gates. The cost of step 3 is a single multi-controlled Toffoli gate, which can be used to set an ancilla to 1 only if the state is $\ket{\vec{e_n}}$. See \autoref{fig:main_algo_circuit} in 
\autoref{app:QLSS_with_norm_estimate} for more information on the implementation of the three steps. 

If the norm estimate $t$ is a constant-factor approximation of $\nrm{\vec{x}}$---that is, $t \in [\beta^{-1}\nrm{\vec{x}}, \beta \nrm{\vec{x}}]$ for some $\kappa$-independent constant $\beta = O(1)$,  then $\sin^2(2\theta_t) =4\nrm{\vec{x}}^2t^2/(\nrm{\vec{x}}^2+t^2)^2 = \Omega(1)$ and $\cos(\theta_t) = t/\sqrt{\nrm{\vec{x}}^2+t^2} = \Omega(1)$. Setting $\eta = \Theta(\varepsilon)$, we conclude that the algorithm need only be repeated an expected $O(1)$ number of times to successfully produce an output, and once it does, the output state $\ket{\vec{\tilde{x}}}$ solves the QLSP to error $\varepsilon$. The total expected query complexity is
\begin{equation}
    Q = O(\kappa\log(1/\varepsilon))\,.
\end{equation}
Furthermore, if $t$ is exactly equal to the norm $\nrm{\vec{x}}$, then $\sin(2\theta_t) = 1$ and $\cos(\theta_t) = 1/\sqrt{2}$, so we may choose $\eta = \varepsilon/\sqrt{2}$ and find that the expected total query complexity to $U_A$ and $U_A^\dagger$ is given by 
\begin{equation}
    Q = (1+O(\varepsilon))\kappa \ln(2\sqrt{2}/\varepsilon)\,.
\end{equation}
See \autoref{cor:known_norm_expected_complexity} of \autoref{app:QLSS_with_norm_estimate} for a more precise statement of the complexity in terms of the approximation ratio $\beta$. 

\begin{figure}[h!]
    \centering
    \includegraphics[width=\columnwidth]{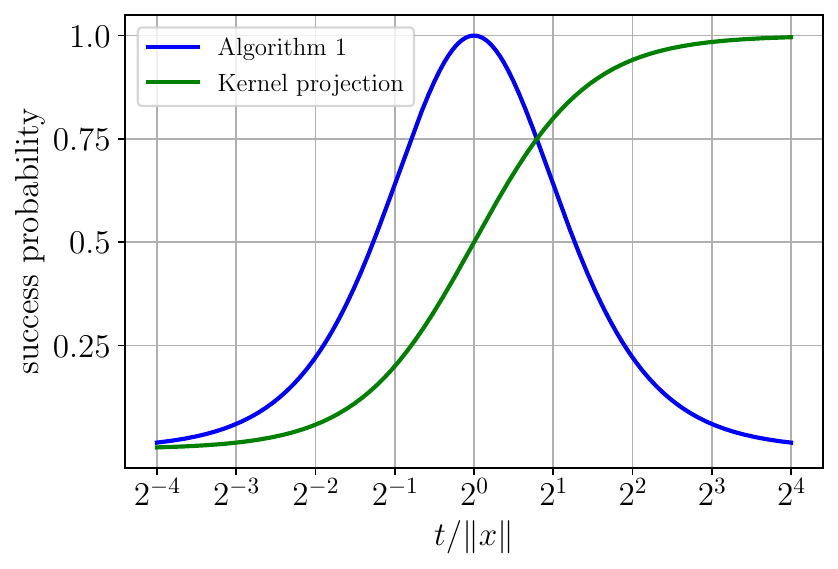}
    \caption{Success probability of \autoref{algo:main_algo} and of the kernel projection protocol of Eq.~\eqref{eq:post-step2-KP}, as a function of the approximation ratio $t/\nrm{\vec{x}}$, in the limit that the precision parameter $\eta \rightarrow 0$.}
    \label{fig:succ_prob}
\end{figure}

\section{Estimating the norm}\label{sec:estimating_the_norm}

To achieve $O(\kappa \log(1/\varepsilon))$  query complexity, \autoref{algo:main_algo} requires that the input parameter $t$ is chosen to approximate $\nrm{\vec{x}}$ up to a constant multiplicative factor. However, $\nrm{\vec{x}}$ is generally not known, other than that it lies in the interval $[1,\kappa]$. Note that the optimal and near-optimal QLSSs based on adiabatic path-following do not offer a straightforward way to also estimate $\nrm{\vec{x}}$---the ability to produce $\ket{\vec{x}}$ in optimal $O(\kappa)$ complexity is not alone sufficient for optimal norm estimation. For example, one could try estimating $\nrm{\vec{x}}$ by applying the block-encoding $U_A$ to the vector $\ket{0}^{\otimes a}\ket{\vec{x}}$, producing the state $\nrm{\vec{x}}^{-1}\ket{0}^{\otimes a}\ket{\vec{b}} +\ket{{\perp}}$, where $\ket{{\perp}}$ is a state for which $(\bra{0}^{\otimes a} \otimes I_{2^s})\ket{{\perp}} = 0$. The norm $\nrm{\vec{x}}$ could then be read out via overlap estimation \cite{knill2007ObservableMeasurement} with the state $\ket{0}^{\otimes a} \ket{\vec{b}}$. However, gaining a constant factor approximation to the overlap in this fashion requires $O(\nrm{\vec{x}})$ queries to the procedure that prepares $\ket{\vec{x}}$, for total complexity $O(\kappa\nrm{\vec{x}})$, which can be as large as $O(\kappa^2)$. 

The QLSS based on the variable-time amplitude estimation technique provides a method to generate a $(1+\varepsilon)$-factor approximation for $\nrm{\vec{x}}$  in near-optimal $O(\kappa \varepsilon^{-1} \log^2(\kappa\varepsilon^{-1})\log^3(\kappa) \log\log(\kappa) )$ complexity \cite[Corollary 32]{chakraborty2018BlockMatrixPowers}. However, this still leaves considerable room for improvement, both in performance and in conceptual transparency. In contrast, our approach offers several immediate near-optimal options for estimating the norm. 

\subsection{Exhaustive search in log space}\label{sec:estimate_norm_exhaustive_search}

The simplest method for norm estimation is an exhaustive search in log space: we can recognize when $t$ is a good approximation of $\nrm{\vec{x}}$ by the fact that \autoref{algo:main_algo} has high success probability---thus, we can simply run \autoref{algo:main_algo} on a geometrically increasing sequence of choices of $t$ until we have found one that leads to success. 

We now provide a concrete implementation and compact proof of correctness for this approach. Suppose we seek a multiplicative $2$-approximation to $\nrm{\vec{x}}$. Equivalently, we may find an additive $\ln(2)$-approximation to $\chi = \ln(\nrm{\vec{x}})$. We consider the set 
\begin{equation}\label{eq:T_candidates}
    \mathcal{T} = \{0, \ln(2), 2\ln(2), \ldots, \lceil \log_2(\kappa) \rceil \ln(2)\}
\end{equation}
Let $\tau^*$ denote the element of $\mathcal{T}$ that is nearest to $\chi$, which is guaranteed to be an additive $\ln(2)/2$-approximation for $\chi$. 
Since $\tau^*$ is an additive $\ln(2)/2$-approximation to $\chi$, we have $\sin^2(2\theta_{e^{\tau^*}}) = \frac{4e^{2\tau^*+2\chi}}{(e^{2\tau^*} + e^{2\chi})^2} \geq 8/9$ (see \autoref{fig:succ_prob}). Now, fix the precision parameter for \autoref{algo:main_algo} to $\eta = 0.025$, ensuring that $(1-\eta)^2/(1+\eta)^2 \geq 9/10$ and $4\eta^2/(1+\eta)^2 \leq 1/100$. 
By Eq.~\eqref{eq:p_succ_bounds_main}, the success probability of \autoref{algo:main_algo} when $t = e^{\tau^*}$ satisfies $p_{\rm succ} \geq (8/9)(9/10) = 0.8$. Meanwhile, for any value of $\tau$ that is more than $\ln(2)$-far from $\chi$, we have $\sin^2(2\theta_{e^{\tau}}) < \frac{16}{25}$, and by Eq.~\eqref{eq:p_succ_bounds_main},  $p_{\rm succ} < \frac{16}{25}+\frac{1}{100} = 0.65$. Having established these bounds, we run \autoref{algo:main_algo} 
\begin{equation}
    k = \lceil 100 \ln(20|\mathcal{T}|) \rceil = O(\log\log(\kappa))
\end{equation}
times for each of the $|\mathcal{T}|$ candidate values of $\tau$, and by observing the fraction of times it succeeds, we compute an estimate $\tilde{p}_{\tau}$ of the success probability for each candidate. By Hoeffding's inequality, for all $\tau \in \mathcal{T}$, it holds that $\Pr[\tilde{p}_\tau - p_{\rm succ} \geq 0.075] \leq e^{-2(0.075)^2k} \leq 1/(20 |\mathcal{T}|)$, and similarly $\Pr[p_{\rm succ}-\tilde{p}_\tau \geq 0.075]$ satisfies the same bound. By the union bound, there is at most $0.05$ probability that either $p_{\tau^*} \leq 0.725$ or that $p_\tau \geq 0.725$ for some candidate $\tau$ that is not an additive $\ln(2)$-approximation to $\chi$. Thus, with probability 0.95, at least one of our estimates will satisfy $\tilde{p}_\tau > 0.725$ and if we output $t=e^\tau$ for the first $\tau$ we find for which this is the case, that value of $t$ will be a multiplicative $2$-approximation to $\nrm{\vec{x}}$. This proves that the output of the algorithm is a multiplicative $2$-approximation to $\nrm{\vec{x}}$ with high probability. 

The total query complexity to controlled-$U_A$ and controlled-$U_A^\dagger$ is 
\begin{equation}\label{eq:Q_exhaustive_nonoptimized}
    Q = 2k|\mathcal{T}|\lceil \kappa\ln(2/\eta)/2\rceil = O(\kappa \log(\kappa) \log\log(\kappa))\,.
\end{equation}
In \autoref{lem:estimate_norm_random_t} of \autoref{app:estimating_norm}, we provide a more formal analysis of a slightly different implementation that achieves the same complexity, with better constant prefactors on the complexity. 
This query complexity can be improved to $O( \kappa \sqrt{\log(\kappa)}\log\log(\kappa))$ by straightforwardly replacing the exhaustive search above with a Grover search \cite{grover1996QSearch,yoder2014FixedPointSearch}, which we analyze in \autoref{lem:estimate_norm_random_t_FPAA} of \autoref{app:estimating_norm}. 

\subsection{Binary search in log space}\label{sec:estimate_norm_binary_search}

Furthermore, with a bit more thought, the query complexity can be made even closer to linear-in-$\kappa$ using a binary search for $\tau = \ln(t)$. Here, rather than running \autoref{algo:main_algo}, we use KP to detect if a candidate value of $\tau$ is too large or too small. Specifically, suppose we prepare the state $\ket{\vec{e_n}}$, as in step 1 of \autoref{algo:main_algo}, and then we run step 2 using KP in place of KR. Invoking Eq.~\eqref{eq:impact_KP}, we arrive at the state (cf.~Eq.~\eqref{eq:post-step2})
\begin{equation}\label{eq:post-step2-KP}
    \cos(\theta_t)\ket{\vec{x_t}} + \sin(\theta_t)\delta_1 \ket{\vec{y_t}} + \sin(\theta_t)\delta_2 \ket{\vec{z}}
\end{equation}
where $\sqrt{\delta_1^2+\delta_2^2} \leq \eta$. The success probability $q_{\rm succ}$ of KP is given by the squared norm of the above state, which satisfies
\begin{equation}\label{eq:succ_prob_KP}
    \cos^2(\theta_t) \leq q_{\rm succ} \leq \cos^2(\theta_t)  + \eta^2\sin^2(\theta_t)\,.
\end{equation}
The function $\cos^2(\theta_t)$ is plotted in \autoref{fig:succ_prob}; it is monotonically increasing on the relevant interval and equal to 1/2 exactly at $t=\nrm{\vec{x}}$. Thus, estimating the success probability $q_{\rm succ}$ of KP for a certain choice of $\tau = \ln(t)$ offers a mechanism for determining whether $\tau \geq \chi$ or $\tau < \chi$, enabling a (noisy) binary search.

We now provide a concrete implementation for this noisy binary search. Fix $\eta = \sqrt{1/8}$. We maintain a set $\mathcal{S}$ of ``active'' candidates, and initially set $\mathcal{S} = \mathcal{T}$. We choose $\tau$ to be the median of $\mathcal{S}$, and we round the median to the closest element of $\mathcal{S}$, breaking ties arbitrarily. We run KP 
\begin{equation}
    k = \lceil 72 \ln(40\lceil\log_{3/2}(|\mathcal{T}|)\rceil )\rceil  = O(\log\log\log(\kappa))
\end{equation}
times with $t=e^{\tau}$, and we compute an estimate $\tilde{q}$ for $q_{\rm succ}$ based on the fraction of these $k$ trials that succeed. If $\tilde{q} > 1/2$, we eliminate all elements of $\mathcal{S}$ less than $\tau$; otherwise we eliminate all elements greater than $\tau$. 
As long as $|\mathcal{S}|>2$, we are guaranteed to eliminate at least $1/3$ fraction of the elements. We repeat the process of choosing $\tau$ to be the median of $\mathcal{S}$, estimating $\tilde{q}$, and eliminating part of $\mathcal{S}$, a total of $\lceil \log_{3/2}|\mathcal{T}|\rceil$ times, which reduces the size of $\mathcal{S}$ to at most 2. If $\mathcal{S} = \{\tau\}$, we output $t = e^{\tau}$ as the estimate for $\nrm{\vec{x}}$. If $\mathcal{S} = \{\tau,\tau+\ln(2)\}$, we output $t = e^{\tau+\ln(2)/2}$. 

To show correctness, we argue that with high probability, at every step of the algorithm there is at least one element of the active set $\mathcal{S}$ that is an additive $\ln(2)/2$-approximation for $\chi$. Hence, in either case ($\mathcal{S} = \{\tau\}$ or $\mathcal{S} = \{\tau, \tau+\ln(2)\}$) the output is a multiplicative $2$-approximation for $\nrm{\vec{x}}$. 
Specifically, by Hoeffding's inequality, $\Pr[|\tilde{q} - q_{\rm succ}| \geq 1/12] \leq 2e^{-2k/144}\leq 1/(20\lceil\log_{3/2}(|\mathcal{T}|)\rceil)$, and by the union bound, there is at most 0.05 probability that any of the estimates across all $\lceil\log_{3/2}(|\mathcal{T}|)\rceil$ steps have error more than $1/12$. If $\tau$ is not an additive $\ln(2)/2$ approximation of $\nrm{\vec{x}}$, then by Eq.~\eqref{eq:succ_prob_KP}, $q_{\rm succ}$ lies outside the interval $[10/24, 2/3]$. Thus, in the 0.95 fraction of cases where $|\tilde{q} - q_{\rm succ}| < 1/12$ holds in all steps, the search always correctly decides whether $\tau > \chi$ or $\tau \leq \chi$, unless $\tau$ is an additive $\ln(2)/2$-approximation to $\chi$. However, since we always keep $\tau$ in the set (eliminating only elements that are larger than or smaller than $\tau$), we conclude that, whether or not $\tau$ is an additive $\ln(2)/2$-approximation, the active set $\mathcal{S}$ is guaranteed to retain at least one element that is an additive $\ln(2)/2$-approximation of $\chi$. This proves that the output of the algorithm is a multiplicative $2$-approximation to $\nrm{\vec{x}}$ with high probability. 

The overall query complexity is
\begin{align}
    Q &= 2k\lceil\log_{3/2}(|\mathcal{T}|)\rceil \lceil \kappa \ln(2/\eta)/2\rceil\\
    &=O(\kappa \log\log(\kappa)\log\log\log(\kappa))
\end{align}

More sophisticated algorithms for noisy binary search may be able to eliminate the $\log\log\log(\kappa)$ factor. This large body of work (e.g., \cite{burnashev1974interval,ben-or2008BayesianLearner,wang2022noisySortingcapacity,gretta2023sharpNoisyBinarySearch}) has established that the optimal approach in settings like this one is to maintain a belief distribution over the possible estimates $\tau$, initially uniform. Each step of the algorithm chooses $\tau$ to be the median of the distribution and queries whether $\tau > \chi$. This query returns a noisy answer, and the algorithm responds by updating its belief distribution according to Bayes' rule. This approach may allow us to exploit the additional information we have in our setting, namely, that the probability that KP succeeds follows (up to $O(\eta^2)$ precision) a known functional form, depicted in \autoref{fig:succ_prob}.

\subsection{Achieving linear-in-$\kappa$ complexity}\label{sec:estimate_by_adiabatic}
The methods sketched above do not depend on sophisticated techniques or analytical methods, other than QSVT with filtering polynomials, yet they achieve \textit{nearly} linear-in-$\kappa$ asymptotic scaling, superior to previous known methods for estimating the norm. 

We can also combine the observations above with some of the ideas behind adiabatic path-following to achieve \textit{strictly} linear-in-$\kappa$ complexity. Specifically, we propose to estimate the norm of a sequence of linear systems of increasing condition number, where the norm of the solution to the final linear system in the sequence equals $\nrm{\vec{x}}$. The linear systems are related in such a way that the norm cannot change by more than a constant factor from one step to the next. Thus, if we have an estimate of the norm for one linear system, we need only search over a constant-sized interval for the norm of the next linear system---this eliminates the $\mathrm{polylog}(\kappa)$ factors from the norm estimation methods proposed in \autoref{sec:estimate_norm_exhaustive_search} and \autoref{sec:estimate_norm_binary_search} (specifically, the same analysis goes through with $|\mathcal{T}| = O(1)$ rather than $|\mathcal{T}| = O(\log(\kappa))$).

The sequence of linear systems is parameterized by $\sigma \in [\kappa^{-1},1]$ and inspired by the adiabatic path followed by the QLSS in Ref.~\cite{subasi2019QAlgSysLinEqsAdiabatic} and successors. 
To define it, let
\begin{equation}
    f(\sigma) = \sqrt{\frac{\sigma^2\kappa^2-1}{\kappa^2-1}}\,,
\end{equation}
chosen so that $f(\sigma)^2+ (1-f(\sigma)^2)\kappa^{-2} = \sigma^2$. Observe that $f(\sigma)$ is monotonically increasing for $\sigma \in [\kappa^{-1},1]$ with $f(\kappa^{-1}) = 0$ and $f(1) =1$. 
Then, we construct a linear system 
\begin{equation}
    \bar{A}_\sigma \vec{\bar{x}_\sigma} = \vec{b}\,.
\end{equation}
Here for clarity we interpret $\vec{b}$ as a length-$2^{s}$ vector by padding it with $2^s-m$ zeros. Let $\bar{n}=2^s-n$ and $\bar{m} = 2^s-m$ and define the $2^{s} \times 2^{s+1}$ matrix $\bar{A}_\sigma$ as follows
\NiceMatrixOptions{code-for-first-row = \scriptstyle,code-for-first-col = \scriptstyle }
\begin{equation}\label{eq:barA_sigma}
\bar{A}_\sigma =\;\; \begin{bNiceArray}{cc|cc}[first-row, first-col]
    & n                     & \bar{n} & m & \bar{m} \\
m   & \sqrt{1-f(\sigma)^2}A & 0     &  f(\sigma) I_m &  0  
\\
\bar{m}   & 0 & 0     &  0 &  0  
\end{bNiceArray}
\end{equation}
where the sizes of each block are included for convenience. In essence, $\bar{A}_\sigma$ is really an $m \times (n+m)$ matrix, but the explicit zero padding in Eq.~\eqref{eq:barA_sigma} ensures that the two parts of the matrix separated by the vertical line are each square $2^s \times 2^s$ blocks. This convention makes it more natural to construct a block-encoding for $\bar{A}_\sigma$ on an $(s+1)$-qubit system. In particular, if we further zero-pad $\bar{A}_\sigma$ to be a square $2^{s+1} \times 2^{s+1}$ matrix, we may identify it with the $(s+1)$-qubit operator equivalent to
\begin{equation}
    \sqrt{1-f(\sigma)^2}\ketbra{0} \otimes A + f(\sigma) \ketAbraB{0}{1} \otimes I_m
\end{equation}
where $I_m$ is the operator for which $I_m\ket{\vec{e_j}} = \ket{\vec{e_j}}$ if $j < m$ and $I_m\ket{\vec{e_j}} = 0$ otherwise (although the following results would still hold if $I_m$ were replaced with $I_{2^s}$). 
Under this identification, we give a block-encoding for $\bar{A}_\sigma$ using one controlled-query to $U_A$ in \autoref{app:block_encodings}. Unlike previous methods \cite{jennings2023QLSS}, we do not use the standard LCU technique to combine the two terms, which would have given a normalization factor $\sqrt{1-f(\sigma)^2} + f(\sigma) >1$ when $\sigma \in (\kappa^{-1},1)$. Note also that here we avoid relying upon the Hermitianized version $\left[\begin{smallmatrix}0 & A \\ A^\dagger & 0\end{smallmatrix}\right]$ of the matrix $A$, as in prior work, which saves a factor of two on the query complexity.

Intuitively, the purpose of $\bar{A}_\sigma$ is that by tuning the value of $\sigma$, we can introduce information about the matrix $A$ and its spectrum gradually: when $\sigma = 1$, $\bar{A}_\sigma$ has no dependence on $A$, and when $\sigma = \kappa^{-1}$, it has full dependence. Crucially, the condition number of $\bar{A}_\sigma$ grows as $\sigma$ decreases; it can be upper bounded by $\sigma^{-1}$ (as justified later). Thus, as $\sigma$ is reduced, we can begin to extract some information about $\nrm{\vec{x}}$ without paying the full $O(\kappa)$ cost associated with the matrix $A$. 

We can derive an explicit expression for the solution $\vec{\bar{x}_\sigma}$ to the equation $\bar{A}_\sigma \vec{\bar{x}_\sigma} = \vec{b}$ in terms of the singular value decomposition (SVD) of $A$. Specifically, let $A = \sum_j \varsigma_j \vec{u_j}\vec{v_j}^\dagger$ be a SVD of $A$, with $\varsigma_j \neq 0$ for all $j$, and interpreting $\vec{u_j}$ and $\vec{v_j}$ as length-$2^s$ normalized vectors. Let $\vec{b} = \sum_j w_j\vec{u_j}$ be the decomposition of $\vec{b}$ into left singular vectors of $A$, which is guaranteed to exist since we have assumed that $A\vec{x} = \vec{b}$ has a solution. Define the length-$2^{s+1}$ normalized vector
\begin{equation}
    \vec{\bar{v}_{\sigma,j}} = \frac{1}{\sqrt{(1-f(\sigma)^2)\varsigma_j^2 + f(\sigma)^2}}\begin{bmatrix}
        \sqrt{1-f(\sigma)^2}\varsigma_j\vec{v_j}\\
        f(\sigma) \vec{u_j} 
    \end{bmatrix}
\end{equation}
Since the sets $\{\vec{u_j}\}$ and $\{\vec{v_j}\}$ are each orthonormal, the set $\{\vec{\bar{v}_{\sigma,j}}\}$ is also orthonormal. In fact, it can be verified that $\{\vec{\bar{v}_{\sigma,j}}\}$ are the right singular vectors of $\bar{A}_\sigma$, and the corresponding left singular vector is $\vec{u_j}$ and correpsonding singular value is $\sqrt{f(\sigma)^2 + (1-f(\sigma)^2)\varsigma_j^2}$. 
Thus, by applying the inverted singular values to the decomposition of $\vec{b}$, we may assert that
\begin{equation}\label{eq:barx_sigma}
    \vec{\bar{x}_\sigma} = \sum_j \frac{w_j}{\sqrt{f(\sigma)^2+(1-f(\sigma)^2)\varsigma_j^2}} \vec{\bar{v}_{\sigma,j}}\,.
\end{equation}
Matrix-vector block multiplication of $\bar{A}_\sigma \vec{\bar{x}_\sigma}$ yields $\vec{b}$, verifying that $\vec{\bar{x}_\sigma}$ is a solution to the system.  The fact that $\vec{\bar{x}_\sigma}$ is the solution of minimal Euclidean norm follows from the fact that it is orthogonal to the kernel of $\bar{A}_\sigma$ (it is a linear combination of right singular vectors of $\bar{A}_\sigma$ associated with nonzero singular values). 

To apply the framework described in this paper to the matrix $\bar{A}_\sigma$, we will need to augment the linear system, as described in \autoref{sec:QLSS_given_norm_main}. In this instance, we augment to form the matrix $\bar{A}_{\sigma,t} = \bar{A}_{\sigma} + t^{-1}\vec{e_m}\vec{e_n}^\dagger$ (cf.~Eq.~\eqref{eq:A_t})
\begin{equation}
\bar{A}_{\sigma,t}= \hspace{-8pt} \begin{bNiceArray}{ccc|cc}[first-row, first-col]
    & n                     & 1 & \bar{n}-1 & m & \bar{m} \\
m   & \sqrt{1-f(\sigma)^2}A & 0 & 0     &  f(\sigma) I_m &  0  \\
1   & 0    &  t^{-1} & 0 & 0 & 0 \\
\bar{m}-1& 0 & 0 &     0 & 0& 0
\end{bNiceArray}
\end{equation}
for which we have the linear system $\bar{A}_{\sigma,t}\vec{\bar{x}_{\sigma,t}} = \vec{b'}$. This choice of augmentation avoids the need to expand the Hilbert space and use more than $s+1$ qubits. The relationship between $\vec{\bar{x}_{\sigma, t}}$ and $\vec{\bar{x}_\sigma}$ is the same as the relationship between $\vec{x}_t$ and $\vec{x}$, enabling the methods from prior sections to be directly applied.

Before proceeding, we must establish a few additional properties of the linear system. 
\begin{enumerate}
    \item All nonzero singular values of $\bar{A}_\sigma$ lie in the interval $[\sigma,1]$.
    \item $\nrm{\vec{\bar{x}}_1} = 1$ and $\nrm{\vec{\bar{x}_{1/\kappa}}} = \nrm{\vec{x}}$.
    \item For all $\kappa^{-1} \leq \sigma \leq  \sigma'\leq 1$,  it holds that $1 \leq \nrm{\vec{\bar{x}_\sigma}}/\nrm{\vec{\bar{x}_{\sigma'}}} \leq \sigma'/\sigma$.
\end{enumerate}
Property 1 follows from the fact that $\bar{A}_\sigma\bar{A}_\sigma^\dagger= f(\sigma)^2 I_m + (1-f(\sigma)^2)AA^\dagger$ has all nonzero eigenvalues in $[\sigma^2,1]$ by construction. 
Property 2 follows by inspection of Eq.~\eqref{eq:barx_sigma}, and the fact that $f(1)=1$ and $f(1/\kappa) = 0$. In particular, we see that $\nrm{\vec{\bar{x}_{1/\kappa}}}^2 = \sum_j |w_j|^2\varsigma_j^{-2} =\nrm{\vec{x}}^2$.
To verify property 3, first note 
that due to the monotonicity of $f$ and the fact that $\kappa^{-1} \leq \varsigma_j \leq 1$, we have
\begin{align}
    1 &\leq \frac{(f(\sigma')^2+(1-f(\sigma')^2)\varsigma_j^2)^{1/2}}{(f(\sigma)^2+(1-f(\sigma)^2)\varsigma_j^2)^{1/2}} \\
    &= \frac{(\sigma'^2 + (1-f(\sigma')^2)(\varsigma_j^2-\kappa^{-2}))^{1/2}}{(\sigma^2 + (1-f(\sigma)^2)(\varsigma_j^2-\kappa^{-2}))^{1/2}} \leq \frac{\sigma'}{\sigma}
\end{align} 
for any $\sigma\leq \sigma'$. Thus, working from Eq.~\eqref{eq:barx_sigma}, we have
\begin{align}
    \frac{\nrm{\vec{\bar{x}_\sigma}}}{\nrm{\vec{\bar{x}_{\sigma'}}}} &= \frac{\sqrt{\sum_j |w_j|^2 (f(\sigma)^2+(1-f(\sigma)^2)\varsigma_j^2)^{-1}}}{\sqrt{\sum_j |w_j|^2 (f(\sigma')^2+(1-f(\sigma')^2)\varsigma_j^2)^{-1}}} \nonumber \\
    &\leq \frac{\sqrt{\sum_j |w_j|^2 (f(\sigma)^2+(1-f(\sigma)^2)\varsigma_j^2)^{-1}}}{\sqrt{\sum_j |w_j|^2\frac{\sigma^2}{\sigma'^2} (f(\sigma)^2+(1-f(\sigma)^2)\varsigma_j^2)^{-1}}} \nonumber\\
    &= \frac{\sigma'}{\sigma}  
\end{align}

Now we are ready to describe the algorithm. We propose to learn $\nrm{\vec{x}}  = \nrm{\vec{\bar{x}_{1/\kappa}}}$ by sequentially estimating the norm of $\nrm{\vec{\bar{x}_{2^{-j}}}}$ for $j=0,1,\ldots,\log_2(\kappa)$ (for simplicity, here we round $\kappa$ up to the closest exact power of 2). For $j=0$, we use the estimate $t_0 = 1$, which is exact, by property 2 above. We then aim to generate estimates $t_1,t_2,\ldots, t_{\log_2(\kappa)}$ such that each $t_j$ is a multiplicative $2$-approximation of $\nrm{\vec{\bar{x}_{2^{-j}}}}$. 

The key insight is that, from property 3, if $t_{j-1}$ is a valid 2-approximation, i.e.~$\nrm{\vec{\bar{x}_{2^{-j+1}}}} \in [t_{j-1}/2,2t_{j-1}]$, then it must hold that $\nrm{\vec{\bar{x}_{2^{-j}}}} \in [t_{j-1}/2,4t_{j-1}]$. We can find a 2-approximation $t_j$ to 
$\nrm{\vec{\bar{x}_{2^{-j}}}}$ by exhaustively (or binarily) searching $O(1)$ candidates in this interval---that is, whereas the analysis in prior subsections required the size of the set $\mathcal{T}$ of candidates to grow as $O(\log(\kappa))$ (see Eq.~\eqref{eq:T_candidates}), here we have $|\mathcal{T}| = O(1)$ instead.  
Furthermore, the cost of running KR or KP for a particular candidate is reduced due to the fact that the condition number of $\bar{A}_{2^{-j}}$ is only $2^{j}$, as seen in property 1. 

Ultimately, from Eq.~\eqref{eq:Q_exhaustive_nonoptimized}, the expected query complexity of producing the estimate $t_j$ using the exhaustive search method is $O(2^j|\mathcal{T}| \log(|\mathcal{T}|)) = O(2^j)$, and the probability that $t_j$ is not a valid $2$-approximation---conditioned on $t_{j-1}$ being a valid 2-approximation---is at most $0.05$. To suppress this conditional failure probability from $0.05$ to $\delta$, we perform median amplification \cite[Lemma 1]{nagaj2009FastAmpQMA} at multiplicative overhead $O(\log(1/\delta))$ to the query complexity. Naively, we might choose $\delta=O(1/\log(\kappa))$ so that there is high probability that all $\log_2(\kappa)$ steps succeed; however, if we do so then the final step, for which the condition number is $\kappa$, will have complexity $O(\kappa \log(1/\delta)) = O(\kappa \log\log(\kappa))$ complexity.  

Instead, we employ a version of the ``log log trick'' \cite{kothari2023meanEstimationSourceCode}, choosing $\delta$ to be smaller when $j$ is smaller---we can afford to do more median amplification at the beginning of the sequence because the condition number is smaller and the norm estimation is cheaper. Specifically, we use median amplification to suppress the failure probability of step $j$ to $e^{-\Omega(1+\log_2(\kappa)-j)}$, at multiplicative overhead $O(1+\log_2(\kappa)-j)$, and we conclude that the overall expected query cost of step $j$ is $O( 2^{j}(1+\log_2(\kappa)-j))$. By the union bound, the probability that at least one of the steps fails to generate a $2$-approximation is upper bounded by a geometrically decaying series $\sum_{j=1}^{\log_2(\kappa)} e^{-\Omega(1+\log_2(\kappa)-j)}$, which is at most a small constant. Thus, with high probability all of the steps succeed, and the output is a multiplicative 2-approximation for $\nrm{\vec{x_{1/\kappa}}} = \nrm{\vec{x}}$. 

The overall total query complexity is upper bounded by the sum of the query complexity of the $\log_2(\kappa)$ steps, given by
\begin{align}
 Q&=\sum_{j=1}^{\log_2(\kappa)} O(2^j(1+\log_2(\kappa)-j)) \\
 &\leq  O(\kappa) \sum_{j'=0}^{\infty } 2^{-j'}(1+j')
 \leq O(\kappa)\label{eq:Q_adiabatic_norm_optimal}
\end{align}
where we have substituted $j' = \log_2(\kappa)-j$ and upper bounded the finite sum by its infinite extension. 

In summary, the method solves the norm estimation problem up to a constant factor with optimal $O(\kappa)$ query complexity. It does so by leveraging an idea that is inspired by adiabatic path-following; in fact, it follows essentially the same family of linear systems as prior adiabatic solvers. However, the analysis is fairly simple, and crucially it avoids the need for rigorous analysis of the adiabatic theorem.  

By first estimating the norm with this method and then running \autoref{algo:main_algo} to produce $\ket{\vec{\tilde{x}}}$, we obtain a method for solving the QLSP with overall query complexity $O(\kappa\log(1/\varepsilon))$. In \autoref{app:explicit_optimal_QLSS} of the appendix, we give a more detailed specification of an optimal QLSS with $O(\kappa\log(1/\varepsilon))$ complexity, which follows the same general approach sketched above, and we analyze its complexity, including constant prefactors, in \autoref{thm:optimal_QLSS}. 

\paragraph{Remark on relationship to Zeno method.} The QLSS in Ref.~\cite{lin2019OptimalQEigenstateFiltering} also gave nearly optimal-in-$\kappa$ complexity of $O(\kappa \log(\kappa) \log\log(\kappa))$ by leveraging the \textit{quantum Zeno effect} to traverse an adiabatic eigenpath. This bears resemblance to our method, and in particular, its analysis also does not require the adiabatic theorem. Whereas our method varies $\sigma$ and tracks changes in the scalar value of the solution norm, the Zeno method can be understood (in our notation) as tracking the state $\ket{\vec{\bar{x}_{\sigma}}}$ itself. The key tool is the ability to project onto $\ket{\vec{\bar{x}_{\sigma}}}$ up to accuracy $\eta$ with KP (i.e., eigenstate filtering), at cost $O(\sigma^{-1} \log(1/\eta))$. The original analysis in Ref.~\cite{lin2019OptimalQEigenstateFiltering} takes short steps so that each projection succeeds with probability close to 1. Amplitude amplification is not required, but this leads the final complexity to be off from optimal by at least a $\log(\kappa)$ factor. As remarked in Section 5 of Ref.~\cite{lin2019OptimalQEigenstateFiltering}, one could modify the Zeno algorithm to take a fewer number of larger jumps---each jump then succeeds with constant success probability, and one can use fixed-point amplitude amplification to boost this probability to 1 before moving on to the next jump. The (approximate) reflection operators needed within the amplification procedure could be supplied by KR.  In \autoref{app:Zeno}, we show that the idea to use fixed-point amplitude amplification works, and together with the log log trick leads to optimal $O(\kappa)$ complexity for the Zeno approach when following the same adiabatic path $f(\sigma)$ defined above. In other words, all of the essential technical tools and conceptual ingredients needed for strictly optimal-in-$\kappa$ complexity have been in place since at least 2019. 

While we have not done a rigorous analysis of the constant prefactors of the optimal Zeno method, we expect that they would be strictly worse than the optimal QLSS proposed in this paper. The main reason is that an optimized version of our norm estimation method can take \textit{very} large (constant-sized) jumps $\sigma' \mapsto \sigma$ along the adiabatic path, since (using the binary search method) the norm estimation subroutine has cost $O(\log\log(\sigma'/\sigma))$ calls to KP. In contrast, the Zeno method requires a number of calls to KR going as $\lvert\braket{\vec{\bar{x}_{\sigma'}}}{\vec{\bar{x}_{\sigma}}}\rvert^{-1}$ for fixed-point amplitude amplification, and the (potentially suboptimal) bound we use in \autoref{app:Zeno} suggests this quantity goes roughly as $\sigma' / \sigma$. That is, the complexity of a jump is doubly exponentially worse in the Zeno method compared to the norm estimation method, as a function of the length of the jump $\sigma' / \sigma$. Consequently, we expect that an optimized Zeno method would either take smaller steps and/or spend larger complexity per step compared to our method, leading to larger constant prefactors.  

\subsection{Improving the approximation ratio }\label{sec:improving_approx_ratio}
We have now seen several methods for obtaining a constant-factor approximation to the norm $\nrm{\vec{x}}$. However, in many instances we may wish to improve the approximation ratio of our estimate. For one, the success probability of \autoref{algo:main_algo} increases (approaching 1) as the accuracy of the norm estimate improves, motivating us to seek the best possible estimate. Additionally, some applications require not only access to the normalized state $\ket{\vec{x}}$, but also a precise estimate for $\nrm{\vec{x}}$ that is correct up to small relative error---see \autoref{sec:conclusion}. 

Given a multiplicative 2-approximation $t \in [\nrm{\vec{x}}/2, 2\nrm{\vec{x}}]$, one can improve the estimate to a multiplicative $(1+\varepsilon)$-approximation 
\begin{equation}
    t_{\rm out} \in [(1+\varepsilon)^{-1}\nrm{\vec{x}}, (1+\varepsilon)\nrm{\vec{x}}]
\end{equation}
using amplitude estimation \cite{brassard2002AmpAndEst} (the method is easily generalized to handle input approximations with constant-factors worse than 2), as we now explain. Amplitude estimation allows one to estimate the quantity $\nrm{\Pi \ket{\psi}}^2$ to additive precision $\nu$, using $O(1/\nu)$ calls to a procedure for preparing $\ket{\psi}$ and the ability to recognize when a state is in the image of a projector $\Pi$ \cite{brassard2002AmpAndEst}.
Let $\eta = \sqrt{\varepsilon/100}$ and, as in \autoref{sec:estimate_norm_binary_search}, consider the procedure that prepares $\ket{\vec{e_n}}$ and applies KP for the matrix $G_t$. The success probability $q$ of this procedure can be cast as a measurable quantity $\nrm{\Pi \ket{\psi}}^2$, and by Eq.~\eqref{eq:succ_prob_KP}, it lies within $\eta^2 = \varepsilon/100$ of the quantity 
\begin{equation}
    q_t = \cos^2(\theta_t) =  t^2/(t^2+\nrm{\vec{x}}^2) \in [0.2,0.8]\,.
\end{equation}
Using $O(1/\varepsilon)$ queries to this KP procedure, amplitude estimation can produce an estimate $\tilde{q}$ for $q$ that is correct up to additive  error $\varepsilon/100$, with probability 0.95. In this 0.95 fraction of cases, by the triangle inequality, we have $|\tilde{q}-q_t| \leq \varepsilon/50$. Furthermore, defining the relative error as $\Delta = \tilde{q}/q_t - 1$, we have $|\Delta| \leq \varepsilon/10$.  Then, to produce an estimate for $\nrm{\vec{x}}$, one can take the result $\tilde{q}$ of amplitude estimation and output the quantity
\begin{align}
    t_{\rm out} &= \frac{t\sqrt{1-\tilde{q}}}{\sqrt{\tilde{q}}} = \frac{t\sqrt{1-q_t(1+\Delta)}}{\sqrt{q_t(1+\Delta)}} \nonumber \\
    &= \frac{t\sqrt{\frac{\nrm{\vec{x}}^2}{t^2}-\Delta}}{\sqrt{1+\Delta}} 
    = \nrm{\vec{x}}\frac{\sqrt{1-\frac{t^2\Delta}{\nrm{\vec{x}}^2}}}{\sqrt{1+\Delta}}\,.
\end{align}
Since $t^2/\nrm{\vec{x}}^2 \leq 4$ and $|\Delta| \leq \varepsilon/10$, the output is guaranteed to be multiplicative $(1+\varepsilon)$-approximation for $\nrm{\vec{x}}$. 
Accounting for $O(1/\varepsilon)$ calls to KP, each costing $O(\kappa\log(1/\eta))$, the total complexity is $O(\kappa\log(1/\varepsilon)/\varepsilon)$. 

Since the query complexity of obtaining the $2$-approximation in the first place is $O(\kappa)$ using the method from \autoref{sec:estimate_by_adiabatic}, we have thus established that the total query complexity to obtain a multiplicative $(1+\varepsilon)$-approximation of $\nrm{\vec{x}}$ is 
\begin{equation}
    Q = O(\kappa\log(1/\varepsilon)/\varepsilon)
\end{equation}
In \autoref{thm:norm_query_lower_bound} of \autoref{app:norm_query_lower_bound}, building on the method of Ref.~\cite{Orsucci2021solvingclassesof}, we derive a lower bound on the query complexity of this task when $\varepsilon < 1/4$ of 
\begin{equation}
    Q = \Omega(\min(\kappa/\varepsilon,N))\,,
\end{equation}
where $N$ is the size of the matrix $A$. Thus, we have established optimal-in-$\kappa$ query complexity for norm estimation, and our $\varepsilon$ dependence is off by a single logarithmic factor. It is an interesting question for future work to tighten the $\varepsilon$ dependence of the complexity of norm estimation. 

\section{Constant prefactor analysis}\label{sec:constant_factors}

\begin{table*}[t]
    \centering
    \begin{adjustbox}{max width=\textwidth}

\begin{tabular}{|c|c|c|c|c|}
    \hline
    & \textbf{QLSS method} & \makecell{\textbf{Asymptotic complexity} \\ $ + \;O(\kappa \log(1/\varepsilon)) $} & \makecell{\textbf{Explicit upper bound} \\ $(\kappa \in [3,10^6])$} & \makecell{\textbf{Value at} \\ $(\kappa,\varepsilon) =(10^5,10^{-10})$} \\
    \hline
    \hline
    \multirow{3}{*}{\rotatebox[origin=c]{90}{other work}} & Quantum walk method \cite{costa2021OptimalLinearSystem} & $O(\kappa)$ & $234470\kappa + 4\kappa \ln(2/\varepsilon)$ & $234562\kappa$\\
    \cline{2-5}
    & Randomization method \cite{subasi2019QAlgSysLinEqsAdiabatic} & $O(\kappa\log(\kappa))$ & $162\kappa\ln(\kappa) + 188\kappa + 5.2\kappa\ln(1/\varepsilon)$ \cite[arXiv v1]{jennings2023QLSS} & $2173\kappa$ \\
    \cline{2-5}
    & \makecell{Randomization method \\ with Poissonization \cite{cunningham2024eigenpathTraversal, jennings2023QLSS}} & $O(\kappa)$ & $1671\kappa + 4.2\kappa + 2.0\kappa\ln(1/\varepsilon)$ \cite{jennings2023QLSS} & $1728\kappa$ \\
    \hline
    \hline
    \multirow{4}{*}{\rotatebox[origin=c]{90}{\makebox[2mm][c]{this work \;\;\;}}} & \makecell{Augmented linear system KR \\ with exhaustive norm search } & $O(\kappa \log(\kappa)\log\log(\kappa))$ & $6 \kappa \ln(\kappa) +6\kappa + 1.1\kappa\ln(1/\varepsilon)$ & $83\kappa$\\
    \cline{2-5}
    & \makecell{Augmented linear system KR \\ with Grover norm search} & $O(\kappa \sqrt{\log(\kappa)}\log\log(\kappa))$ & $10 \kappa \sqrt{\ln(\kappa)+1} +12\kappa + 1.1\kappa\ln(1/\varepsilon)$ & $66\kappa$\\
    \cline{2-5}
    & \makecell{Augmented linear system KR \\ with binary norm search } & $O(\kappa\log\log(\kappa)\log\log\log(\kappa))$ & \text{not analyzed} & not analyzed \\
    \cline{2-5}
    & \makecell{Augmented linear system KR \\ with adiabatic norm search } & $O(\kappa)$ & $56\kappa + 1.05\kappa\log(1/\varepsilon)$ & $80\kappa$ \\
    \hline
\end{tabular}
    \end{adjustbox}
    \caption{A comparison of our proposed QLSS methods with previous optimal and near-optimal methods for which explicit query complexity bounds have been derived. The explicit upper bounds omit subleading terms, and the rows for Grover and exhaustive search replace doubly and triply logarithmic factors with their maximum values on the domain $\kappa \in [3,10^6]$ for compactness---see \autoref{app:explicit_near_optimal_QLSS} for more detailed expressions. In all cases, the evaluation of the bound at $\kappa=10^5$ and $\varepsilon =10^{-10}$ is calculated using the exact expression. For the quantum walk method, a numerical evaluation of the upper bound expression on the domain $\kappa \in [1,50]$ suggested the bound is about 20 times smaller, but here we report the fully rigorous bound \cite{costa2021OptimalLinearSystem}. Additionally, the estimates for the quantum walk and randomization method include an additional factor of 2 compared to some of the numerical values reported elsewhere (e.g., in Ref.~\cite{costa2021OptimalLinearSystem}) because when $A$ is non-Hermitian they require two queries to $U_A$ to block-encode the Hermitianized $\left[\begin{smallmatrix} 0 & A \\ A^\dagger & 0 \end{smallmatrix}\right]$. We expect the practical performance of all of  these methods to be significantly better than their bounds. Finally, note that the result from the second row comes from arXiv v1 of Ref.~\cite{jennings2023QLSS}, which appeared prior to the first version of this paper. The third row comes from the published version of Ref.~\cite{jennings2023QLSS}, which appeared after. }\label{tab:comparison}
\end{table*}

The constant prefactors and explicit complexity expressions for the QLSS are important for determining whether applications based on the QLSS are viable. 
We derived the asymptotic complexities of several distinct norm estimation methods in \autoref{sec:estimating_the_norm}; the ability to estimate the norm up to a constant multiplicative factor implies a full-fledged QLSS at the cost of $O(\kappa\log(1/\varepsilon))$ additional complexity.  However, the implementations and proofs provided for these algorithms were optimized for simplicity and ease of presentation, rather than for their constant prefactors. In \autoref{app:explicit_near_optimal_QLSS} of the appendix, we provide detailed analyses of two QLSS methods, one that uses a variant of the exhaustive search method for learning the norm, and one that additionally applies amplitude amplification. We arrive at explicit complexity expressions in \autoref{thm:QLSS_random_t} and \autoref{thm:QLSS_random_t_FPAA}. In deriving these methods, we have put a bit more thought into reducing the constant factors, although there remains room for improvement. For example, in \autoref{algo:random_t_full_QLSS} (analyzed in \autoref{thm:QLSS_random_t}), rather than exhaustively searching over a discrete set of candidate estimates $\tau = \ln(t)$ for the norm $\nrm{\vec{x}}$, we opt to choose $\tau$ (nearly) uniformly at random from the continuum, repeating this process until we find a $t$ for which \autoref{algo:main_algo} succeeds. We do not need to spend queries repeating \autoref{algo:main_algo} many times to become confident in this value of $t$, and we need not re-run KR using the value of $t$ we find; we can proceed directly to the KP step that refines the state to error $\varepsilon$ to solve the QLSP. Ultimately, we achieve complexity that scales as $O(\kappa\log(\kappa)\log\log(\kappa))$, but this is upper bounded by the easier-to-use expression reported in \autoref{tab:comparison} and Eq.~\eqref{eq:Q_exhaustive_search_practical} when $\kappa \in [3,10^6]$, which covers the vast majority of practical cases. See Eq.~\eqref{eq:Q_FPAA_practical} for the analogous expression for the amplitude-amplified method. 

In \autoref{app:explicit_optimal_QLSS}, we give a detailed analysis of an optimal QLSS (\autoref{algo:full_QLSS_constant_prefactors}). which resembles the QLSS described in \autoref{sec:estimate_by_adiabatic} and also has optimal $O(\kappa\log(1/\varepsilon))$ complexity. One difference is that with each step in the sequence, \autoref{algo:full_QLSS_constant_prefactors} increases the condition number of the linear system by a factor of about 20, rather than a factor of 2. Furthermore, rather than learning a multiplicative 2-approximation of the norm at each step, it targets an approximation ratio that begins large and gradually decreases with each step. The decreasing approximation ratio plays the role of median amplification and represents a different manifestation of the log log trick. Some aspects of the analysis are fairly tedious (perhaps reflecting the fact that this method is more complicated than the others considered), but crucially there is no need for the adiabatic theorem. Ultimately, we prove in \autoref{thm:optimal_QLSS} that the method achieves expected complexity upper bounded by
\begin{equation}
    Q \leq 56.0 \kappa + 1.05 \kappa \ln(\frac{\sqrt{1-\varepsilon^2}}{\varepsilon})  + 2.78 \ln(\kappa)^3 + 3.17\,,
\end{equation}
which holds for any $\kappa$ and $\varepsilon$. 

\autoref{tab:comparison} compares these complexity statements to the rigorous bounds provided for the quantum walk method \cite{costa2021OptimalLinearSystem} and the randomization method \cite{subasi2019QAlgSysLinEqsAdiabatic}. We see that all of our methods yield considerable savings of more than an order of magnitude. The rigorous guarantee provided by the asymptotically near-optimal method using ``Grover norm search'' is actually better than the asymptotically optimal ``adiabatic norm search'' method for $\kappa \approx 10^6$. Indeed, it is quite possible that an asymptotically suboptimal method could ultimately be the the best option practically, since factors like $\sqrt{\ln(\kappa)}\ln\ln(\kappa)$ evaluate to less than 10 at the scale of $\kappa =10^6$, and grow extremely slowly in $\kappa$. 

We believe the rigorous bounds we report could be improved with more work. However, some aspects of the analysis will always be loose; for example, we see no way around using worst case error bounds on the QSVT polynomials used to do KP and KR. We expect that the empirical performance of our algorithms will have lower constant prefactors than those we have reported. Indeed, numerical experiments conducted in Ref.~\cite{costa2023discreteConstantFactors} suggest the actual constant prefactors of both the randomization method and the quantum walk method are significantly smaller than their rigorous bounds, and that the quantum walk method is better than the randomization method in practice. Concretely, experiments on $16 \times 16$ random matrices of increasing condition number (up to $\kappa = 50$) suggest that for those instances the quantum walk method can prepare an ansatz state $\ket{\vec{x_{\rm ans}}}$ with constant overlap $\gamma \approx 0.98$ in complexity roughly $14\kappa$, yielding overall QLSS complexity roughly $15\kappa + 2.1\kappa \ln(1/\varepsilon)$.\footnote{This uses $\Delta = 0.2$ and assumes $\alpha \kappa/\Delta$ walk steps, with $\alpha = 1.37$, as reported in \cite[Section 4]{costa2023discreteConstantFactors} for non-Hermitian matrices.  We also account for the factor of 2 to Hermitianize the matrix---see caption of \autoref{tab:comparison}.}  However, the numerical simulations were on a specific random ensemble of small matrices (no larger than $16 \times 16$), and it remains untested if this performance would be matched for larger instances, and for instances appearing in actual applications. Empirical analyses are valuable, but it is difficult to replace a broad worst-case guarantee. The question of which approach yields the best practical performance on application-relevant instances remains an interesting open problem. 

\section{Conclusions and summary}\label{sec:conclusion}

Here we have developed a new framework for the QLSP, one where the Euclidean norm $\nrm{\vec{x}}$ plays a central role. We have shown that if the value of $\nrm{\vec{x}}$ is known up to a constant factor, the QLSP is simple to solve in optimal complexity; the only algorithmic tools required are well-established QSVT-based methods for reflecting about the kernel of an operator. 

Conceptually, then, an important aspect of our QLSS is that it requires the ability to first estimate $\nrm{\vec{x}}$, a single real number in the interval $[1,\kappa]$. Norm estimation replaces the cost-intensive and technically challenging task of preparing a high-overlap $n$-dimensional ansatz state, which has been the strategy in prior work. 
One reason to prefer norm estimation over ansatz-state preparation is that the norm need only be learned once, whereas ansatz states may need to be re-prepared each time the linear system needs to be solved. Another reason is that in certain cases we may have information about the norm that assists us to obtain a constant-factor approximation; for example, if the norm concentrates for a certain class of instances, the ensemble average may act as a decent guess for the norm for typical instances, or at least it can act as a starting point that enables a faster search. 

Having established the importance of norm estimation for solving the QLSP, we have initiated a systematic study of methods for accomplishing this task. We have proved a lower bound of $\Omega(\kappa \varepsilon^{-1})$ query compleixty to estimate the norm to within a factor of $1+\varepsilon$, and we have given several methods that nearly achieve this bound. For one, any constant factor approximation can be improved to a $1+\varepsilon$ approximation at cost $O(\kappa\varepsilon^{-1}\log(1/\varepsilon))$; matching the lower bound to within a $\log(1/\varepsilon)$ factor. To obtain the initial constant-factor approximation,  our simple framework offers a straightforward method that is nearly optimal---its $\kappa$ dependence is off by a doubly logarithmic factor of $O(\log\log(\kappa)\log\log\log(\kappa))$. Additionally, we have shown that this factor can be eliminated, yielding a norm estimation algorithm with optimal $O(\kappa)$ complexity, by combining our framework with ideas from the adiabatic path-following methods employed in prior work for ansatz preparation. Here, the relative simplicity of the norm estimation problem keeps the analysis manageable and avoids the need to analyze the adiabatic theorem. Our final complexity of $O(\kappa\log(1/\varepsilon)/\varepsilon)$ shaves off several logarithmic factors in $\kappa$ and $1/\varepsilon$ from the complexity of the best norm-estimation method known previously \cite{chakraborty2018BlockMatrixPowers}.  

Furthermore, estimating the norm is an important task in its own right, independent of how it is used within our QLSS. By improving the state of the art for norm estimation, we reduce the asymptotic complexity for several end-to-end applications \cite{dalzell2023quantumAlgorithmsSurvey}. For example, knowing the norm is essential when the computational task being solved is to estimate an inner product $W = \vec{w}^\dagger \vec{x}$, where $\vec{w}$ is a fixed vector and $\vec{x}$ is the solution to a linear system. This can be solved by rewriting $W$ as $\nrm{\vec{w}}\nrm{\vec{x}}\braket{\vec{w}}{\vec{x}}$: we may use a QLSS to prepare $\ket{\vec{x}}$ and then apply overlap estimation \cite{knill2007ObservableMeasurement} to estimate $\braket{\vec{w}}{\vec{x}}$, but notably we still need to know $\nrm{\vec{x}}$ to compute the output $W$. This situation is the prototypical problem solved by quantum algorithms for differential equations \cite{berry2014highOrderQuantumAlgorithmDiffEQ,montanaro2016quantum}, where $A\vec{x} =\vec{b}$ is a discretization of the differential equation and $W$ is some physical quantity, such as an electromagnetic scattering cross section \cite{clader2013preconditioned}. A similar situation arises for quantum algorithms for certain problems in machine learning, such as Gaussian process regression \cite{zhao2015QAssisstedGaussProcRegr, zhao2019TrainingGaussianProcess} and support vector machines \cite{rebentrost2014QSVM}.

Although the existence of QLSSs with asymptotically optimal $O(\kappa \log(1/\varepsilon))$ query complexity has been previously established in Ref.~\cite{costa2021OptimalLinearSystem}, our approach matches their complexity using a distinct approach with a (we believe) simpler analysis, and offers the potential for better rigorous performance guarantees. Indeed, we have given explicit complexity statements for our methods that are significantly better than those available previously, although we expect that all of these methods under consideration have substantial room for further optimization. 
Moreover, our framework is also slightly more flexible than much of the previous work on QLSSs. We do not require any assumptions on our linear system $A\vec{x}=\vec{b}$, other than that it has at least one solution. In contrast, prior work often assumes $A$ is invertible, and some prior work, such as Refs.~\cite{lin2019OptimalQEigenstateFiltering,costa2021OptimalLinearSystem}, handles non-Hermitian matrices $A$ indirectly by doubling the dimension, and examining the Hermitian matrix $\left[\begin{smallmatrix}0 & A \\
A^\dagger & 0 \end{smallmatrix}\right]$ instead. In general,  this incurs a factor of 2 in the query complexity to $U_A$ and $U_A^\dagger$. 

On the other hand, our framework is not able to handle the situation where there is no $\vec{x}$ satisfying $A\vec{x}=\vec{b}$, and one instead seeks to find the least-squares solution that minimizes $\nrm{A\vec{x}-\vec{b}}$. This task is handled by the QLSS based on variable-time amplitude amplification of Ref.~\cite{chakraborty2018BlockMatrixPowers}, and fitting it into our framework is an interesting open problem.

\subsection*{Acknowledgments}

We thank Fernando Brandao and Grant Salton for discussions, Mario Berta and Sam McArdle for extensive comments on a draft of this paper, and Lin Lin for motivating a closer look at the relationship to the Zeno method.  Special thanks to Andr\'as Gily\'en for the suggestion to generalize eigenstate filtering to kernel projection and use $Q_{\vec{b}}A$ directly, rather than its doubled Hermitianized version.  We are grateful to the AWS Center for Quantum Computing for its support. 

\balance
\printbibliography

\onecolumn 
\appendix 

\section{Block-encoding and state-preparation constructions}\label{app:block_encodings}

Here we provide circuits for block-encoding and state preparation of various matrices and vectors used in our analysis. We assume access to the $(\alpha=1,a)$-block-encoding $U_A$ of $A$ and the state preparation unitary $U_{\vec{b}}$, as well as their inverses and controlled versions. Our constructions also involve other gates such as single-qubit rotation gates and multi-controlled Toffoli gates. For a projector $\Pi$ we denote $\CPiNOT$ as the operation that flips an ancilla bit (by applying a Pauli-$X$ gate) conditioned on being in the image of $\Pi$, and otherwise applying the identity. 
\begin{equation}
    \CPiNOT = \Pi \otimes X + (I-\Pi) \otimes I_2
\end{equation}
In quantum circuits, we draw this with a $\Pi$ box on the first register controlling a target $\oplus$ on the ancilla. When $\Pi = \ketbra{\vec{e_j}}$, this is a multi-controlled Toffoli gate, with the control bits set to $\ketbra{0}$ or $\ketbra{1}$ depending on the binary representation of $j$. 

We denote single qubit rotations by $e^{i\theta P} = \cos(\theta) I + i \sin(\theta) P$ where $P \in \{X,Y,Z\}$ is a Pauli matrix. Note that $e^{i\theta Y}\ket{0} = \cos(\theta) \ket{0} + \sin(\theta) \ket{1}$. 

Furthermore, we recall the identification of the identity matrix of dimension $d$ with $s$-qubit operators, via $I_{d} = \sum_{j=0}^{d-1} \ketbra{\vec{e_j}}$. 

\subsection{Block-encoding of $G$}

The $(1,a+1)$-block-encoding of $G = Q_{\vec{b}}A$ is depicted in \autoref{fig:block_encoding_G}. It has normalization $1$ and involves $a+1$ ancilla qubits. It costs one query to $U_A$, $U_{\vec{b}}$ and $U_{\vec{b}}^\dagger$, as well as one multi-controlled Toffoli gate.  
\begin{figure}[h!]
\centering
\scalebox{1.0}{
\begin{quantikz}[row sep={2em,between origins}, column sep=0.75em, align equals at=2]
    &   & \gate[3]{U_G} &  \\
    & \qwbundle{a} &  &  \\
    & \qwbundle{s} &  &  
\end{quantikz}
 = 
\begin{quantikz}[row sep={2em,between origins}, column sep=0.75em, align equals at=2]
    &            &               &                              & \targ{}\wire[d][2]{q}       &                    &     \\
    &\qwbundle{a}& \gate[2]{U_A} &                              &                             &                    &     \\
    &\qwbundle{s}&               & \gate{U_{\vec{b}}^{\dagger}} & \gate[1]{\ketbra{\vec{e_0}}}& \gate{U_{\vec{b}}} & 
\end{quantikz}
}
\caption{Quantum circuit implementing $(1, a+1)$-block-encoding of $G$. }\label{fig:block_encoding_G}
\end{figure}

To verify the correctness of the circuit, note that the final three gates implement $I_2 \otimes I_{2^a} \otimes (I_{2^s}-\ketbra{\vec{b}}) + X \otimes I_{2^a} \otimes \ketbra{\vec{b}}$. By sandwiching this operation with $\bra{0} \cdot \ket{0}$ on the first qubit, the second term vanishes, and we see it constitutes a block-encoding of $I_{2^s}-\vec{b}\vec{b}^\dagger$ with block-encoding factor 1. The whole circuit then is a simple product of block-encodings for $A$ and for $I_{2^s}-\vec{b}\vec{b}^\dagger$. We then conclude by noting that since $\vec{e_j}^\dagger A = 0$ for all $j \geq m$, we have $G = Q_{\vec{b}}A = (I_m-\vec{b}\vec{b}^\dagger)A = (I_{2^{s}}-\vec{b}\vec{b}^\dagger)A$. 

\subsection{Block-encoding of $A_t$}

The $(1,a+1)$-block-encoding of $A_t = A + t^{-1}\ketAbraB{\vec{e_m}}{\vec{e_n}}$ is shown in \autoref{fig:block_encoding_At}. It uses one controlled query to $U_A$, two multi-controlled Toffoli gates, one controlled-rotation gate, and up to $s$ CNOT gates. It requires precomputing the angle $\arccos(1/t)$, based on the choice of $t$.
\begin{figure}[h!]
\centering
\scalebox{1.0}{
\begin{quantikz}[row sep={2em,between origins}, column sep=0.75em, align equals at=2.5]
    &                & [1em]\gate[4]{U_{A_t}} &  \\
    & \qwbundle{a-1} &                   &  \\
    &                &                   &  \\
    & \qwbundle{s}   &                   &  
\end{quantikz}
 = 
\begin{quantikz}[row sep={2em,between origins}, column sep=0.75em, align equals at=2.5]
    &              & \targ{}\wire[d][3]{q}      & \octrl{1}     & \ctrl{2}  & \ctrl{3}                         & \targ{}\wire[d][3]{q}     & \\
    &\qwbundle{a-1}&                            & \gate[3]{U_A} &                                    &               &            &\\
    &              &                            &               & \gate{e^{i\arccos(1/t)Y}}  &  &                       &\\
    &\qwbundle{s}  & \gate{\ketbra{\vec{e_n}}}  &               & & \gate{\bigoplus_{m-n}}                                   & \gate{\ketbra{\vec{e_m}}} & 
\end{quantikz}
}
\caption{Quantum circuit implementing $(1, a+1)$-block-encoding of $A_t$. The controlled-$\oplus_{m-n}$ gate denotes a series of at most $s$ CNOT gates controlled by the first register and acting on different bits of the final $s$-qubit register, which transform $\ket{\vec{e_n}}$ into $\ket{\vec{e_m}}$ if the control bit is set to 1. }\label{fig:block_encoding_At}
\end{figure}
We can verify the circuit by setting the input and output of the ancillas to $\ket{0}$ and considering possible input and output states on the $s$-qubit sytem. If the input state is not $\ket{\vec{e_n}}$, then $U_A$ is applied and $A$ is enacted on the $s$-qubit system. Since $\bra{\vec{e_m}}A \ket{\vec{e_j}} = 0$ for all indices $j$, the final gate does not flip the first ancilla. Meanwhile, if the input state is $\ket{\vec{e_n}}$ and the output state is $\ket{\vec{e_m}}$, then the first ancilla is turned to $\ket{1}$ by the first gate and returned to $\ket{0}$ by the last gate. Prior to being returned to $\ket{0}$, it controls a single-qubit rotation gate that sends $\ket{0} \mapsto t^{-1} \ket{0} + \sqrt{1-t^{-2}}\ket{1}$, and it also controls a set of NOT gates that transform the input $\ket{\vec{e_n}}$ into the output $\ket{\vec{e_m}}$. Postselecting this ancilla on $\ket{0}$ verifies the correct matrix entry $t^{-1}$ for $\ketAbraB{\vec{e_m}}{\vec{e_n}}$. Finally, if the input is $\ket{\vec{e_n}}$ and the output is not $\ket{\vec{e_m}}$, then the matrix element is seen to be zero, as expected. 

\subsection{State preparation unitary $U_{\vec{b'}}$}

The state preparation unitary that prepares $\ket{\vec{b'}} = (\ket{\vec{b}} + \ket{\vec{e_m}})/\sqrt{2}$ uses one ancilla qubit that begins and ends in $\ket{0}$. It also requires one controlled query to $U_{\vec{b}}$ and a controlled-$U_{\vec{e_m}}$ gate that prepares $\ket{\vec{e_m}}$ controlled on an ancilla---this can be accomplished with up to $s$ CNOT gates (depending on the binary representation of $m$). Finally, it utilizes one multi-controlled Toffoli gate to disentangle the ancilla after the controlled operations---to verify this action, recall that $\braket{\vec{b}}{\vec{e_m}}=0$.

\begin{figure}[h!]
\centering
\scalebox{1.0}{
\begin{quantikz}[row sep={2em,between origins}, column sep=0.75em, align equals at=1.5]
    \lstick{$\ket{0}$}         & [1em]\gate[2]{U_{\vec{b'}}} & \rstick{$\ket{0}$}\\
    \lstick{$\ket{\vec{e_0}}$} & \qwbundle{s}           & \rstick{$\ket{\vec{b'}}$}  
\end{quantikz}
 = 
\begin{quantikz}[row sep={2em,between origins}, column sep=0.75em, align equals at=1.5]
    \lstick{$\ket{0}$}&              & \gate{H} & \octrl{1}           & \ctrl{1}            & \targ{} \wire[d][1]{q} & \rstick{$\ket{0}$} \\
                      \lstick{$\ket{\vec{e_0}}$}& \qwbundle{s} &          & \gate{U_{\vec{b}}} & \gate{U_{\vec{e_m}}} & \gate{\ketbra{\vec{e_m}}} & \rstick{$\ket{\vec{b'}}$} 
\end{quantikz}
}
\caption{Quantum circuit implementing a state-preparation unitary $U_{\vec{b'}}$ with the assistance of one ancilla qubit beginning and ending in $\ket{0}$. }\label{fig:state_prep_bprime}
\end{figure}
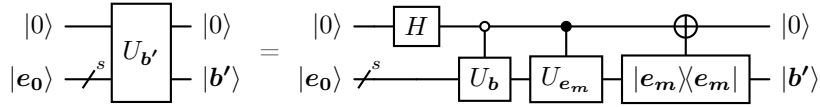

\subsection{Block-encoding of $G_t$}

The $(1,a+2)$-block-encoding of $G_t = Q_{\vec{b'}}A_t$ is depicted in \autoref{fig:block_encoding_Gt}. It is the same as the block-encoding of $G$ from \autoref{fig:block_encoding_G}, except with $U_A$ replaced by $U_{A_t}$ and $U_{\vec{b'}}$ replaced by $U_{\vec{b'}}$.

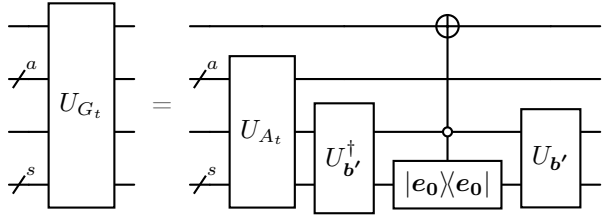
\begin{figure}[h!]
\centering
\scalebox{1.0}{
\begin{quantikz}[row sep={2em,between origins}, column sep=0.75em, align equals at=2.5]
    &              & \gate[4]{U_{G_t}} &  \\
    & \qwbundle{a} &                   &  \\
    &              &                   &  \\
    & \qwbundle{s} &                   &  
\end{quantikz}
 = 
\begin{quantikz}[row sep={2em,between origins}, column sep=0.75em, align equals at=2.5]
    &            &                   &                                  & \targ{}\wire[d][2]{q}        &                        &     \\
    &\qwbundle{a}& \gate[3]{U_{A_t}} &                                  &                              &                        &     \\
    &            &                   & \gate[2]{U_{\vec{b'}}^{\dagger}} & \octrl{1}                    & \gate[2]{U_{\vec{b'}}} &     \\
    &\qwbundle{s}&                   &                                  & \gate[1]{\ketbra{\vec{e_0}}} &                        & 
\end{quantikz}
}
\caption{Quantum circuit implementing $(1, a+2)$-block-encoding of $G_t$. }\label{fig:block_encoding_Gt}
\end{figure}

\subsection{Block-encoding of $\bar{A}_\sigma$}

The $(1,a+1)$-block-encoding of the matrix $\bar{A}_\sigma$ is the most complex, and depicted in \autoref{fig:block_encoding_barAsigma}. It uses one controlled query to $U_A$, one CNOT gate, one single-qubit rotation gate, and one $\CPiNOT$ gate with $\Pi = I_{2^s}-I_m$. The $\CPiNOT$ gate could be further decomposed into $\mathrm{poly}(s)$ Toffoli and CNOT gates, for example, by comparing the binary representation of the input state $\ket{\vec{e_j}}$ to that of $m$, starting with the most significant digit, and computing whether $m \geq j$ into the third register depicted in the figure. This strategy would require at least 1 additional (unshown) ancilla qubit which would then need to be uncomputed. 

To enact the single-qubit rotation, we precompute the angle $\varphi = \arccos(\sqrt{1-f(\sigma)^2})$ and note that $e^{-i\varphi Y}\ket{0} = \sqrt{1-f(\sigma)^2}\ket{0} - f(\sigma)\ket{1}$, and $e^{-i\varphi Y}\ket{1}= f(\sigma)\ket{0} + \sqrt{1-f(\sigma)^2}\ket{1}$.

To verify the correctness of the circuit, recall that here we view the matrix $\bar{A}_\sigma$ as an $(s+1)$-qubit operator equivalent to $\sqrt{1-f(\sigma)^2}\ketbra{0} \otimes A + f(\sigma)\ketAbraB{0}{1}\otimes I_m$, which acts on the final two registers of the circuit.
The first two gates are seen to enact a block-encoding of the operator $\ketbra{0} \otimes A + \ketbra{1} \otimes I_m$, using the second and third registers as block-encoding ancillas. Meanwhile, the final two gates are seen to enact a block-encoding of the operator $\ket{0}(\sqrt{1-f(\sigma)^2}\bra{0} + f(\sigma)\bra{1}) \otimes I_{2^s}$, using the first register as a block-encoding ancilla---this is because sandwiching the first qubit between $\bra{0} \cdot \ket{0}$ postselects on both the input and output of the CNOT gate being $\ket{0}$.  

The correctness of the circuit is then verified by multiplying the operators block-encoded by these two parts of the circuit.

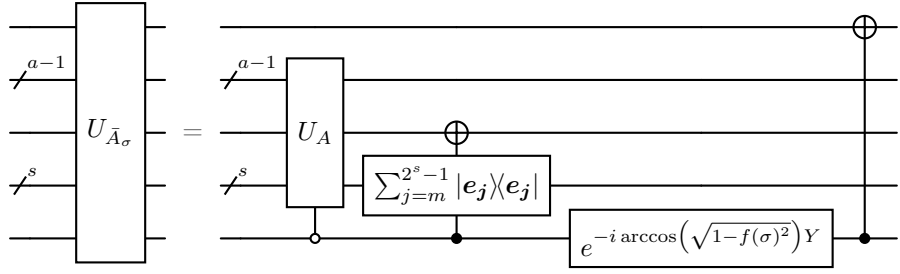
\begin{figure}[h!]
\centering
\scalebox{1.0}{
\begin{quantikz}[row sep={2em,between origins}, column sep=0.75em, align equals at=3]
    &              &[1em] \gate[5]{U_{\bar{A}_\sigma}} &  \\
    & \qwbundle{a-1}              &                       &  \\
    &   &                       &  \\
    & \qwbundle{s} &                       &  \\
    &              &                       &  
\end{quantikz}
 = 
\begin{quantikz}[row sep={2em,between origins}, column sep=0.75em, align equals at=3]
    &              & [1em]              &           &                                                                         & \targ{}   & \\
    & \qwbundle{a-1} &  \gate[3]{U_A}              &           &                                                                         &           & \\
    &              &  & \targ{}           &                                                                         &           & \\
    & \qwbundle{s} &               & \gate{\sum_{j=m}^{2^s-1}\ketbra{\vec{e_j}}}\wire[u][1]{q}          &                                                                         &           & \\
    &              & \octrl{-1}     & \ctrl{-1} & \gate{e^{-i\arccos\left(\sqrt{1-f(\sigma)^2}\right)Y}} & \ctrl{-4} &\\
\end{quantikz}
}
\caption{Quantum circuit implementing $(1, a+1)$-block-encoding of $\bar{A}_\sigma$, which acts on a $s+1$-qubit system. }\label{fig:block_encoding_barAsigma}
\end{figure}

\subsection{How to pad $A$ in the first place}\label{sec:how_to_pad}

In \autoref{sec:QLSP}, we assumed we are given an instance $(A,\vec{b})$ of the QLSP of size $m \times n$, and a block-encoding $U_A$ that views $A$ as an operator on $s$ qubits with $\max(m,n) < 2^s$. This means there is at least one row and column of zero padding built in. Here we drop this assumption, and assume that $\max(m,n) = 2^s$. This means that in order to augment the linear system we need to introduce a new basis vector $\ket{\vec{e_m}} = \ket{\vec{e_n}}$, and this requires introducing a whole new qubit and expanding the total vector space to size $2^{s+1}$. To fit this situation into our framework, we now give an alternative block-encoding of $A$ that simply pads it with zeros as $\left[\begin{smallmatrix}A & 0 \\ 0 & 0\end{smallmatrix}\right]$ when viewed as an $(s+1)$-qubit operator rather than as an $s$-qubit operator, so that $\max(m,n) < 2^{s+1}$.  This construction uses $a+1$ total ancillas, a multi-controlled Toffoli and a CNOT gate. It uses one call to $U_{A}$ but no calls to controlled $U_{A}$. This construction could then be used in place of $U_A$ everwhere else that $U_A$ appears, causing the substitution $a \mapsto a+1$, $s \mapsto s+1$. 
\begin{figure}[h!]
\centering
\scalebox{1.0}{
\begin{quantikz}[row sep={2em,between origins}, column sep=0.75em, align equals at=2.5]
    &             & \targ{}       &                & \targ{}    &  &     \\
    &\qwbundle{a}& \octrl{-1}    & \gate[2]{U_{A}} &            &  &     \\
    &\qwbundle{s}&               &                &            &  &     \\
    &             &               &                & \octrl{-3} &  &     
\end{quantikz}
}
\caption{Quantum circuit implementing $(1,a) = (1, a+1)$-block-encoding of $A$ as an operator on $s+1$ qubits (padded with zeros), given as input a $(1,a)$-block-encoding of $A$ as an operator on $s$ qubits. }\label{fig:block_encoding_padded}
\end{figure}
To see that this construction is correct, note the following sequence of logic. First, sandwiching the $a$-qubit register with $\bra{0} \cdot \ket{0}$ triggers the multi-controlled Toffoli that flips the first qubit. Then, sandwhiching the first qubit with $\bra{0} \cdot \ket{0}$ requires a second flip to occur, meaning nonzero output is only possible if the final qubit is in $\ket{0}$.

\section{Implementation of kernel projection and kernel reflection}\label{app:KP_KR}

\subsection{Quantum singular value tranformation}

The formalism of quantum singular value transformation (QSVT) \cite{gilyen2018QSingValTransf} allows for polynomial transformations of the singular values of a block-encoded matrix. 

Let $U_B$ be an $(1,a_B)$-block-encoding of a matrix $B = \sum_{j=0}^{2^s-1}\sum_{k=0}^{2^s-1} b_{jk} \ketAbraB{\vec{e_j}}{\vec{e_k}}$ (viewed as an operator on $s$ qubits with appropriate zero-padding) and let $B = W\Sigma V^\dagger$
be a singular value decomposition (SVD) of $B$, where $\Sigma$ is the diagonal matrix of singular values. Let $P$ be a real, even, degree-$d$ polynomial for which $|P(x)|\leq 1$ whenever $|x|\leq 1$ (QSVT also works for odd polynomials, but we only need even polynomials in our application). Corollary 18 of Ref.~\cite[arXiv version]{gilyen2018QSingValTransf} states that we can find a quantum circuit implementing $U_{P}^{(B)}$, which is a $(1,a_B+1)$-block-encoding of $VP(\Sigma)V^\dagger$:
\begin{equation}
  V P(\Sigma) V^\dagger = (\bra{0}^{\otimes(a_B+1)} \otimes I_{2^s}) U_{P}^{(B)}(\ket{0}^{\otimes(a_B+1)} \otimes I_{2^s})\,,
\end{equation}
The unitary $U^{(B)}_P$ uses $d/2$ calls to each of $U_B$ and $U_B^\dagger$, $2d$ multi-controlled Toffoli gates, and $d$ single-qubit rotation gates. The circuit for $U_P^{(B)}$ is given in \autoref{fig:QSVT_unitary}. 

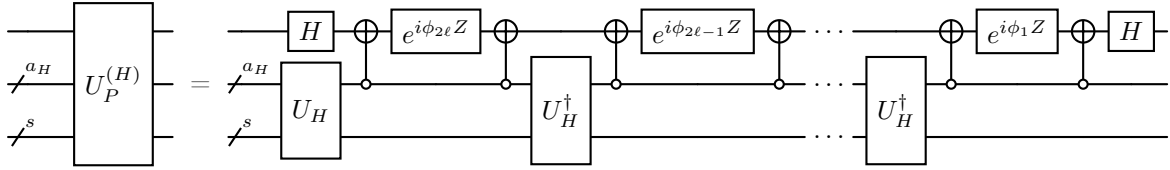
\begin{figure}[h!]
\centering
\scalebox{1.0}{
\begin{quantikz}[row sep={2em,between origins}, column sep=0.75em, align equals at=2]
    &                & [1em]\gate[3]{U_P^{(B)}} &  \\
    & \qwbundle{a_B} &                       &  \\
    & \qwbundle{s}   &                       &  
\end{quantikz}
 = 
\begin{quantikz}[row sep={2em,between origins}, column sep=0.5em, align equals at=2]
    &                & [1em]\gate{H}          & \targ{}    & \gate{e^{i\phi_{2\ell} Z}} & \targ{}    &                       & \targ{}    & \gate{e^{i\phi_{2\ell-1} Z}} & \targ{}    & \ \ldots\ &                       & \targ{}    & \gate{e^{i\phi_{1} Z}} & \targ{}    & \gate{H} & \\
    & \qwbundle{a_B} & \gate[2]{U_B}          & \octrl{-1} &                            & \octrl{-1} & \gate[2]{U_B^\dagger} & \octrl{-1} &                               & \octrl{-1} & \ \ldots\ & \gate[2]{U_B^\dagger} & \octrl{-1} &                        & \octrl{-1} &          & \\
    & \qwbundle{s}   &                        &            &                            &            &                       &            &                               &            & \ \ldots\ &                       &            &                        &            &          &          
\end{quantikz}
}
\caption{Quantum circuit implementing the QSVT unitary $U_{P}^{(B)}$. The phases $\phi_{j}$ must be chosen to implement the polynomial $P$. Conditioned on the input and output of the first $a_B+1$ qubits being $\ket{0}^{\otimes(a_B+1)}$, the operation $VP(\Sigma)V^\dagger$ is applied to the state on the third register, where $B = W \Sigma V^\dagger$ is a SVD of $B$. The gate $H$ denotes Hadamard.  }\label{fig:QSVT_unitary}
\end{figure}

\subsection{Kernel projection}
We can apply QSVT to perform eigenstate filtering (EF) \cite{lin2019OptimalQEigenstateFiltering}, or more precisely in our application, kernel projection (KP). This technique gives a procedure to project onto the kernel of a matrix by constructing an appropriate polynomial filter function and applying QSVT. Specifically, given target parameters $\Delta \in (0,1)$ and $\eta \in (0,1]$, we take
\begin{equation} \label{eq:ell_KP}
    \ell = \left\lceil \frac{\arccosh(\eta^{-1})}{\arccosh\left(\frac{1+\Delta^2}{1-\Delta^2}\right)} \right\rceil \leq \left\lceil \frac{1}{2\Delta}\ln\left(\frac{2}{\eta}\right) \right\rceil
\end{equation}
and we define the even, degree-$2\ell$ polynomial $F_{\Delta,\ell}(x)$ given in Ref.~\cite{lin2019OptimalQEigenstateFiltering}
\begin{equation}
    F_{\Delta, \ell}(x) = \frac{T_\ell\left(\frac{1+\Delta^2-2x^2}{1-\Delta^2}\right)}{T_\ell\left(\frac{1+\Delta^2}{1-\Delta^2}\right)}
\end{equation}
where $T_\ell$ is the $\ell$th Cheybshev polynomial of the first kind, given by 
\begin{equation}
    T_\ell(z) = \begin{cases}
        \cos(\ell \arccos(z)) & \text{if } |z|\leq 1 \\
        \cosh(\ell \arccosh(z)) & \text{if } z >1 \\
        (-1)^\ell \cosh(\ell \arccosh(z)) & \text{if } z < -1
    \end{cases}\,.
\end{equation}
This polynomial is plotted in \autoref{fig:polynomials}. 
\begin{figure}[ht!]
    \centering
    \includegraphics[width=0.6\textwidth]{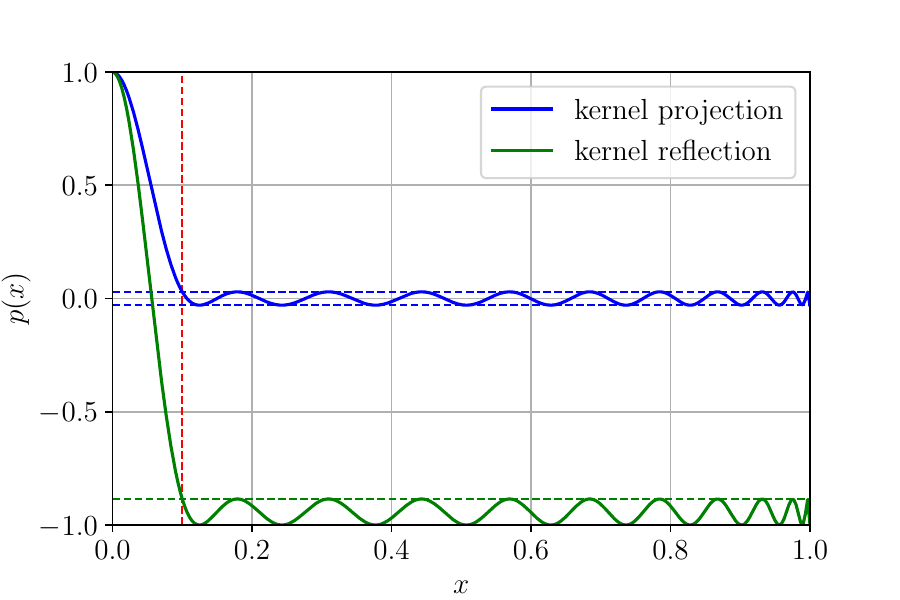}
    \caption{\label{fig:polynomials} Example of $F_{\Delta,\ell}(x)$ (kernel projection) and $R_{\Delta,\ell}(x)$ (kernel reflection) for $\Delta = 0.1$ and $\ell = 21$, implying $\eta \leq 0.03$. }
\end{figure}
\begin{lemma}\label{lem:KP_poly_properties}
    The polynomial $F_{\Delta,\ell}$ is guaranteed to satisfy 
\begin{enumerate}
    \item For all $x \in [-1,1]$, it holds that $|F_{\Delta,\ell}(x)| \leq 1$
    \item For all $x \in [\Delta, 1]$, it holds that $|F_{\Delta,\ell}(x)| \leq T_\ell(\frac{1+\Delta^2}{1-\Delta^2})^{-1} \leq \eta$. 
    \item $F_{\Delta,\ell}(0) = 1$. 
\end{enumerate}
\end{lemma}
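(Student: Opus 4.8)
The statement to prove is \autoref{lem:KP_poly_properties}, which asserts three properties of the filtering polynomial $F_{\Delta,\ell}(x) = T_\ell\!\left(\tfrac{1+\Delta^2-2x^2}{1-\Delta^2}\right)\big/T_\ell\!\left(\tfrac{1+\Delta^2}{1-\Delta^2}\right)$: it is bounded by $1$ on $[-1,1]$, bounded by $\eta$ on $[\Delta,1]$, and equals $1$ at $x=0$.

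\textbf{Plan.} The whole argument rests on tracking where the affine change of variables $x \mapsto z(x) := \tfrac{1+\Delta^2-2x^2}{1-\Delta^2}$ sends the relevant intervals, and then using the elementary boundedness/monotonicity properties of the Chebyshev polynomial $T_\ell$ as recorded in the piecewise definition given just above the lemma. First I would compute $z(0) = \tfrac{1+\Delta^2}{1-\Delta^2}$, which immediately gives property 3: numerator and denominator of $F_{\Delta,\ell}(0)$ coincide, so $F_{\Delta,\ell}(0)=1$. (One should note $T_\ell$ evaluated there is nonzero and positive since the argument exceeds $1$, so the division is legitimate.) Next, for property 1, I would observe that as $x$ ranges over $[-1,1]$, $x^2$ ranges over $[0,1]$, so $z(x)$ ranges over $\left[\tfrac{1+\Delta^2-2}{1-\Delta^2}, \tfrac{1+\Delta^2}{1-\Delta^2}\right] = \left[-1, \tfrac{1+\Delta^2}{1-\Delta^2}\right]$; in particular $z(x) \ge -1$ always, and $z(x)\le \tfrac{1+\Delta^2}{1-\Delta^2}$ with equality at $x=0$. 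Then I would split into cases: when $z(x)\in[-1,1]$, $|T_\ell(z(x))|\le 1 \le |T_\ell(\tfrac{1+\Delta^2}{1-\Delta^2})|$ (the latter using $\cosh \ge 1$); when $z(x) > 1$, use that $z\mapsto \cosh(\ell\,\arccosh z)$ is increasing on $(1,\infty)$ together with $z(x)\le \tfrac{1+\Delta^2}{1-\Delta^2}$ to conclude $|T_\ell(z(x))| \le T_\ell(\tfrac{1+\Delta^2}{1-\Delta^2})$. Either way the ratio has absolute value at most $1$.

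For property 2, I would note that on $x\in[\Delta,1]$ we have $x^2 \in [\Delta^2,1]$, so $z(x) \in \left[-1, \tfrac{1+\Delta^2-2\Delta^2}{1-\Delta^2}\right] = \left[-1, 1\right]$; the key arithmetic fact is that $x=\Delta$ is exactly the point where $z(x)=1$. Hence on this interval $|T_\ell(z(x))| \le 1$ by the $|z|\le 1$ branch of the definition, giving $|F_{\Delta,\ell}(x)| \le T_\ell\!\left(\tfrac{1+\Delta^2}{1-\Delta^2}\right)^{-1}$. It then remains to justify $T_\ell\!\left(\tfrac{1+\Delta^2}{1-\Delta^2}\right)^{-1} \le \eta$, which follows directly from the definition of $\ell$ in Eq.~\eqref{eq:ell_KP}: since $\ell \ge \arccosh(\eta^{-1})/\arccosh\!\left(\tfrac{1+\Delta^2}{1-\Delta^2}\right)$, we get $\ell\,\arccosh\!\left(\tfrac{1+\Delta^2}{1-\Delta^2}\right) \ge \arccosh(\eta^{-1})$, and applying the increasing function $\cosh$ yields $T_\ell\!\left(\tfrac{1+\Delta^2}{1-\Delta^2}\right) = \cosh\!\left(\ell\,\arccosh\!\left(\tfrac{1+\Delta^2}{1-\Delta^2}\right)\right) \ge \cosh(\arccosh(\eta^{-1})) = \eta^{-1}$, i.e.\ the reciprocal is at most $\eta$.

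\textbf{Main obstacle.} There is no deep obstacle here — this is a verification lemma — but the one place to be careful is the case analysis in property 1 around the sign and magnitude of $z(x)$: one must make sure the bound $|T_\ell(z(x))| \le T_\ell(\tfrac{1+\Delta^2}{1-\Delta^2})$ is handled correctly both when $z(x)\in[-1,1]$ (where it is $\le 1$, and $1 \le T_\ell(\tfrac{1+\Delta^2}{1-\Delta^2})$ because the argument exceeds $1$) and when $z(x) > 1$ (where monotonicity of $\cosh(\ell\,\arccosh(\cdot))$ is needed). The value $z(x) < -1$ never occurs for $x\in[-1,1]$, which slightly simplifies matters. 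I would also double-check the inequality bound $\ell \le \lceil \tfrac{1}{2\Delta}\ln(2/\eta)\rceil$ asserted in Eq.~\eqref{eq:ell_KP} using the standard estimate $\arccosh\!\left(\tfrac{1+\Delta^2}{1-\Delta^2}\right) \ge 2\Delta$ and $\arccosh(\eta^{-1}) \le \ln(2/\eta)$, though this is not strictly needed for the lemma itself.
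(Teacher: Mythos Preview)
Your proposal is correct and follows essentially the same approach as the paper's own proof: track the range of the argument $z(x)=\tfrac{1+\Delta^2-2x^2}{1-\Delta^2}$, use that $|T_\ell|\le 1$ on $[-1,1]$ and that $T_\ell$ is monotone increasing on $[1,\infty)$, and then invoke the defining inequality for $\ell$ to bound the denominator below by $\eta^{-1}$. Your write-up is more explicit about the case split in property~1 than the paper's, but the argument is the same.
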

\begin{proof}
    This builds on and improves upon Lemma 13 of Ref.~\cite{lin2019OptimalQEigenstateFiltering}.\footnote{The lemma stated here has a constant prefactor for $\ell$ that is better than that of Lemma 13 of Ref.~\cite{lin2019OptimalQEigenstateFiltering} by a factor of $\sqrt{2}$. Additionally, we no longer require the restriction $\Delta \in (0,1/\sqrt{12}]$. We acknowledge communciation with Yu Tong on how the constant prefactor in their bound could be improved, a fact that was also pointed out in Ref.~\cite[Section V]{costa2021OptimalLinearSystem}.} Note that the Chebyshev polynomial $T_\ell(z)$ is bounded on $[-1,1]$ when $z \in [-1,1]$. Moreover, for $z \geq 1$ it is monotonically increasing. Thus, the numerator achieves its maximum on $x \in [-1,1]$ when $x = 0$, where $F_{\Delta,\ell} = 1$ by construction, verifying properties 1 and 3. To verify Property 2, note that for all $x\in[\Delta,1]$, the numerator is bounded in the range $[-1,1]$. Moreover, the absolute value of the denominator evaluates to 
    \begin{align}
        T_\ell\left(\frac{1+\Delta^2}{1-\Delta^2}\right) = \cosh\left(\ell \arccosh\left(\frac{1+\Delta^2}{1-\Delta^2}\right)\right)
    \end{align}
    We immediately see that if $\ell \geq \arccosh(\eta^{-1})/\arccosh((1+\Delta^2)/(1-\Delta^2))$, then the denominator is larger than $\eta^{-1}$. 
\end{proof}

In other words, $F_{\Delta,\ell}(x)$ is a filter of width $\Delta$, which maps the zero input to one and all inputs outside the window $[-\Delta,\Delta]$ to a value close to zero. In fact, the polynomial $F_{\Delta,\ell}(x)$ is the optimal filter polynomial of degree $2\ell$ in the sense of minimizing the maximum of $|F_{\Delta,\ell}(x)|$ outside the window $[-\Delta,\Delta]$ subject to $F_{\Delta,\ell}(0)=1$ \cite[Appendix E]{lin2019OptimalQEigenstateFiltering}.

We can apply QSVT using the polynomial $F_{\Delta,\ell}$. Suppose we have a $(1,a_B)$-block-encoding of $B = W\Sigma V^\dagger$, which has no singular values in the interval $(0,\Delta)$. We apply KP to a state $\ket{\phi} = c_1 \ket{\vec{a}} + c_2\ket{\vec{b}}$, where $\vec{a}$ is in the kernel of $B$ and $\vec{b}$ is orthogonal to the kernel. The vector $\ket{\vec{b}}$ has a decomposition $\sum_j w_j \ket{\vec{u_j}}$ into right singular vectors $\ket{\vec{u_j}}$ of $B$ all of which have singular values $\varsigma_j$ in $(\Delta,1)$. By Property 2 of \autoref{lem:KP_poly_properties}, for every $j$, $ F_{\Delta,\ell}(\varsigma_j)$ lies in the interval $[-\eta,\eta]$. Thus, the application of $VF_{\Delta,\ell}(\Sigma)V^\dagger$ maps
\begin{align}\label{eq:KP_b_map}
    \ket{\vec{b}}\mapsto \sum_j w_j F_{\Delta,\ell}(\varsigma_j) \ket{\vec{u_j}} = \delta_1\ket{\vec{b}} + \delta_2\ket{\vec{b'}}
\end{align}
where $\vec{b'}$ is orthogonal to $\vec{b}$ and to the kernel of $B$, and $\delta_1^2+\delta_2^2 \leq \eta^2$. This allows us to assert
\begin{equation}\label{eq:KP_action}
    U_{F_{\Delta,\ell}}^{(B)}(\ket{0}^{\otimes(a_B+1)} \otimes \ket{\phi}) = \underbrace{\left[\ket{0}^{\otimes(a_B+1)} \otimes (c_1  \ket{\vec{a}} + c_2\delta_1\ket{\vec{b}} + c_2\delta_2 \ket{\vec{b'}})\right]}_{\text{success}}+ \underbrace{|c_2|\sqrt{1-\delta_1^2-\delta_2}\ket{{\perp}}}_{\text{failure}}
\end{equation}
where $\ket{{\perp}}$ is a normalized state for which $(\bra{0}^{\otimes(a_B+1)}\otimes I_{2^s})\ket{{\perp}} = 0$.  Postselecting on the first register being $\ket{0}^{\otimes(a_B+1)}$, this implements an approximate projector onto the kernel of $B$. Accordingly, we can detect the success of KP by measuring the $a_B+1$ ancillas and checking if they are all $\ket{0}$.

In our application, the matrix $B$ is given by $Q_{\vec{b}}A$ or $Q_{\vec{b'}}A_t$. The number of ancillas in these two cases is $a_B = a+1$ and $a_B = a+2$, respectively, where $a$ is the number of ancillas for the block-encoding $U_A$. In both cases, we can give a $(1,a_B)$-block-encoding of $B$ using one (potentially controlled) query to each of $U_A$, $U_{\vec{b}}$, and $U_{\vec{b}}^\dagger$, as illustrated in \autoref{app:block_encodings}. 

\begin{lemma}[Using KP to refine QLSS solution]\label{lem:KP}
Suppose $\vec{b}$ is in the column space of $A$, and let $\vec{x}$ denote the solution of minimum $\nrm{\vec{x}}$ to the equation $A \vec{x} = \vec{b}$. Suppose $A$ has no singular values in the interval $(0,\kappa^{-1})$ and let $\tilde{\rho}$ be a mixed quantum state for which $\bra{\vec{x}}\tilde{\rho} \ket{\vec{x}} = 1-\mu^2$ and $\bra{\vec{y}}\tilde{\rho}\ket{\vec{y}} = 0$ for all $\vec{y}$ that are in the kernel of $A$. Suppose KP is applied to approximately project $\tilde{\rho}$ onto the kernel of $G = Q_{\vec{b}}A$ using parameters $(\kappa,\eta)$. That is, we choose $\ell$ using Eq.~\eqref{eq:ell_KP} with $\Delta = \kappa^{-1}$ and apply the unitary $U_{F_{\kappa^{-1},\ell}}^{(G)}$, and then measure the ancillas to determine success. KP succeeds with probability at least $1-\mu^2$, and when it succeeds it outputs a state $\tilde{\rho}_{\rm out}$ for which
\begin{align}
     \bra{\vec{x}}\tilde{\rho}_{\rm out}\ket{\vec{x}} &\geq 1-\frac{\mu^2\eta^2}{1-\mu^2+\mu^2\eta^2} \geq 1-\frac{\mu^2\eta^2}{1-\mu^2}\\
    \frac{1}{2}\nrm{\ketbra{\vec{x}}-\tilde{\rho}_{\rm out}}_1 &\leq \frac{\mu \eta}{\sqrt{1-\mu^2+\mu^2\eta^2}} \leq \frac{\mu\eta}{\sqrt{1-\mu^2}}
\end{align}
\end{lemma}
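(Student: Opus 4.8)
The plan is to reduce the whole statement to a short calculation about the single operator $M := V\,F_{\kappa^{-1},\ell}(\Sigma)\,V^\dagger$ that KP applies to the state upon success, where $G = Q_{\vec b}A = W\Sigma V^\dagger$ is an SVD. First I would record two structural facts. Since $\bra{\vec y}\tilde\rho\ket{\vec y}=0$ for every $\vec y$ in the kernel of $A$ and $\tilde\rho$ is positive semidefinite, the support of $\tilde\rho$ is orthogonal to $\ker A$; and since $\ker G = \ker A \oplus \spn\{\vec x\}$ (established in \autoref{sec:KP_KR_main}), within the subspace orthogonal to $\ker A$ the only kernel direction of $G$ is $\ket{\vec x}$. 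Choosing the SVD of $G$ so that $\ket{\vec x}$ is one of the right singular vectors with singular value $0$, \autoref{lem:KP_poly_properties} gives $M\ket{\vec x} = F_{\kappa^{-1},\ell}(0)\ket{\vec x} = \ket{\vec x}$, while $M$ sends any unit vector $\ket{\vec w_\perp}$ orthogonal to $\ker G$ (a combination of right singular vectors with singular values in $[\kappa^{-1},1]$) to a vector of norm at most $\eta$ that is again a combination of right singular vectors with nonzero singular values, hence orthogonal to $\ket{\vec x}$. This is exactly the content of \autoref{eq:KP_action}, together with the QSVT construction, specialized to $H = G$.

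Next I would pass to a pure-state decomposition $\tilde\rho = \sum_k q_k \ketbra{\phi_k}$ with each $\ket{\phi_k}$ in the support of $\tilde\rho$, and write $\ket{\phi_k} = \gamma_k\ket{\vec x} + \nu_k\ket{\vec w_{\perp,k}}$ with $|\gamma_k|^2 + |\nu_k|^2 = 1$ and $\ket{\vec w_{\perp,k}}$ a unit vector orthogonal to $\ker G$ (well-defined by the first paragraph). The hypothesis $\bra{\vec x}\tilde\rho\ket{\vec x} = 1-\mu^2$ reads $\sum_k q_k|\gamma_k|^2 = 1-\mu^2$, so $\sum_k q_k|\nu_k|^2 = \mu^2$. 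Conditioned on success, KP implements the trace-nonincreasing map $\tilde\rho \mapsto M\tilde\rho M^\dagger$, normalized by $p_{\rm succ} = \mathrm{Tr}(M\tilde\rho M^\dagger) = \sum_k q_k\nrm{M\ket{\phi_k}}^2$. Using that $M\ket{\vec x}=\ket{\vec x}$ is orthogonal to $M\ket{\vec w_{\perp,k}}$ and that $\nrm{M\ket{\vec w_{\perp,k}}}\le\eta$, each term satisfies $|\gamma_k|^2 \le \nrm{M\ket{\phi_k}}^2 = |\gamma_k|^2 + |\nu_k|^2\nrm{M\ket{\vec w_{\perp,k}}}^2 \le |\gamma_k|^2 + |\nu_k|^2\eta^2$; averaging with weights $q_k$ gives $1-\mu^2 \le p_{\rm succ} \le 1-\mu^2+\mu^2\eta^2$. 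The lower bound is precisely the claimed success-probability guarantee.

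For the fidelity of the output, I would note $\bra{\vec x}M\ket{\phi_k} = \gamma_k$ (again because $M\ket{\vec x}=\ket{\vec x}$ and $M\ket{\vec w_{\perp,k}}\perp\ket{\vec x}$), hence $\bra{\vec x}\tilde\rho_{\rm out}\ket{\vec x} = p_{\rm succ}^{-1}\sum_k q_k|\gamma_k|^2 = (1-\mu^2)/p_{\rm succ}$; plugging in the upper bound on $p_{\rm succ}$ yields $\bra{\vec x}\tilde\rho_{\rm out}\ket{\vec x} \ge (1-\mu^2)/(1-\mu^2+\mu^2\eta^2) = 1 - \mu^2\eta^2/(1-\mu^2+\mu^2\eta^2)$. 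The trace-distance bound then follows from the Fuchs--van de Graaf inequality $\tfrac12\nrm{\ketbra{\vec x}-\sigma}_1 \le \sqrt{1-\bra{\vec x}\sigma\ket{\vec x}}$ applied with $\sigma = \tilde\rho_{\rm out}$, giving $\tfrac12\nrm{\ketbra{\vec x}-\tilde\rho_{\rm out}}_1 \le \mu\eta/\sqrt{1-\mu^2+\mu^2\eta^2}$. Each of the simpler right-hand bounds is obtained by weakening $1-\mu^2+\mu^2\eta^2$ to $1-\mu^2$ in the denominator. I expect the only genuinely delicate points to be bookkeeping: verifying that the post-selected KP channel really is $\tilde\rho\mapsto M\tilde\rho M^\dagger$ with $M$ the block-encoded polynomial of $G$, and that the pure-state decomposition can be taken inside the relevant $G$-singular subspaces; the rest is a few lines of algebra.
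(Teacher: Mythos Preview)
Your proposal is correct and follows essentially the same route as the paper's proof: a pure-state decomposition of $\tilde\rho$, the per-component bounds $|\gamma_k|^2 \le \nrm{M\ket{\phi_k}}^2 \le |\gamma_k|^2 + |\nu_k|^2\eta^2$, the identity $\bra{\vec x}\tilde\rho_{\rm out}\ket{\vec x} = (1-\mu^2)/p_{\rm succ}$, and the Fuchs--van de Graaf bound to pass to trace distance. If anything, you are slightly more explicit than the paper in spelling out why each $\ket{\phi_k}$ decomposes as $\gamma_k\ket{\vec x}+\nu_k\ket{\vec w_{\perp,k}}$ with $\ket{\vec w_{\perp,k}}$ orthogonal to all of $\ker G$ (via $\ker G = \ker A \oplus \spn\{\vec x\}$ and the support hypothesis on $\tilde\rho$), which the paper's proof uses implicitly when it invokes Eq.~\eqref{eq:KP_action} with $\ket{\vec a}=\ket{\vec x}$.
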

\begin{proof}
    Let $\tilde{\rho} = \sum_i p_i \ketbra{\phi_i}$ be a decomposition of $\tilde{\rho}$ as an ensemble of pure states. 
    We have $\bra{\vec{x}}\tilde{\rho}\ket{\vec{x}} = \sum_i p_i |\braket{\vec{x}}{\phi_i}|^2 = 1-\mu^2$. Referencing Eq.~\eqref{eq:KP_action}, here we have $\ket{\vec{a}} = \ket{\vec{x}}$ and $c_1 = |\braket{\vec{x}}{\phi_i}|$. Let $\ket{\psi_i}$ denote the output state when KP acts on $\ket{\phi_i}$ and we postselect on success, and let $q_i$ be the probability of success. We can see that $q_i \geq |c_1|^2 = |\braket{\vec{x}}{\phi_i}|^2$. We can also see that $|\braket{\vec{x}}{\psi_i}|^2 = |c_1|^2/q_i$.  The overall probability of success of KP is given by $\sum_i p_i q_i \geq \sum_i p_i|\braket{\vec{x}}{\phi_i}|^2 = 1-\mu^2$, showing the first claim of the lemma. Again referring to Eq.~\eqref{eq:KP_action}, we have $|\delta_1|^2+|\delta_2|^2 \leq \eta^2$, so we can also say $q_i \leq |\braket{\vec{x}}{\phi_i}|^2 +\eta^2(1-|\braket{\vec{x}}{\phi_i}|^2)$. The output state $\tilde{\rho}_{\rm out}$ is given by
    \begin{equation}
        \tilde{\rho}_{\rm out} = \frac{\sum_i p_i q_i \ketbra{\psi_i}}{\sum_i p_i q_i}
    \end{equation}
    where including the denominator ensures the state is normalized.  We now compute
    \begin{align}
        \bra{\vec{x}}\tilde{\rho}_{\rm out}\ket{\vec{x}} &= \frac{\sum_i p_i q_i |\braket{\vec{x}}{\psi_i}|^2}{\sum_i p_i q_i} = \frac{\sum_i p_i  |\braket{\vec{x}}{\phi_i}|^2}{\sum_i p_i q_i} =\frac{1-\mu^2}{\sum_i p_i q_i} \\
        &\geq \frac{1-\mu^2}{\sum_i p_i (|\braket{\vec{x}}{\phi_i}|^2 +\eta^2(1-|\braket{\vec{x}}{\phi_i}|^2))}= \frac{1-\mu^2}{\eta^2 + (1-\mu^2)(1-\eta^2)} = 1-\frac{\mu^2\eta^2}{1-\mu^2+\mu^2\eta^2}
    \end{align}
    The trace distance bound then follows from the general relationship $\frac{1}{2}\nrm{\ketbra{\vec{x}} - \sigma} \leq \sqrt{1-\bra{\vec{x}}\sigma \ket{\vec{x}}}$ for any state $\sigma$ \cite[Eq.~(9.110)]{nielsen2002QCQI}. 
\end{proof}

\subsection{Kernel reflection}

We can extend the technique of kernel projection to \textit{kernel reflection}. In kernel reflection, rather than preserving the kernel and (approximately) zeroing out its orthogonal complement, we preserve the kernel and (approximately) apply a $-1$ phase to the orthogonal complement.  We do so within the framework of QSVT by constructing a new polynomial $K_{\Delta,\ell}$. Again, we are given parameters $\Delta$ and $\eta$, and we choose $\ell$ as in Eq.~\eqref{eq:ell_KP}. We define the $2\ell$-degree polynomial by scaling and shifting $F_{\Delta,\ell}$ to cover the range $[-1,1]$:
\begin{equation}
    K_{\Delta,\ell}(x) = \frac{2F_{\Delta,\ell}(x)-1+F_{\Delta,\ell}(\Delta)}{1+F_{\Delta,\ell}(\Delta)} = \frac{2T_{\ell}(\frac{1+\Delta^2-2x^2}{1-\Delta^2})+2}{T_{\ell}(\frac{1+\Delta^2}{1-\Delta^2})+1}-1\,.
\end{equation}
$K_{\Delta,\ell}$ is a degree-$2\ell$, even polynomial. 

\begin{lemma}\label{lem:kernel_reflection_properties}
    The polynomial $K_{\Delta,\ell}(0)$ is guaranteed to satisfy 
\begin{enumerate}
    \item For all $x \in [-1,1]$, it holds that $|K_{\Delta,\ell}(x)| \leq 1$
    \item For all $x \in [\Delta, 1]$, it holds that $|K_{\Delta,\ell}(x)| \leq -1+\frac{4}{T_\ell(\frac{1+\Delta^2}{1-\Delta^2})+1} \leq -1 + \frac{4\eta}{1+\eta}$. 
    \item $K_{\Delta,\ell}(0) = 1$. 
\end{enumerate}
\end{lemma}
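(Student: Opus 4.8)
The plan is to reduce everything to the properties of $F_{\Delta,\ell}$ already proved in \autoref{lem:KP_poly_properties}, exploiting the fact that $K_{\Delta,\ell}$ is an increasing \emph{affine} function of $F_{\Delta,\ell}$. Write $T_\star := T_\ell\!\left(\tfrac{1+\Delta^2}{1-\Delta^2}\right)$; then by definition $F_{\Delta,\ell}(\Delta) = T_\ell(1)/T_\star = 1/T_\star$ (using $T_\ell(1)=1$), and set $c := 1/T_\star$. Since the Chebyshev argument $\tfrac{1+\Delta^2}{1-\Delta^2}$ exceeds $1$, we have $T_\star = \cosh\!\big(\ell\,\arccosh(\tfrac{1+\Delta^2}{1-\Delta^2})\big) \ge 1$, so $c \in (0,1]$, and Property~2 of \autoref{lem:KP_poly_properties} gives the sharper bound $c = T_\star^{-1} \le \eta$. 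With this notation, $K_{\Delta,\ell}(x) = \dfrac{2F_{\Delta,\ell}(x) - (1-c)}{1+c}$, which is strictly increasing in $F_{\Delta,\ell}(x)$ (the coefficient $\tfrac{2}{1+c}$ is positive, and $1+c \ge 1 > 0$), so every bound on $K_{\Delta,\ell}$ over an interval follows from the range of $F_{\Delta,\ell}$ there.

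I would first dispatch Property~3: substituting $x=0$ and using $F_{\Delta,\ell}(0)=1$ yields $K_{\Delta,\ell}(0) = \tfrac{2 - (1-c)}{1+c} = \tfrac{1+c}{1+c} = 1$. Next, for Properties~1 and~2 I would pin down the range of $F_{\Delta,\ell}$ slightly more precisely than Property~1 of \autoref{lem:KP_poly_properties} gives on its own. Because $F_{\Delta,\ell}$ is even it suffices to look at $x\ge 0$. For $0 \le x \le \Delta$, the Chebyshev argument $z(x) = \tfrac{1+\Delta^2 - 2x^2}{1-\Delta^2}$ in the numerator of $F_{\Delta,\ell}$ satisfies $z(x) \ge 1$, where $T_\ell$ is positive and monotone increasing, so $F_{\Delta,\ell}(x) \in (0,1]$ on this window, attaining $1$ only at $x=0$. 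For $\Delta \le x \le 1$, Property~2 of \autoref{lem:KP_poly_properties} gives $|F_{\Delta,\ell}(x)| \le c$. Combining, $-c \le F_{\Delta,\ell}(x) \le 1$ on all of $[-1,1]$, and $-c \le F_{\Delta,\ell}(x) \le c$ on $[\Delta,1]$.

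Feeding the first range through the affine map gives $K_{\Delta,\ell}(x) \in \big[\tfrac{-2c-(1-c)}{1+c},\, \tfrac{2-(1-c)}{1+c}\big] = [-1,\,1]$, i.e.\ $|K_{\Delta,\ell}(x)| \le 1$, which is Property~1 (and the QSVT-admissibility condition). Feeding the second range through gives, on $[\Delta,1]$, $\,-1 \le K_{\Delta,\ell}(x) \le \tfrac{2c-(1-c)}{1+c} = \tfrac{3c-1}{1+c} = -1 + \tfrac{4c}{1+c} = -1 + \tfrac{4}{T_\star+1}$, using $cT_\star = 1$; since $c \le \eta$ and $t \mapsto -1 + \tfrac{4t}{1+t}$ has positive derivative $\tfrac{4}{(1+t)^2}$, this is bounded by $-1 + \tfrac{4\eta}{1+\eta}$. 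This is Property~2, understood as the two-sided statement $-1 \le K_{\Delta,\ell}(x) \le -1 + \tfrac{4}{T_\star+1} \le -1 + \tfrac{4\eta}{1+\eta}$, which is exactly the ``reflection'' behaviour ($K_{\Delta,\ell}$ pinned near $-1$ away from the kernel) promised in the main text.

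The only step that is not pure bookkeeping is this refinement of the range of $F_{\Delta,\ell}$: recognizing that the tight estimate $|F_{\Delta,\ell}| \le 1/T_\star$ from Property~2 of \autoref{lem:KP_poly_properties} already controls $F_{\Delta,\ell}$ on $[-1,1]$ \emph{outside} the narrow window $(-\Delta,\Delta)$, while on that window $F_{\Delta,\ell}$ is automatically positive because $T_\ell$ is positive on $[1,\infty)$. Everything after that is a one-line evaluation of an increasing affine map at the endpoints of an interval, so I do not anticipate any real obstacle.
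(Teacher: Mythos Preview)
Your proof is correct and essentially the same as the paper's: both argue directly from the range of the Chebyshev numerator on $[-1,1]$ and on $[\Delta,1]$, with you factoring cleanly through the already-proved bounds on $F_{\Delta,\ell}$ and the affine relation $K_{\Delta,\ell} = \tfrac{2F_{\Delta,\ell} - (1-c)}{1+c}$, while the paper just rewrites $K_{\Delta,\ell}$ in the form $\tfrac{2T_\ell(\cdot)+2}{T_\star+1}-1$ and bounds $T_\ell(\cdot)$ directly. The extra observation you make---that on $[-\Delta,\Delta]$ one has $F_{\Delta,\ell}\ge 0$, which is what upgrades the a priori range $[-1,1]$ of $F_{\Delta,\ell}$ to $[-c,1]$ and hence yields $|K_{\Delta,\ell}|\le 1$ exactly---is precisely the content of the paper's remark that the minimum of the numerator on $[-1,1]$ is $-1$.
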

\begin{proof}
    Property 3 is readily verified by plugging in $x=0$. Property 1 follows from the fact that for $x \in [-1,1]$, the maximum of $T_\ell(\frac{1+\Delta^2-2x^2}{1-\Delta^2})$ is achieved at $x=0$  and the minimum is $-1$ (as noted in proof of \autoref{lem:KP_poly_properties}).  To verify Property 2, first note that for all $x\in[\Delta,1]$, $T_\ell(\frac{1+\Delta^2-2x^2}{1-\Delta^2}) \in [-1,1]$. Moreover, the denominator evaluates to 
    \begin{align}
         T_\ell\left(\frac{1+\Delta^2}{1-\Delta^2}\right) +1  = \cosh\left(\ell \arccosh\left(\frac{1+\Delta^2}{1-\Delta^2}\right)\right) + 1
    \end{align}
    We immediately see that if $\ell \geq \arccosh(\eta^{-1})/\arccosh((1+\Delta^2)/(1-\Delta^2))$, then the denominator is larger than $\eta^{-1}+1 = \eta/(1+\eta)$, which completes the verification. 
\end{proof}

We can apply QSVT using the polynomial $K_{\Delta,\ell}$. Suppose we have a $(1,a_B)$-block-encoding of $B$, which has no singular values in the interval $(0,\Delta)$. We apply kernel reflection to a state $\ket{\phi} = c_1 \ket{\vec{a}} + c_2\ket{\vec{b}}$, where $\vec{a}$ is in the kernel of $B$ and $\vec{b}$ is orthogonal to the kernel. Building on Eq.~\eqref{eq:KP_b_map}, we have
\begin{align}
    \ket{\vec{b}}\mapsto \sum_j w_j K_{\Delta,\ell}(\varsigma_j) \ket{\vec{u_j}} &= \sum_j w_j \frac{2F_{\Delta,\ell}(\varsigma_j)-1+F_{\Delta,\ell}(\Delta)}{1+F_{\Delta,\ell}(\Delta)} \ket{\vec{u_j}}  \\
    &= - \left(1-\frac{2F_{\Delta,\ell}(\Delta)}{1+F_{\Delta,\ell}(\Delta)}\right)\ket{\vec{b}}+ \frac{2}{1+F_{\Delta,\ell}(\Delta)}(\delta_1\ket{\vec{b}} + \delta_2\ket{\vec{b'}}) \\
    &= -(1-\delta_1') \ket{\vec{b}} + \delta_2' \ket{\vec{b'}}
\end{align}
where $\delta_1' = \frac{2F_{\Delta,\ell}(\Delta)+2\delta_1}{1+F_{\Delta,\ell}(\Delta)}$ and $\delta_2' = \frac{2\delta_2}{1+F_{\Delta,\ell}(\Delta)}$. We know that $\eta \geq F_{\Delta,\ell}(\Delta)$; if we take $\eta = F_{\Delta,\ell}(\Delta)$, we recover the statements in the main text: $\delta_1' = \frac{2\eta + 2\delta_1}{1+\eta}$ and $\delta_2' = \frac{2\delta_2}{1+\eta}$.
\begin{equation}\label{eq:KR_impact}
    U_{K_{\Delta,\ell}}^{(B)}(\ket{0}^{\otimes(a_B+1)} \otimes \ket{\phi}) = \underbrace{\left[\ket{0}^{\otimes(a_B+1)} \otimes (c_1  \ket{\vec{a}} - c_2(1-\delta_1')\ket{\vec{b}} + c_2\delta_2'\ket{\vec{b'}})\right]}_{\text{success}} + \underbrace{|c_2|\sqrt{2\delta_1'-\delta_1'^2-\delta_2'^2}\ket{{\perp}}}_{\text{failure}}\,.
\end{equation}
 From the bound $|\delta_1|\leq \eta$, we have $0 \leq \delta_1' \leq \frac{4\eta}{1+\eta}$. From the bound $\sqrt{\delta_1^2+\delta_2^2}$ we have  $\delta'_2 \leq \sqrt{\delta_1'(\frac{4\eta}{1+\eta}-\delta'_1)}$.

\section{QLSS given norm estimate: detailed analysis}\label{app:QLSS_with_norm_estimate}

First, we provide more specific details on how \autoref{algo:main_algo} is implemented by providing a quantum circuit. The full circuit is given in \autoref{fig:main_algo_circuit}. 
\begin{figure}[h!]
\centering
\scalebox{1.0}{
\begin{quantikz}[row sep={2em,between origins}, column sep=0.75em, align equals at=2]
    \lstick{$\ket{0}$}              & [0.5em]\slice[style=red]{}                & [3em]                      &  \slice[style=red]{}         &[3em]\targ{} \wire[d][2]{q}    & \meter{} \\
    \lstick{$\ket{0}^{\otimes(a+2)}$}& \qwbundle{a+2} & \gate[2]{U^{(G_t)}_{K_{\Delta,\ell}}}  & \meter{} \\
    \lstick{$\ket{\vec{e_n}}$}       & \qwbundle{s}   &                            &          &\gate{\ketbra{\vec{e_n}}} &               & \rstick{$\ket{\vec{\tilde{x}}}$ if all measurement outcomes $\ket{0}$}  
\end{quantikz}
}
\caption{Quantum circuit implementing \autoref{algo:main_algo}. The gate $U_{K_{\Delta,\ell}}^{(G_t)}$ is the QSVT circuit depicted in \autoref{fig:QSVT_unitary} that implements the polynomial $K_{\Delta,\ell}$ on the singular values of $G_t$, leveraging the $(1,a+1)$-block-encoding of $G_t$ from \autoref{fig:block_encoding_Gt}.  Vertical lines separate steps 1, 2, and 3 in the algorithm. Step 2 (KR) succeeds if all $a+2$ ancillas in the second register are measured in $\ket{0}$. Step 3 succeeds if the first ancilla is measured to be $\ket{0}$. Conditioned on success, the output is  $\ket{\vec{\tilde{x}}}$. }\label{fig:main_algo_circuit}
\end{figure}

\begin{theorem}\label{thm:QLSS_known_norm}
Suppose that $\nrm{\vec{b}} = 1$, that $\vec{b}$ is in the column space of $A$, and that all nonzero singular values of $A$ lie in the interval $[\kappa^{-1},1]$. Let $U_A$ be a $(1,a)$-block-encoding of $A$, and $U_{\vec{b}}$ be a state-preparation unitary for $\ket{\vec{b}}$.  Let $\vec{x}$ denote the unique vector of minimum norm $\nrm{\vec{x}}$ for which $A\vec{x} = \vec{b}$. 

Fix parameter choices $\eta > 0$, and $t \in [1,\kappa]$, letting $\theta_t = \arctan(\nrm{\vec{x}}/t)$.  Consider \autoref{algo:main_algo}, denoting the overall success probability by $p_{\rm succ}$, and the output state conditioned on success by $\ket{\vec{\tilde{x}}}$. Then, the state $\ket{\vec{\tilde{x}}}$ has no overlap with the kernel of the operator $A$. Furthermore, the following bounds are satisfied (the first three require $\eta/\cos(\theta_t)\leq 1$)
\begin{align}
    \nrm{\ket{\vec{x}} - \ket{\vec{\tilde{x}}}} &\leq \arcsin\left(\frac{\eta}{\cos(\theta_t)}\right) \label{eq:output_dist_bound}
    \\
    \frac{1}{2}\nrm{\ketbra{\vec{x}}-\ketbra{\vec{\tilde{x}}}}_1 &\leq \frac{\eta}{\cos(\theta_t)} \label{eq:trace_distance_bound}
    \\
    \lvert \braket{\vec{\tilde{x}}}{\vec{x}}\rvert &\geq \sqrt{1-\frac{\eta^2}{\cos^2(\theta_t)}} \label{eq:overlap_bound}
    \\
   \sin^2(2\theta_t)\left(\frac{1-\eta}{1+\eta}\right)^2 &\leq p_{\rm succ} \leq \sin^2(2\theta_t)+\frac{4\eta^2}{(1+\eta)^2}\sin^2(\theta_t)\label{eq:p_succ_bounds}
\end{align}
The query cost of the protocol is $\ell$ queries to controlled-$U_A$, $\ell$ queries controlled-$U_A^\dagger$, $2\ell$ queries to controlled-$U_{\vec{b}}$, and $2\ell$ queries to controlled-$U_{\vec{b}}^\dagger$ where
\begin{equation}\label{eq:ell_kappa}
    \ell = \left\lceil \frac{\arccosh(\eta^{-1})}{\arccosh\left(\frac{1+\kappa^{-2}}{1-\kappa^{-2}}\right)} \right\rceil  \leq \left\lceil \frac{\kappa}{2}\ln(\frac{2}{\eta}) \right\rceil 
\end{equation} 
The algorithm requires $a+3$ total ancilla qubits, including the $a$ ancilla qubits for $U_A$. 
\end{theorem}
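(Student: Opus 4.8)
The plan is to push the state through the three steps of \autoref{algo:main_algo} exactly as sketched in \autoref{sec:QLSS_given_norm_main}, then read off the four bounds and tally the cost; the only genuinely new ingredient is eliminating the $O(\eta^2)$ slack from the overlap estimate via a small optimization.

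First I would recall the decomposition $\ket{\vec{e_n}} = \cos(\theta_t)\ket{\vec{x_t}} + \sin(\theta_t)\ket{\vec{y_t}}$, with $\ket{\vec{x_t}}$ in the kernel of $G_t$ and $\ket{\vec{y_t}}$ orthogonal to it, as established in \autoref{sec:QLSS_given_norm_main}. Since step 2 is the QSVT circuit for the polynomial $K_{\kappa^{-1},\ell}$ of \autoref{lem:kernel_reflection_properties}, Eq.~\eqref{eq:KR_impact} gives the post-step-2 state of Eq.~\eqref{eq:post-step2}, where $\ket{\vec{z}}$ is a unit vector orthogonal to $\ket{\vec{x_t}}$, $\ket{\vec{y_t}}$ and to $\ker G_t$. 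For the sharp analysis I would carry along the underlying quantities: writing $\zeta = F_{\kappa^{-1},\ell}(\kappa^{-1})\in(0,\eta]$ (using Property~2 of \autoref{lem:KP_poly_properties}), \autoref{app:KP_KR} gives $2-\delta_1' = 2(1-\delta_1)/(1+\zeta)$ and $\delta_2' = 2\delta_2/(1+\zeta)$ with $\delta_1^2+\delta_2^2\le\zeta^2$. Re-expressing $\ket{\vec{x_t}},\ket{\vec{y_t}}$ in the $\{\ket{\vec{x}},\ket{\vec{e_n}}\}$ basis, collecting the $\ket{\vec{e_n}}$ coefficient and discarding it (step 3) reproduces Eq.~\eqref{eq:post-step3}: the unnormalized output is proportional to $(2-\delta_1')\sin(\theta_t)\cos(\theta_t)\ket{\vec{x}}+\delta_2'\sin(\theta_t)\ket{\vec{z}}$. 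Because $\ker A$ (embedded as $(\vec v,0)$) sits inside $\ker A_t\subseteq\ker G_t$, both $\ket{\vec{x}}$ and $\ket{\vec{z}}$ are orthogonal to $\ker A$, giving the claim that $\ket{\vec{\tilde{x}}}$ has no overlap with $\ker A$.

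Next I would read off $p_{\rm succ}$ as the squared norm of that unnormalized vector, namely $\tfrac14(2-\delta_1')^2\sin^2(2\theta_t)+(\delta_2')^2\sin^2(\theta_t)$. For the lower bound of Eq.~\eqref{eq:p_succ_bounds}, $\delta_1\le\zeta$ gives $(2-\delta_1')/2=(1-\delta_1)/(1+\zeta)\ge(1-\zeta)/(1+\zeta)\ge(1-\eta)/(1+\eta)$, and dropping the nonnegative $(\delta_2')^2$ term finishes it; for the upper bound, $\delta_1'\ge0$ gives $(2-\delta_1')^2\le4$, and $\delta_1^2+\delta_2^2\le\zeta^2$ gives $(\delta_2')^2\le 4\zeta^2/(1+\zeta)^2\le 4\eta^2/(1+\eta)^2$. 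For the overlap, dividing the $\ket{\vec{x}}$-coefficient by $\sqrt{p_{\rm succ}}$ and cancelling $\sin^2(\theta_t)$ yields the exact identity $1-|\braket{\vec{x}}{\vec{\tilde{x}}}|^2 = u/(\cos^2(\theta_t)+u)$ with $u=\delta_2^2/(1-\delta_1)^2$. The main obstacle — and the step that kills the $O(\eta^2)$ term that the naive estimate in \autoref{sec:QLSS_given_norm_main} incurs — is to maximize $u$ over $\{\delta_1^2+\delta_2^2\le\zeta^2\}$: an elementary computation (set $\delta_2^2=\zeta^2-\delta_1^2$, differentiate $(\zeta^2-\delta_1^2)/(1-\delta_1)^2$) shows the maximum is $\zeta^2/(1-\zeta^2)$, attained at $\delta_1=\zeta^2$. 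Since $u\mapsto u/(\cos^2\theta_t+u)$ is increasing, $1-|\braket{\vec{x}}{\vec{\tilde{x}}}|^2\le\zeta^2/\big((1-\zeta^2)\cos^2\theta_t+\zeta^2\big)$, and since $(1-\zeta^2)\cos^2\theta_t+\zeta^2=\cos^2\theta_t+\zeta^2\sin^2\theta_t\ge\cos^2\theta_t$ and $\zeta\le\eta$, this is at most $\eta^2/\cos^2\theta_t$, which is Eq.~\eqref{eq:overlap_bound}.

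Finally I would convert the overlap bound and count resources. Eq.~\eqref{eq:trace_distance_bound} is immediate from $\tfrac12\nrm{\ketbra{\vec{x}}-\ketbra{\vec{\tilde{x}}}}_1=\sqrt{1-|\braket{\vec{x}}{\vec{\tilde{x}}}|^2}$. For Eq.~\eqref{eq:output_dist_bound}, the $\ket{\vec{x}}$-coefficient is real and positive (as $\theta_t\in(0,\pi/2)$ and $\delta_1'\le 4\zeta/(1+\zeta)<2$), so we may take $\braket{\vec{x}}{\vec{\tilde{x}}}=\cos\alpha$ with $\alpha\in[0,\pi/2]$; then $\nrm{\ket{\vec{x}}-\ket{\vec{\tilde{x}}}}=2\sin(\alpha/2)\le\alpha=\arcsin(\sin\alpha)\le\arcsin(\eta/\cos\theta_t)$, using monotonicity of $\arcsin$ on $[0,1]$ together with the hypothesis $\eta/\cos\theta_t\le1$. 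For the cost, step 2 is the circuit $U_{K_{\kappa^{-1},\ell}}^{(G_t)}$ of \autoref{fig:QSVT_unitary}, which makes $\ell$ calls each to $U_{G_t}$ and $U_{G_t}^\dagger$; by the block-encoding of \autoref{fig:block_encoding_Gt}, each such call costs one controlled query to $U_A$ (resp.\ $U_A^\dagger$) and one controlled query each to $U_{\vec{b}}$ and $U_{\vec{b}}^\dagger$, giving the stated totals $\ell,\ell,2\ell,2\ell$; step 3 is a single multi-controlled Toffoli and uses no queries; $\ell$ is the value of Eq.~\eqref{eq:ell_KP} with $\Delta=\kappa^{-1}$; and the ancilla count $a+3$ is read directly from \autoref{fig:main_algo_circuit}. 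Everything other than the overlap optimization is routine bookkeeping.
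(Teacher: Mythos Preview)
Your proof is correct and follows the paper's overall structure (track the state through the three steps, then read off the bounds), but your treatment of the overlap bound Eq.~\eqref{eq:overlap_bound} takes a genuinely different route. The paper introduces shifted variables $\xi_1=\delta_1'/2-M$, $\xi_2=\delta_2'/2$ with $M=\eta/(1+\eta)$ and $\xi_1^2+\xi_2^2\le M^2$, makes the auxiliary substitution $\xi_1\mapsto\gamma\xi_1$ (with $\gamma=1/\cos\theta_t$), and then invokes a planar geometry argument: the point $C=(\gamma\xi_1,\gamma\xi_2)$ lies on a circle of radius $M\gamma$ centered at $D=(1-M,0)$, and the overlap equals $\cos(\angle BAC)$ for a fixed $A$, which is minimized when $AC$ is tangent to the circle (see \autoref{fig:geometric_argument}). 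You instead write the exact identity $1-|\braket{\vec{x}}{\vec{\tilde{x}}}|^2=u/(\cos^2\theta_t+u)$ with $u=\delta_2^2/(1-\delta_1)^2$, maximize $u$ over the disk $\delta_1^2+\delta_2^2\le\zeta^2$ by a one-variable calculus computation (optimum at $\delta_1=\zeta^2$, giving $u_{\max}=\zeta^2/(1-\zeta^2)$), and then bound the denominator by $\cos^2\theta_t$. Your argument is shorter, entirely algebraic, and avoids both the auxiliary inequality $\xi_1\mapsto\gamma\xi_1$ and the picture; the paper's geometric argument has the advantage that it simultaneously identifies the extremal angle $\varphi=\arcsin(\eta/\cos\theta_t)$ used for Eq.~\eqref{eq:output_dist_bound}, whereas you recover that bound separately via $2\sin(\alpha/2)\le\alpha$. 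The remaining items (success probability, trace distance, kernel-orthogonality of $\ket{\vec{z}}$, query and ancilla counts) match the paper's proof essentially line for line.
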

\begin{proof}
    The stated query cost follows from the observations in \autoref{app:KP_KR}. This cost is incurred entirely during Step 2 of the algorithm (KR). 

    Note that since $\vec{x}$ is defined as the solution to $A\vec{x} = \vec{b}$ with minimum Euclidean norm $\nrm{\vec{x}}$, we have that $\vec{x}$ is orthogonal to the kernel of $A$. If it were not, we could project $\vec{x}$ onto the orthogonal complement of the kernel and obtain a vector $\vec{x'}$ for which $A\vec{x'} = \vec{b}$ and $\nrm{\vec{x'}}\leq \nrm{\vec{x}}$, leading to a contradiction. 
    
    We now justify the stated bounds by tracking the state of throughout the algorithm. Recall that $\ket{\vec{x_t}} = \sin(\theta_t)\ket{\vec{x}} + \cos(\theta_t)\ket{\vec{e_n}}$. Define $\ket{\vec{y_t}}= -\cos(\theta_t)\ket{\vec{x}} + \sin(\theta_t)\ket{\vec{e_n}}$ orthogonal to $\ket{\vec{x_t}}$. Note that $\ket{\vec{y_t}}$ is orthogonal to the kernel of $A$.  In Step 1, we prepare $\ket{\vec{e_n}}$, which  decomposes as
    \begin{equation}
        \ket{\vec{e_n}} = \cos(\theta_t)\ket{\vec{x_t}} + \sin(\theta_t)\ket{\vec{y_t}}
    \end{equation}
In Step 2, we apply KR for the matrix $G_t = Q_{\vec{b'}}A_t$, which preserves $\ket{\vec{x_t}}$ and approximately flips the sign on $\ket{\vec{y_t}}$.  That is, conditioned on success of KR, using Eq.~\eqref{eq:KR_impact}, the resulting state is (up to normalization)
    \begin{equation}\label{eq:post_KP}
        \cos(\theta_t)\ket{\vec{x_t}} - (1-\delta_1')\sin(\theta_t)\ket{\vec{y_t}} +  \delta_2' \sin(\theta_t)\ket{\vec{z}}
    \end{equation}
    where $\ket{\vec{z}}$ is some state orthogonal to both $\ket{\vec{x_t}}$ and $\ket{\vec{y_t}}$, $\delta_1' \leq 4\eta/(1+\eta)$, and $\delta_2' \leq \sqrt{\delta_1'(4\eta/(1+\eta)-\delta_1')}$. Furthermore, since the kernel of $A$ is contained in the kernel of $G_t = Q_{\vec{b'}}A_t$ and $\vec{y_t}$ is orthogonal to the kernel of $A$, the image of $\vec{y_t}$ under any QSVT sequence will remain orthogonal to the kernel of $A$. In particular, this implies that $\ket{\vec{z}}$ is orthogonal to the kernel of $A$.
    
   In Step 3, we project onto the complement of $\ket{\vec{e_n}}$. Note that $\ket{\vec{z}}$ is orthogonal to $\ket{\vec{e_n}}$ and to $\ket{\vec{x}}$, both of which lie in the span of $\{\ket{\vec{x_t}},\ket{\vec{y_t}}\}$. The state that results is (up to normalization)
    \begin{align}
        \ket{\vec{\tilde{x}}} &\propto \Big[\sin(\theta_t)\cos(\theta_t) + (1-\delta_1')\cos(\theta_t)\sin(\theta_t)\Big]\ket{\vec{x}} +  \delta_2'\sin(\theta_t)\ket{\vec{z}} \\
        &=\left(1-\frac{\delta_1'}{2}\right)\sin(2\theta_t)\ket{\vec{x}} + \delta_2' \sin(\theta_t) \ket{\vec{z}}\label{eq:post_step3_KR}
    \end{align}
    This state is orthogonal to the kernel of $A$ since both $\vec{x}$ and $\vec{z}$ are orthogonal to the kernel of $A$. 

    We verify the bounds in reverse order. The overall probability of success is given by the squared norm of the right-hand side of Eq.~\eqref{eq:post_step3_KR}. The lower bound in Eq.~\eqref{eq:p_succ_bounds} follows from the observation that the squared norm is minimized when $\delta_1'=4\eta/(1+\eta)$ and $\delta_2'=0$.
    The upper bound in Eq.~\eqref{eq:p_succ_bounds} is a loose upper bound generated by maximizing the norm of each term in Eq.~\eqref{eq:post_step3_KR} individually, without enforcing $\delta_2'^2 \leq \delta_1'(4\eta/(1+\eta)-\delta_1') \leq 4\eta^2/(1+\eta)^2$. These terms achieve maximum norm at $\delta_1'=0$ and $\delta_2' = 2\eta/(1+\eta)$, yielding the stated upper bound. 
    
    To show Eq.~\eqref{eq:overlap_bound}, to condense some notation, define $M=\eta/(1+\eta)$, $
    \xi_1 = \delta'_1/2-M$,  $\xi_2 = \delta'_2/2$, and $\gamma=1/\cos(\theta_t)$. We have $\xi_1^2 + \xi_2^2 \leq M^2$. After normalizing $\ket{\vec{\tilde{x}}}$ appropriately, and noting that $\delta_2'\sin(\theta_t) = \sin(2\theta_t)\gamma \xi_2$, we see that the overlap is given by
    \begin{align}
        \lvert \braket{\vec{\tilde{x}}}{\vec{x}}\rvert &= \frac{1-M-\xi_1}{\sqrt{(1-M-\xi_1)^2 + \gamma^2\xi_2^2}} \\
         &\geq \frac{1-M-\gamma\xi_1}{\sqrt{(1-M-\gamma\xi_1)^2 + \gamma^2\xi_2^2}}
    \end{align}
    where imposing the inequality puts the expression in a form that easier to minimize.  The pair $(\xi_1,\xi_2)$ where it is minimized will be among the points for which $|\xi_1|^2 + |\xi_2|^2$ achieves its maximum of $M^2$, so we let $\xi_1 = M \sin(\varphi)$ and $\xi_2 = M \cos(\varphi)$ for some $\varphi$ to be specified later. Our lower bound on the overlap simplifies to
    \begin{equation}
        \frac{1-M - M \gamma\sin(\varphi)}{\sqrt{(1-M-M\gamma\sin(\varphi))^2 + \gamma^2 M^2\cos(\varphi))^2}}
    \end{equation}
    We can upper bound the final expression by a geometric argument, aided by \autoref{fig:geometric_argument}. 

\begin{figure}[ht]
    \centering
\scalebox{0.8}{
\begin{tikzpicture}[scale=0.8,thick,baseline=0mm]
\def\radius{4}
\def\arcRadius{\radius*0.35}
\def\arcRadiusL{\radius*0.2}
\def\pointRadius{0.14}
\def\sinPhi{0.34}
\def\cosPhi{0.9397}
\def\phiDegrees{20}
\def\sinPhiOverTwo{0.174}
\def\cosPhiOverTwo{0.9848}
%
\coordinate (D) at (0,0);
\coordinate (C) at (-\radius*\sinPhi, \radius*\cosPhi);
\coordinate (B) at (-\radius*\sinPhi, 0);
\coordinate (A) at (-\radius/\sinPhi, 0);
%
\draw[black,thick] (D)--(C);
\draw[black,thick] (D)--(B);
\draw[black,thick] (B)--(C);
\draw[black,thick] (A)--(B);
\draw[black,thick] (A)--(C);
%
\draw[dashed, thick] (D) circle (\radius);
%
\draw[fill] (A) circle(\pointRadius);
\draw[fill] (B) circle(\pointRadius);
\draw[fill] (C) circle(\pointRadius);
\draw[fill] (D) circle(\pointRadius);
%
\node at (-\radius/\sinPhi-\radius*0.2, 0) {\LARGE $A$};
\node at (0, -\radius*0.15) {\LARGE $
D$};
\node at (-\radius*\sinPhi, -\radius*0.15) {\LARGE $B$};
\node at (-\radius*\sinPhi*1.15, \radius*\cosPhi*1.15) {\LARGE $C$};
%
\draw (-\radius/\sinPhi+\arcRadiusL/\sinPhi, 0) arc(0:\phiDegrees:\arcRadiusL/\sinPhi);
\draw (-\radius*\sinPhi, \radius*\cosPhi-\arcRadius) arc(270:270+\phiDegrees:\arcRadius);
\node at (-\radius/\sinPhi+1.17*\arcRadiusL*\cosPhiOverTwo/\sinPhi,1.17*\arcRadiusL*\sinPhiOverTwo/\sinPhi) {\Large $\varphi$};
\node at (-\radius*\sinPhi + \arcRadius*\sinPhiOverTwo*1.25,\radius*\cosPhi -1.25*\arcRadius*\cosPhiOverTwo) {\Large $\varphi$};
\end{tikzpicture}
}
    \caption{Geometric argument for rigorous lower bound on overlap, see text. The angle $\angle BCD$ is essentially defined as $\varphi$. We then deduce that if $C$ is constrained to lie on the circle, then the maximum value of $\angle BAC$ is $\varphi$ which is achieved when $AC$ is tangent to the circle.  }
    \label{fig:geometric_argument}
\end{figure}
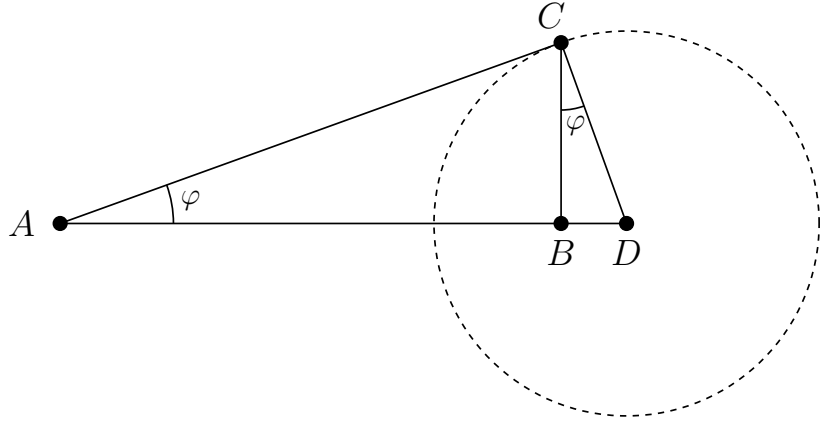
    Let $ABC$ be a triangle, with $AB$ of length $1-M-M\gamma \sin(\varphi)$, $BC$ of length $\gamma M\cos(\varphi)$ and $\angle ABC$ a right angle. Then the lower bound on the overlap is equal to the cosine of the angle $\angle BAC$. Extend the line segment $AB$ to a point $D$ for which the length of $AD$ is $1-M$ and the length of $BD$ is $M\gamma\sin(\varphi)$. Then the length of $DC$ is $M\gamma$, independent of $\varphi$, and the locus of points $C$ as $\varphi$ varies is a circle of radius $M\gamma$ centered at $D$. The angle $\angle BAC$ is maximized when $AC$ is tangent to this circle, in which case $AC$ is perpendicular to $DC$. Noting similar triangles, we see that the maximum of $\angle BAC$ is precisely $\varphi$, and in this case, $\sin(\varphi)$ is $M\gamma/(1-M)$.  This implies
    \begin{align}
        \lvert \braket{\vec{\tilde{x}}}{\vec{x}}\rvert 
        \geq \sqrt{1-\frac{M^2\gamma^2}{(1-M)^2}} = \sqrt{1-\frac{\eta^2}{\cos^2(\theta_t)}}\,.
    \end{align}
    This bound implies Eq.~\eqref{eq:trace_distance_bound} by the relationship $\nrm{\ketbra{\vec{x}}-\ketbra{\tilde{\vec{x}}}}_1 = 2\sqrt{1-|\braket{\vec{x}}{\vec{\tilde{x}}}|^2}$. Finally, to show Eq.~\eqref{eq:output_dist_bound}, we return to the geometric argument. We note that our method guarantees that the overlap $\braket{\vec{\tilde{x}}}{\vec{x}}$ is a positive real number, and thus, the lower bound on the overlap $\braket{\vec{\tilde{x}}}{\vec{x}} = \cos(\varphi)$ implies that the length $\nrm{\ket{\vec{x}} - \ket{\vec{\tilde{x}}}}$ is upper bounded by the angle $\varphi = \arcsin(\eta/\cos(\theta_t))$. 
\end{proof}

\begin{corollary}\label{cor:known_norm_expected_complexity}
    Fix a known constant $\beta \geq 1$ and a value $t \in [1,\kappa]$.  Suppose $\nrm{\vec{x}}$ satisfies the promise $\nrm{\vec{x}} \in [\beta^{-1} t, \beta t]$ (the case where $\nrm{\vec{x}}$ is known exactly corresponds to $\beta=1$). Then, the QLSP can be solved with an expected number $Q$ of combined controlled queries to $U_A$ or $U_A^\dagger$, and $2Q$ combined controlled queries to $U_{\vec{b}}$ or $U_{\vec{b}}^\dagger$, where 
    \begin{equation}
        Q \leq \frac{(\beta^2+1)^2}{\beta^2}\left(\frac{1+\frac{\varepsilon}{2\sqrt{\beta^2+1}}}{1-\frac{\varepsilon}{2\sqrt{\beta^2+1}}}\right)^2\left\lceil\frac{\kappa}{2}\ln\left(\frac{4\sqrt{\beta^2+1}}{\varepsilon}\right)\right\rceil\,.
    \end{equation} 
\end{corollary}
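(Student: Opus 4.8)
The plan is to apply \autoref{thm:QLSS_known_norm} with the given $t$ and precision parameter $\eta = \varepsilon/(2\sqrt{\beta^2+1})$, and to run \autoref{algo:main_algo} repeatedly until it first reports success. Since the runs are independent and each succeeds with probability $p_{\rm succ}$, the number of runs is a geometric random variable with mean $1/p_{\rm succ}$, so each expected query count equals $1/p_{\rm succ}$ times the corresponding per-run count. By \autoref{thm:QLSS_known_norm} a single run costs $\ell$ controlled queries to each of $U_A,U_A^\dagger$ (hence $2\ell$ combined) and $2\ell$ to each of $U_{\vec b},U_{\vec b}^\dagger$ (hence $4\ell$ combined), with $\ell \le \lceil\tfrac{\kappa}{2}\ln(2/\eta)\rceil = \lceil\tfrac{\kappa}{2}\ln(4\sqrt{\beta^2+1}/\varepsilon)\rceil$. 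So, writing $Q$ for the expected combined $U_A/U_A^\dagger$ count, we get $Q = 2\ell/p_{\rm succ}$ and the combined $U_{\vec b}/U_{\vec b}^\dagger$ count is $2Q$; it then remains to (i) check the output accuracy and (ii) lower-bound $p_{\rm succ}$, in both cases uniformly over the promised range of $\nrm{\vec x}$.

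For accuracy, I would use the promise $\nrm{\vec x}\le\beta t$ to get $\cos(\theta_t)=t/\sqrt{\nrm{\vec x}^2+t^2}\ge (\beta^2+1)^{-1/2}$, whence $\eta/\cos(\theta_t)\le\eta\sqrt{\beta^2+1}=\varepsilon/2$, which is below $1$ (as $\varepsilon<2$), so the hypotheses of \eqref{eq:trace_distance_bound} hold. That bound then gives, conditioned on success, $\tfrac12\nrm{\ketbra{\vec x}-\ketbra{\vec{\tilde x}}}_1\le\varepsilon/2<\varepsilon$, so the output solves the QLSP to error $\varepsilon$.

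For the success probability, set $r=\nrm{\vec x}/t$, which the promise places in $[\beta^{-1},\beta]$, and note $\sin^2(2\theta_t)=4r^2/(1+r^2)^2=:g(r)$. A one-line check ($g'(r)=8r(1-r^2)/(1+r^2)^3$ together with the symmetry $g(r)=g(1/r)$) shows $g$ is unimodal with maximum at $r=1$, so on $[\beta^{-1},\beta]$ its minimum is attained at the endpoints and equals $4\beta^2/(\beta^2+1)^2$. Feeding this worst case into the lower bound of \eqref{eq:p_succ_bounds} yields $p_{\rm succ}\ge\frac{4\beta^2}{(\beta^2+1)^2}\left(\frac{1-\eta}{1+\eta}\right)^2$, hence $1/p_{\rm succ}\le\frac{(\beta^2+1)^2}{4\beta^2}\left(\frac{1+\eta}{1-\eta}\right)^2$. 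Multiplying by the per-run cost $2\ell$ and inserting the bound on $\ell$ with $\eta=\varepsilon/(2\sqrt{\beta^2+1})$ gives $Q\le\frac{(\beta^2+1)^2}{2\beta^2}\left(\frac{1+\eta}{1-\eta}\right)^2\lceil\tfrac{\kappa}{2}\ln(4\sqrt{\beta^2+1}/\varepsilon)\rceil$, which implies (and is in fact a factor of two stronger than) the claimed bound.

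I do not expect a genuinely hard step here. The only point requiring care is that both the accuracy guarantee and the success-probability guarantee must be enforced at the \emph{worst} admissible value of $\nrm{\vec x}$ in $[\beta^{-1}t,\beta t]$, which is exactly what the uniform bounds $\cos(\theta_t)\ge(\beta^2+1)^{-1/2}$ and $\sin^2(2\theta_t)\ge 4\beta^2/(\beta^2+1)^2$ supply; the remainder is substitution into \autoref{thm:QLSS_known_norm}. A minor bookkeeping subtlety worth double-checking is which oracle calls are counted as ``combined'' (so that the $U_{\vec b}/U_{\vec b}^\dagger$ count comes out as exactly $2Q$), and the harmless factor-of-two slack between the clean stated bound and the sharper one the argument actually produces.
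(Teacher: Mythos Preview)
Your proposal is correct and follows essentially the same approach as the paper's proof: choose $\eta=\varepsilon/(2\sqrt{\beta^2+1})$, bound $\cos(\theta_t)\ge(\beta^2+1)^{-1/2}$ and $\sin^2(2\theta_t)\ge 4\beta^2/(\beta^2+1)^2$ uniformly over the promise interval, invoke \autoref{thm:QLSS_known_norm} for the accuracy and success-probability bounds, and multiply the per-run cost $2\ell$ by $1/p_{\rm succ}$. Your observation about the factor-of-two slack between what the argument yields and the stated bound is accurate.
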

\begin{proof}
As previously, let $\theta_t = \arctan(\nrm{\vec{x}}/t)$, implying $\cos(\theta_t) =t/\sqrt{t^2+\nrm{\vec{x}}^2} \geq 1/\sqrt{\beta^2+1}$ and $\sin(2\theta_t) = 2t\nrm{\vec{x}}/(t^2+\nrm{\vec{x}}^2)\geq 2\beta/(\beta^2+1)$. Choose $\eta = \varepsilon/(2\sqrt{\beta^2+1})$, so that the distance in Eq.~\eqref{eq:trace_distance_bound} is less than $\varepsilon$. Run \autoref{algo:main_algo}. The success probability is lower bounded in \autoref{thm:QLSS_known_norm} by $p_{\rm succ}\geq[2\beta(1-\eta)]^2/[(\beta^2+1)(1+\eta)]^2$. The protocol must be repeated an expected $1/p_{\rm succ}$ times to observe success, and the query complexity of each run of the protocol contributes a factor $2\ell$ to $Q$, where $\ell$ is given in Eq.~\eqref{eq:ell_kappa}. 
\end{proof}

\section{Estimating the norm: detailed analysis}\label{app:estimating_norm}

In \autoref{sec:estimate_norm_binary_search} of the main text, we showed that it was possible to learn a $2$-approximation to the norm with high probability in complexity $O(\kappa \log\log(\kappa)\log\log\log(\kappa))$. The method sketched there requires many repetitions at each step of the binary search, which would contribute unfavorably toward the constant prefactors and the practical performance. We still think the noisy binary search may be a promising practical method, but here we examine an alternative method with $O(\kappa\log(\kappa)\log\log(\kappa))$ complexity that more closely resembles the exhaustive search method from \autoref{sec:estimate_norm_exhaustive_search}. We then show how to reduce it to $O(\kappa\sqrt{\log(\kappa)}\log\log(\kappa))$ with amplitude amplification. In both cases we give explicit expressions for the resource cost. Note that $\sqrt{\log(\kappa)}$ is not much larger than $\ln\ln(\kappa)$ for many practical values of $\kappa$. For example $\ln(\ln(10^6))=2.6$, whereas $\sqrt{\ln(10^6)} = 3.7$.

The idea---implemented in \autoref{algo:random_t_learn_norm}---is to guess random values of $t$, and then run \autoref{algo:main_algo} using that value of $t$. If it succeeds, output $t$, and if it fails, output fail. Choices of $t$ that are closer to $\nrm{\vec{x}}$ will be output more often since they lead to larger values of $p_{\rm succ}$ for \autoref{algo:main_algo} (see \autoref{fig:succ_prob}). When it succeeds, \autoref{algo:random_t_learn_norm} also outputs a state $\ket{\vec{\tilde{x}}}$. This might be regarded as a solution to the QLSP, but its complexity would have a term of the form $O(\kappa\log(\kappa)\log(1/\varepsilon))$. This can be avoided by instead viewing the output of \autoref{algo:random_t_learn_norm} as an ansatz state, and then refining it with KP. We analyze this two-step algorithm later in \autoref{algo:random_t_full_QLSS}.

In \autoref{algo:random_t_learn_norm}, we are given as input an interval $[\mathcal{L},\mathcal{R}]$ that is known to contain $\nrm{\vec{x}}$---equivalently $\ln(\nrm{\vec{x}}) \in [\ln(\mathcal{L}), \ln(\mathcal{R})]$. We choose $\tau$ essentially uniformly at random from the interval $[\ln(\mathcal{L}),\ln(\mathcal{R})]$, but we oversample the edge points $\tau = \ln(\mathcal{L})$ and $\tau = \ln(\mathcal{R})$. The reason for this is related to the fact that when we draw $\tau$ uniformly at random from the interval $[\ln(\mathcal{L}),\ln(\mathcal{R})]$, the average distance between $\tau$ and $\ln(\nrm{\vec{x}})$ is larger when $\nrm{\vec{x}}$ is near a boundary of the interval than when it is not near the boundary. By oversampling the edge points, we compensate for this effect.

\begin{algorithm}
\caption{Estimating the norm by randomly choosing $t$ + postselection}\label{algo:random_t_learn_norm}
\DontPrintSemicolon
\SetAlgoLined
\LinesNumbered
\Input{$(A, \vec{b}, [\mathcal{L},\mathcal{R}], \kappa, \eta)$}
\Output{$(t, \ket{\vec{\tilde{x}}})$ with probability $q_{\rm succ}$, and ``fail'' with probability $1-q_{\rm succ}$}
\BlankLine
Choose $\tau$ uniformly at random from $[\ln(\mathcal{L})-\frac{1}{2},\ln(\mathcal{R})+\frac{1}{2}]$\;\label{line:choose_tau}
\If{$\tau \leq \ln(\mathcal{L})$}{
Set $\tau = \ln(\mathcal{L})$\;
}
\ElseIf{$\tau \geq \ln(\mathcal{R})$}{
Set $\tau = \ln(\mathcal{R})$\;
}
Set $t = e^\tau$\;
Run \autoref{algo:main_algo} with parameters $(A, \vec{b},\kappa, \eta, t)$. If it fails, output ``fail.'' If it succeeds, denote its output by $\ket{\vec{\tilde{x}}}$\;\label{line:run_main_algo}
\Return{$(t , \ket{\vec{\tilde{x}}})$}
\end{algorithm}

\begin{lemma}\label{lem:estimate_norm_random_t}
    Let $\eta > 0$, $\mathcal{L}$, and $\mathcal{R}$ be fixed and known parameters and suppose $\nrm{\vec{x}} \in [\mathcal{L},\mathcal{R}]$. Let $q_{\rm succ}$ denote the success probability of \autoref{algo:random_t_learn_norm} on parameters $(A, \vec{b}, [\mathcal{L},\mathcal{R}], \kappa, \eta)$, and when it succeeds denote its output by $(t,\ket{\vec{\tilde{x}}})$.     The probability of success satisfies
    \begin{equation}\label{eq:q_succ_lowerbound}
        q_{\rm succ} \geq \frac{(1-\eta)^2}{(1+\eta)^2(\ln(\mathcal{R}/\mathcal{L})+1)}
    \end{equation}
    and the query cost of the algorithm is the same as that of \autoref{thm:QLSS_known_norm}.
    Postselecting on success, for any $\beta$ the output $t$ satisfies 
        \begin{align}
            \Pr\left[t \not\in [\beta^{-1}\nrm{\vec{x}}, \beta \nrm{\vec{x}}] \right]
            \leq{} \frac{(1+\eta)^2}{(1-\eta)^2}\frac{4}{\beta^2+1} + \frac{2\eta^2(2\ln(\mathcal{R}/\mathcal{L})+1-4\ln(\beta))}{(1-\eta)^2}\,.\label{eq:prob_deviate_more_than_beta}
        \end{align}
    When it succeeds it also outputs the same $\ket{\vec{\tilde{x}}}$ that appears in \autoref{thm:QLSS_known_norm} for the corresponding value of $t$ and $\eta$. Denoting the ensemble of $\ketbra{\vec{\tilde{x}}}$ by the mixed state $\tilde{\rho}$, we have
    \begin{equation}
        \bra{\vec{x}}\tilde{\rho}\ket{\vec{x}} \geq 1- \frac{2\eta^2}{(1-\eta)^2}\left(\frac{3}{2} + \ln\left(\frac{\mathcal{R}^2+\mathcal{L}^2}{2\mathcal{L}^2}\right) \right)
    \end{equation}
\end{lemma}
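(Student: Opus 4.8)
The statement bundles four claims about Algorithm~\ref{algo:random_t_learn_norm}: (i) the lower bound on $q_{\rm succ}$, (ii) the query cost, (iii) the tail bound on whether $t$ is a $\beta$-approximation, conditioned on success, and (iv) the lower bound on $\bra{\vec{x}}\tilde\rho\ket{\vec{x}}$. The query cost is immediate: conditioned on success or not, Algorithm~\ref{algo:random_t_learn_norm} runs Algorithm~\ref{algo:main_algo} exactly once, so its cost is literally that of Theorem~\ref{thm:QLSS_known_norm}. For the other three I would set up notation once: let $\chi = \ln(\nrm{\vec{x}})$, let $\ell_0 = \ln(\mathcal{L})$, $r_0 = \ln(\mathcal{R})$, and let $T$ denote the random variable $\tau$ after the clamping in lines 2--6. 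The distribution of $T$ is: a point mass of weight $1/2\cdot 1/(r_0-\ell_0+1)$ at each endpoint $\ell_0$ and $r_0$, plus a uniform density $1/(r_0-\ell_0+1)$ on $(\ell_0, r_0)$. By Eq.~\eqref{eq:p_succ_bounds_main} (or the sharper Eq.~\eqref{eq:p_succ_bounds}), conditioned on $T=\tau$ the success probability is $p_{\rm succ}(\tau)$, which lies between $\frac{(1-\eta)^2}{(1+\eta)^2}\sin^2(2\theta_{e^\tau})$ and $\sin^2(2\theta_{e^\tau}) + \frac{4\eta^2}{(1+\eta)^2}$, where $\sin^2(2\theta_{e^\tau}) = \frac{4 e^{2\tau+2\chi}}{(e^{2\tau}+e^{2\chi})^2} = \operatorname{sech}^2(\tau-\chi)$. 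So everything reduces to integrals of $\operatorname{sech}^2$ and $\operatorname{sech}^4$ against the law of $T$.

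\textbf{Claim (i).} Write $q_{\rm succ} = \EV[p_{\rm succ}(T)] \geq \frac{(1-\eta)^2}{(1+\eta)^2}\,\EV[\operatorname{sech}^2(T-\chi)]$. I would lower-bound $\EV[\operatorname{sech}^2(T-\chi)]$ by discarding the point masses and using only the uniform part: $\EV[\operatorname{sech}^2(T-\chi)] \geq \frac{1}{r_0-\ell_0+1}\int_{\ell_0}^{r_0}\operatorname{sech}^2(u-\chi)\,du = \frac{\tanh(r_0-\chi) - \tanh(\ell_0-\chi)}{r_0-\ell_0+1}$. Since $\chi\in[\ell_0,r_0]$, we have $\tanh(r_0-\chi)\geq 0 \geq \tanh(\ell_0-\chi)$, so this is $\geq \frac{\tanh(r_0-\chi) + \tanh(\chi-\ell_0)}{r_0-\ell_0+1}$; but actually to get the clean bound $\frac{1}{\ln(\mathcal{R}/\mathcal{L})+1}$ I need the numerator $\geq 1$. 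That is \emph{not} true in general — e.g. if $\chi$ is an interior point far from both ends, $\tanh(r_0-\chi)+\tanh(\chi-\ell_0)$ can exceed $1$, which is fine (we want $\geq 1$), but if $\chi=\ell_0$ exactly, the uniform part alone gives $\tanh(r_0-\ell_0)<1$. This is exactly where the \emph{point mass} at the matched endpoint is needed: when $\chi=\ell_0$, the point mass at $\ell_0$ contributes weight $\frac{1}{2(r_0-\ell_0+1)}$ at $\tau=\chi$ where $\operatorname{sech}^2=1$, and $\frac{1}{2}+\tanh(r_0-\ell_0) > \frac{1}{2} + \text{(something)} \geq \dots$; I would check the worst case is $\chi$ at an endpoint and verify $\text{(point mass at nearer end)} + \int \text{uniform} \geq 1$ there, then argue monotonicity/convexity in $\chi$ to extend. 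So the key step is: $\EV[\operatorname{sech}^2(T-\chi)] \geq \frac{1}{r_0-\ell_0+1}$, proved by adding back one point mass; then $r_0-\ell_0 = \ln(\mathcal{R}/\mathcal{L})$ gives Eq.~\eqref{eq:q_succ_lowerbound}. I expect this ``add back the point mass at the near endpoint'' argument to be the main obstacle — it is the whole reason the endpoints are oversampled, and pinning down the worst-case $\chi$ requires a little care.

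\textbf{Claims (iii) and (iv).} Conditioned on success, the output $t=e^T$ has law $\Pr[T\in d\tau \mid \text{succ}] = p_{\rm succ}(\tau)\,\Pr[T\in d\tau]/q_{\rm succ}$. For (iii), $t\notin[\beta^{-1}\nrm{\vec{x}},\beta\nrm{\vec{x}}]$ iff $|T-\chi| > \ln\beta$; I bound $\Pr[|T-\chi|>\ln\beta \mid \text{succ}] \leq \frac{1}{q_{\rm succ}}\EV[p_{\rm succ}(T)\mathbf{1}\{|T-\chi|>\ln\beta\}]$, use the upper bound $p_{\rm succ}(\tau) \leq \operatorname{sech}^2(\tau-\chi) + \frac{4\eta^2}{(1+\eta)^2}$, split into the $\operatorname{sech}^2$ term and the constant term, bound the $\operatorname{sech}^2$ term by $\int_{|u|>\ln\beta}\operatorname{sech}^2(u)\,du / (r_0-\ell_0+1) = \frac{2(1-\tanh(\ln\beta))}{r_0-\ell_0+1} = \frac{4}{(\beta^2+1)(r_0-\ell_0+1)}$ (extending the integral to all of $\mathbb{R}$ including point masses, since $\operatorname{sech}^2\leq 1$ everywhere — actually I'd bound the point-mass contributions by $1$ too and absorb), bound the constant term's total mass contribution by $\frac{4\eta^2}{(1+\eta)^2}\cdot\frac{2(r_0-\ell_0)+1-2\ln\beta - (\text{stuff})}{r_0-\ell_0+1}$ (the length of $\{|T-\chi|>\ln\beta\}$ intersected with the support, which has total measure $\leq 2(r_0-\ell_0)+1 - 2\ln\beta$ after accounting for both sides and the clamping), then divide by the lower bound on $q_{\rm succ}$ from (i), which supplies the factor $(r_0-\ell_0+1)\frac{(1+\eta)^2}{(1-\eta)^2}$ that cancels the denominators and produces Eq.~\eqref{eq:prob_deviate_more_than_beta}. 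For (iv), I use Theorem~\ref{thm:QLSS_known_norm}, Eq.~\eqref{eq:overlap_bound}: conditioned on output $(t,\ket{\vec{\tilde x}})$, $|\braket{\vec x}{\vec{\tilde x}}|^2 \geq 1 - \eta^2/\cos^2(\theta_t) = 1 - \eta^2(t^2+\nrm{\vec x}^2)/t^2 = 1 - \eta^2(1 + \operatorname{csch-like}) $; writing $\cos^2(\theta_t) = \frac{1}{1+e^{2(\chi-T)}}$, we get $\bra{\vec x}\tilde\rho\ket{\vec x} \geq 1 - \eta^2\,\EV[1 + e^{2(\chi-T)}\mid \text{succ}]$ — wait, need $\sec^2(\theta_t) = 1 + \tan^2\theta_t = 1 + e^{2(\chi-T)}$. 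Then $\EV[\sec^2(\theta_T)\mid\text{succ}] = \frac{1}{q_{\rm succ}}\EV[p_{\rm succ}(T)(1+e^{2(\chi-T)})]$; bound $p_{\rm succ}(T) \leq \operatorname{sech}^2(T-\chi) + \frac{4\eta^2}{(1+\eta)^2} \leq \operatorname{sech}^2(T-\chi)(1 + \text{small})$, note $\operatorname{sech}^2(u)(1+e^{2u}) = \operatorname{sech}^2(u) + \operatorname{sech}^2(u)e^{2u}$ and $\operatorname{sech}^2(u)e^{2u} = \frac{4e^{4u}}{(e^{2u}+1)^2}$ which is bounded and integrable on $u<0$ but grows like... no, $\operatorname{sech}^2(u)e^{2u} = \frac{4}{(1+e^{-2u})^2}\to 4$ as $u\to+\infty$, so it is bounded by $4$; so $\EV[\operatorname{sech}^2(T-\chi)(1+e^{2(\chi-T)})] = \EV[\operatorname{sech}^2 + \operatorname{sech}^2 e^{2(\chi-T)}]$, integrate each against the law of $T$ (uniform part plus point masses), getting $\frac{1}{r_0-\ell_0+1}$ times (a $\tanh$ difference plus an integral of $\frac{4}{(1+e^{2u})^2}$-type that evaluates in closed form to something like $2\ln(\cosh)$), combine with the $q_{\rm succ}\geq\frac{(1-\eta)^2}{(1+\eta)^2(r_0-\ell_0+1)}$ bound so the $(r_0-\ell_0+1)$ cancels, and simplify to $1 - \frac{2\eta^2}{(1-\eta)^2}(\frac32 + \ln\frac{\mathcal{R}^2+\mathcal{L}^2}{2\mathcal{L}^2})$, where the $\ln\frac{\mathcal{R}^2+\mathcal{L}^2}{2\mathcal{L}^2}$ comes from $\int_{\ell_0-\chi}^{r_0-\chi}\operatorname{sech}^2(u)e^{2u}\,du$ evaluated using $\chi\in[\ell_0,r_0]$ worst-cased at $\chi=\ell_0$ (so the integrand range is $[0, r_0-\ell_0] = [0,\ln(\mathcal{R}/\mathcal{L})]$, and $e^{2(r_0-\ell_0)} = \mathcal{R}^2/\mathcal{L}^2$), and the $3/2$ absorbs the $+\tanh$ pieces and the constant-term correction. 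Throughout, the only genuinely delicate point is (i); (iii) and (iv) are bookkeeping with elementary integrals $\int\operatorname{sech}^2 = \tanh$ and $\int\operatorname{sech}^2 e^{2u}\,du = u - \tanh(u) + \text{const}$ (equivalently $\ln\cosh$-type antiderivatives), plus worst-casing $\chi$ over the interval and dividing by the (i) bound.
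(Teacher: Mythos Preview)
Your plan is right in structure, and (ii), (iii) go through essentially as in the paper. But claim (iv) has a genuine gap. The step ``$p_{\rm succ}(T)\leq\operatorname{sech}^2(T-\chi)+\tfrac{4\eta^2}{(1+\eta)^2}\leq\operatorname{sech}^2(T-\chi)(1+\text{small})$'' is false: $\operatorname{sech}^2$ can be arbitrarily small while the additive term stays fixed. If you instead keep the additive term (even the sharper one with factor $\sin^2\theta_t$ from Eq.~\eqref{eq:p_succ_bounds}), the product $\mathcal{Q}_t\sec^2\theta_t$ picks up a $\tan^2\theta_t$ contribution; at the point mass $T=\ln\mathcal{L}$ this is $\nrm{\vec{x}}^2/\mathcal{L}^2$, which can be as large as $(\mathcal{R}/\mathcal{L})^2$, so this route cannot recover the stated bound. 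The paper's fix is not to bound $\mu_t^2 := 1-\lvert\braket{\vec{x}}{\vec{\tilde{x}}}\rvert^2$ and $\mathcal{Q}_t$ separately at all. From Eq.~\eqref{eq:post-step3} one reads off the \emph{exact} product
\[
\mathcal{Q}_t\,\mu_t^2 \;=\; \delta_2'^{\,2}\sin^2(\theta_t) \;\leq\; \frac{4\eta^2}{(1+\eta)^2}\sin^2(\theta_t),
\]
so $1-\bra{\vec{x}}\tilde{\rho}\ket{\vec{x}} = q_{\rm succ}^{-1}\,\EV_{\text{prior}}[\mathcal{Q}_T\mu_T^2]$ becomes an integral of $\sin^2\theta_t$ (bounded by $1$) against the prior on $T$. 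The remaining integral $2\int_{\ell_0}^{r_0}\sin^2\theta_t\,d\tau = 2\ln(\cos\theta_{\mathcal{R}}/\cos\theta_{\mathcal{L}})$ is worst-cased at $\nrm{\vec{x}}=\mathcal{R}$ to give $\ln\tfrac{\mathcal{R}^2+\mathcal{L}^2}{2\mathcal{L}^2}$; combined with $\sin^2\theta_{\mathcal{L}}\leq 1$, $\sin^2\theta_{\mathcal{R}}\leq\tfrac12$ and the bound from (i), the $(1+\eta)^2$ factors cancel and one lands on exactly $\tfrac{2\eta^2}{(1-\eta)^2}\bigl(\tfrac32+\ln(\cdot)\bigr)$.

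A secondary remark on (i): you correctly see that the point masses are needed, but the endpoint analysis you anticipate is unnecessary. Keep \emph{both} point masses and just compute. With $a=\tanh(\ell_0-\chi)\leq 0\leq b=\tanh(r_0-\chi)$ (using $\int\operatorname{sech}^2=\tanh$ and $\operatorname{sech}^2=1-\tanh^2$),
\[
\EV[\operatorname{sech}^2(T-\chi)] \;=\; \frac{(1-a^2)+(1-b^2)+2(b-a)}{2(r_0-\ell_0+1)} \;=\; \frac{2 + (-a)(2+a) + b(2-b)}{2(r_0-\ell_0+1)} \;\geq\; \frac{1}{r_0-\ell_0+1},
\]
since $-a,b\in[0,1]$ make each extra term nonnegative; no worst-casing over $\chi$ is needed. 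The paper does exactly this in $\theta$-variables via the substitution $d\theta_t/d\tau=-\tfrac12\sin(2\theta_t)$, with $a=\cos(2\theta_{\mathcal{L}})$, $b=\cos(2\theta_{\mathcal{R}})$.
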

\begin{proof}
    Observe that \autoref{algo:random_t_learn_norm} chooses $\tau$ randomly as follows. With probability $1/(2\ln(\mathcal{R}/\mathcal{L})+2)$ it chooses $\tau = \ln(\mathcal{L})$, with probability $1/(2\ln(\mathcal{R}/\mathcal{L})+2)$ it chooses $\tau = \ln(\mathcal{R})$, and with probability $\ln(\mathcal{R}/\mathcal{L})/(\ln(\mathcal{R}/\mathcal{L})+1)$, it chooses $\tau$ uniformly at random from the interval $[\ln(\mathcal{L}),\ln(\mathcal{R})]$. It then runs \autoref{algo:main_algo} (analyzed in \autoref{thm:QLSS_known_norm}) using $t = e^\tau$, and if it succeeds, outputs $t$ and the resulting state $\ket{\vec{\tilde{x}}}$. Since \autoref{algo:main_algo} is called exactly one time, the query cost is identical to that of \autoref{algo:main_algo}. 

    Let $\mathcal{Q}_t$ denote the probability that \autoref{algo:main_algo} succeeds for a certain choice of $t$, which can be lower bounded by Eq.~\eqref{eq:p_succ_bounds}.  The overall probability of success is at least
    \begin{align}
        q_{\rm succ} &= \frac{1}{2\ln(\mathcal{R}/\mathcal{L})+2} \left(\mathcal{Q}_{\mathcal{L}} + \mathcal{Q}_{\mathcal{R}} +2\int_{\ln(\mathcal{L})}^{\ln(\mathcal{R})} d\tau \mathcal{Q}_t\right) \\
        &\geq \frac{(1-\eta)^2}{(1+\eta)^2(2\ln(\mathcal{R}/\mathcal{L})+2)} \left(\sin^2(2\theta_{\mathcal{L}}) + \sin^2(2\theta_{\mathcal{R}}) +2\int_{\ln(\mathcal{L})}^{\ln(\mathcal{R})} d\tau \sin^2(2\theta_t)\right)
    \end{align}
    where $\theta_t = \arctan(\nrm{\vec{x}}/e^\tau)$. Computing $d\theta_t/d\tau = -\sin(2\theta_t)/2$, we make a change of integration variable and express the lower bound as
    \begin{align}
        q_{\rm succ} &\geq \frac{(1-\eta)^2}{(1+\eta)^2(2\ln(\mathcal{R}/\mathcal{L})+2)}\left(\sin^2(2\theta_{\mathcal{L}}) + \sin^2(2\theta_{\mathcal{R}}) +4\int_{\theta_{\mathcal{R}}}^{\theta_{\mathcal{L}}} d\theta_t \sin(2\theta_t) \right)\\
        &=\frac{(1-\eta)^2(\sin^2(2\theta_{\mathcal{L}}) + \sin^2(2\theta_{\mathcal{R}})+2\cos(2\theta_{\mathcal{R}})-2\cos(2\theta_{\mathcal{L}}))}{(1+\eta)^2 (2\ln(\mathcal{R}/\mathcal{L})+2)} \\
        &= \frac{(1-\eta)^2}{(1+\eta)^2(\ln(\mathcal{R}/\mathcal{L})+1)} \nonumber \\
        & \qquad + \frac{(1-\eta)^2}{(1+\eta)^2(2\ln(\mathcal{R}/\mathcal{L})+2)}\left[(2\cos(2\theta_{\mathcal{R}})-\cos^2(2\theta_{\mathcal{R}})) + (- 2\cos(2\theta_{\mathcal{L}}) - \cos^2(2\theta_{\mathcal{L}})\right] \\
        & \geq \frac{(1-\eta)^2}{(1+\eta)^2(\ln(\mathcal{R}/\mathcal{L})+1)} \label{eq:q_succ_ab_final}
    \end{align}
    The last line follows by noting the following two facts. First, since $\mathcal{L} \leq \nrm{\vec{x}}$, we have $-1 \leq \cos(2\theta_{\mathcal{L}}) \leq 0$ and hence $-2\cos(2\theta_{\mathcal{L}}) - \cos^2(2\theta_{\mathcal{L}}) \geq 0$. Second, since $\mathcal{R} \geq \nrm{\vec{x}}$, we have $0 \leq \cos(2\theta_{\mathcal{R}}) \leq 1$ and hence $2\cos(2\theta_{\mathcal{R}}) - \cos^2(2\theta_{\mathcal{R}}) \geq 0$. This confirms Eq.~\eqref{eq:q_succ_lowerbound}.

   We can now compute an upper bound on the probability that the algorithm succeeds and outputs an estimate for $t$ greater than $\beta \nrm{\vec{x}}$, which we denote by $P_{>}$. For this to be the case, first of all, the random choice of $\tau$ must be greater than $\beta\nrm{\vec{x}}$. Then, conditoned on such a choice of $\tau$, the probability of success $\mathcal{Q}_t$ is upper bounded by Eq.~\eqref{eq:p_succ_bounds}. If $\nrm{\vec{x}} < \mathcal{R}/\beta$, we have
   \begin{align}
        P_{>} &= \frac{1}{2\ln(\mathcal{R}/\mathcal{L})+2}\left(\mathcal{Q}_{\mathcal{R}} + 2\int_{\ln(\beta \nrm{\vec{x}})}^{\ln(\mathcal{R})} d\tau \mathcal{Q}_t \right)\\
        &\leq \frac{1}{2\ln(\mathcal{R}/\mathcal{L})+2}\left( 2\int_{\ln(\beta \nrm{\vec{x}})}^{\ln(\mathcal{R})} d\tau \left(\sin^2(2\theta_t)+\frac{4\eta^2\sin^2(\theta_t)}{(1+\eta)^2}\right) + \sin^2(2\theta_{\mathcal{R}}) +\frac{4\eta^2\sin^2(\theta_{\mathcal{R}})}{(1+\eta)^2}\right)
    \end{align}
    and if $\nrm{\vec{x}} \geq b/\beta$, we have $P_{>} = 0$.    Following the method that arrived at Eq.~\eqref{eq:q_succ_ab_final}, we compute
    \begin{equation}
        2\int_{\ln(\beta \nrm{\vec{x}})}^{\ln(\mathcal{R})} d\tau \sin^2(2\theta_t) = 2\cos(2\theta_{\mathcal{R}}) - 2\cos(2\theta_{\beta\nrm{\vec{x}}}) \,.
    \end{equation}
    and we note that $2\cos(2\theta_{\mathcal{R}}) + \sin^2(2\theta_{\mathcal{R}}) = 2-4\sin^4(\theta_{\mathcal{R}}) \leq 2$. We also note that $2-2\cos(\theta_{\beta\nrm{\vec{x}}}) = 4\sin^2(\theta_{\beta\nrm{\vec{x}}}) = \frac{4}{\beta^2+1}$. 
    \begin{align}
        P_{>} &\leq \frac{1}{2\ln(\mathcal{R}/\mathcal{L})+2}
        \left(\frac{4}{\beta^2+1} + \frac{4\eta^2}{(1+\eta)^2}\left(
        \sin^2(\theta_{\mathcal{R}}) + 2\int_{\ln(\beta\nrm{\vec{x}})}^{\ln(\mathcal{R})} d \tau \sin^2(\theta_t)
        \right)
        \right)
    \end{align}
    Similarly, we can compute the probability that the algorithm succeeds and outputs a value of $t < \nrm{\vec{x}}/\beta$. If $\nrm{\vec{x}} \leq \beta \mathcal{L}$, then $ P_< = 0$. If $\nrm{\vec{x}} > \beta \mathcal{L}$, we follow the same process as above, and we arrive at
    \begin{align}
        P_{<} &\leq \frac{1}{2\ln(\mathcal{R}/\mathcal{L})+2}
        \left(\frac{4}{\beta^2+1} + \frac{4\eta^2}{(1+\eta)^2}\left(
        \sin^2(\theta_{\mathcal{L}}) + 2\int_{\ln(\mathcal{L})}^{\ln(\nrm{\vec{x}}/\beta)} d \tau \sin^2(\theta_t)
        \right)
        \right)
    \end{align}
    We will add these two expressions to get an upper bound on the overall probability that the output deviates by a factor $\beta$. To bound the $O(\eta^2)$ terms under the integral, we simply assert that $\sin^2(\theta_t) \leq 1$. First, suppose that $\mathcal{L}\beta \leq \nrm{\vec{x}} \leq \mathcal{R}/\beta$. This implies that both $P_<$ and $P_>$ are nonzero, and also that $\sin^2(\theta_{\mathcal{R}}) \leq 1/(\beta^2+1)$ and $\sin^2(\theta_{\mathcal{L}}) \leq \beta^2/(\beta^2+1)$, so $\sin^2(\theta_{\mathcal{L}}) + \sin^2(\theta_{\mathcal{R}}) \leq 1$. We find that
    \begin{equation}
        P_> + P_< \leq \frac{1}{\ln(\mathcal{R}/\mathcal{L})+1}\left(\frac{4}{\beta^2+1} + \frac{2\eta^2}{(1+\eta)^2}\left(1+2\ln(\mathcal{R}/\mathcal{L})-4\ln(\beta))\right)\right)
    \end{equation}
    If $\mathcal{L}\beta \leq \nrm{\vec{x}} \leq \mathcal{R}/\beta$ does not hold, then this implies that at least one of $P_<$ and $P_>$ is zero and it is easy to see that the above expression still holds. 
    
    Now, we divide $P_< + P_>$ by the lower bound on $q_{\rm succ}$ to yield an upper bound on the probability of deviating by $\beta$, conditioned on the algorithm succeeding. The result is Eq.~\eqref{eq:prob_deviate_more_than_beta}.

    Finally, we show the lower bound on $\bra{\vec{x}}\tilde{\rho}\ket{\vec{x}}$. 
    Let the symbol $\EV$ denote expectation over the ensemble of $(t,\ket{\tilde{\vec{x}}})$. We can equivalently upper bound $\EV[1-|\braket{\vec{x}}{\vec{\tilde{x}}}|^2]$. For any particular value of $\tau$, denote this quantity by $\mu_t^2$, which we can compute from Eq.~\eqref{eq:post_step3_KR}:
    \begin{equation}
        \mu_t^2 = 1-|\braket{\vec{x}}{\vec{\tilde{x}}}|^2 = \frac{\delta_2^2\sin^2(\theta_t)}{\mathcal{Q}_t} \leq \frac{4\eta^2}{(1+\eta)^2\mathcal{Q}_t}\sin^2(\theta_t)
    \end{equation}
    The probability a certain $\tau$ is observed is proportional to the probability $\tau$ is randomly chosen, times the probability $\mathcal{Q}_t$ that \autoref{algo:main_algo} succeeds. Thus, we have
    \begin{align}
        1-\bra{\vec{x}}\tilde{\rho}\ket{\vec{x}} &= \EV[1-|\braket{\vec{x}}{\vec{\tilde{x}}}|^2] = \frac{1}{q_{\rm succ}(2\ln(\mathcal{R}/\mathcal{L})+2)}\left(\mathcal{Q}_{\mathcal{L}}\mu_{\mathcal{L}}^2 + \mathcal{Q}_{\mathcal{R}}\mu_{\mathcal{R}}^2 + 2\int_{\ln(\mathcal{L})}^{\ln(\mathcal{R})} d\tau \mathcal{Q}_t \mu_t^2 \right) \\
        &\leq  \frac{4\eta^2}{(1+\eta)^2 q_{\rm succ}(2\ln(\mathcal{R}/\mathcal{L})+2)}\left(\sin^2(\theta_{\mathcal{L}}) + \sin^2(\theta_{\mathcal{R}}) + 2\int_{\ln(\mathcal{L})}^{\ln(\mathcal{R})} d\tau \sin^2(\theta_t) \right) 
    \end{align}
    We can use the same substitution as above to say
    \begin{align}
        2\int_{\ln(\mathcal{L})}^{\ln(\mathcal{R})} d\tau \sin^2(\theta_t) = 2\int_{\theta_{\mathcal{R}}}^{\theta_{\mathcal{L}}} d\theta_t \tan(\theta_t) = 2\ln\left(\frac{\cos(\theta_{\mathcal{R}})}{\cos(\theta_{\mathcal{L}})}\right) = \ln\left(\frac{\mathcal{R}^2(\nrm{\vec{x}}^2+\mathcal{L}^2)}{\mathcal{L}^2(\nrm{\vec{x}}^2+\mathcal{R}^2)}\right) \leq \ln\left(\frac{\mathcal{R}^2+\mathcal{L}^2}{2\mathcal{L}^2}\right)\label{eq:evaluation_int_sin^2(theta_t)}
    \end{align}
    We also have $\sin^2(\theta_{\mathcal{L}}) \leq 1$ and $\sin^2(\theta_{\mathcal{R}}) \leq 1/2$, a fact that follows from the assumption $b \geq \nrm{\vec{x}}$. We can also impose the upper bound on $q_{\rm succ}$. This gives
  \begin{align}
        1-\bra{\vec{x}}\tilde{\rho}\ket{\vec{x}} 
        &\leq  \frac{4\eta^2}{(1+\eta)^2q_{\rm succ}(2\ln(\mathcal{R}/\mathcal{L})+2)}\left(\frac{3}{2} + \ln\left(\frac{\mathcal{R}^2+\mathcal{L}^2}{2\mathcal{L}^2}\right) \right) \\
        &\leq \frac{2\eta^2}{(1-\eta)^2}\left(\frac{3}{2} + \ln\left(\frac{\mathcal{R}^2+\mathcal{L}^2}{2\mathcal{L}^2}\right) \right)
    \end{align}
    
\end{proof}

Briefly, we now discuss how this theorem enables us to learn the norm up to a constant approximation ratio. Suppose $\beta$ is a fixed constant greater than 3. We may take $\eta \geq 0.1(2 \ln(\mathcal{R}/\mathcal{L})+1-4\ln(\beta))^{-1/2} \geq \Omega(\kappa^{-1/2})$ so that the right-hand-side of Eq.~\eqref{eq:prob_deviate_more_than_beta} is upper bounded by 0.45. Thus, there is a strictly greater than 1/2 chance the output of \autoref{algo:random_t_learn_norm} is a 3-approximation of $\nrm{\vec{x}}$. The expected number of times we need to run \autoref{algo:random_t_learn_norm} to observe success is upper bounded by $q_{\rm succ}^{-1} \approx \ln(\mathcal{R}/\mathcal{L})+1 \leq \ln(\kappa) + 1$. Thus, the overall complexity to learn a 3-approximation to the norm is $O(\kappa\log(\kappa)\log\log(\kappa))$. The probability the output is a 3-approximation can be boosted to $1-\delta$ with median amplification at the cost of multiplicative $O(\log(1/\delta))$ overhead. The approximation ratio can be improved from 3 to $1+\varepsilon$ using the method from \autoref{sec:improving_approx_ratio}.

Now we describe how the $O(\log(\kappa))$ complexity coming from the required number of repetitions can be improved to $O(\sqrt{\log(\kappa)})$ with fixed-point amplitude amplification. 

\begin{lemma}\label{lem:estimate_norm_random_t_FPAA}
    Let $\eta>0$, $\delta > 0$, $\mathcal{L}$, and $\mathcal{R}$ be fixed constants and suppose $\nrm{\vec{x}} \in [\mathcal{L},\mathcal{R}]$, and let $d$ be a positive integer. Suppose we restrict values of $t$ such that $\ln(t)-\mathcal{L}$ an integer multiples of $2^{-d}$, and we run \autoref{algo:random_t_learn_norm} coherently by creating a superposition over a set of discrete $t$ values, rather than a classical random guess. Then, we may wrap \autoref{algo:random_t_learn_norm} in fixed-point amplitude amplification, and boost its probability of success to $1-\delta$. The output $(t,\vec{\tilde{x}})$ of this protocol is identical to that of \autoref{algo:random_t_learn_norm} up to discretization error of order $2^{-d}$. Specifically
        \begin{align}
            \Pr\left[t \not\in [\beta^{-1}\nrm{\vec{x}}, \beta \nrm{\vec{x}}] \right]
            \leq{}& \frac{(1+\eta)^2}{(1-\eta)^2}\frac{4}{\beta^2+1} +\ln(\mathcal{R}/\mathcal{L})^2 2^{-d-1} +  \frac{2\eta^2(2\ln(\mathcal{R}/\mathcal{L})+1-4\ln(\beta))}{(1-\eta)^2}\\
            \bra{\vec{x}}\tilde{\rho}\ket{\vec{x}} \geq{}& 1- \frac{2\eta^2}{(1-\eta)^2}\left(\frac{3}{2} + \ln\left(\frac{\mathcal{R}^2+\mathcal{L}^2}{2\mathcal{L}^2}\right) + \ln(\mathcal{R}/\mathcal{L})^22^{-d} \right)\left(1-\ln(\mathcal{R}/\mathcal{L})^22^{-d-1}\right)^{-1} \,.
        \end{align}
    The query cost of the algorithm is $Q$ total queries to $U_A$, $U_A^{\dagger}$ and their controlled versions, and $2Q$ total queries to $U_{\vec{b}}$, $U_{\vec{b}}^\dagger$, and their controlled versions, where 
    \begin{equation}
        Q \leq \left(2\left\lceil \frac{(1+\eta)\sqrt{\ln(\mathcal{R}/\mathcal{L})+1}}{2(1-\eta)} \left(1-\ln(\mathcal{R}/\mathcal{L})^22^{-d-1}\right)^{-1/2} \ln(\frac{2}{\sqrt{\delta}})-\frac{1}{2} \right\rceil +1\right)2\left\lceil \frac{\kappa}{2}\ln(\frac{2}{\eta}) \right\rceil 
    \end{equation}
    The algorithm requires $d+\lceil \log_2\ln(\mathcal{R}/\mathcal{L})\rceil $ additional ancilla qubits, compared to \autoref{algo:random_t_learn_norm}. 
\end{lemma}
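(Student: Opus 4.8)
The plan is to run \autoref{algo:random_t_learn_norm} as a \emph{coherent} subroutine $V$, lower‑bound its success amplitude, and then invoke fixed‑point amplitude amplification (FPAA)~\cite{yoder2014FixedPointSearch} essentially as a black box; the two output bounds are inherited from \autoref{lem:estimate_norm_random_t} up to the discretization corrections. First I would replace the classical draw in Line~\ref{line:choose_tau} by preparing a (near‑)uniform superposition $\sum_i \sqrt{p_i}\ket{\tau_i}$ over the $O(\ln(\mathcal{R}/\mathcal{L})\,2^d)$ grid points $\tau_i$ of spacing $2^{-d}$ in $[\ln(\mathcal{L})-\tfrac12,\ln(\mathcal{R})+\tfrac12]$; this needs an index register of $d+\lceil\log_2\ln(\mathcal{R}/\mathcal{L})\rceil$ qubits and only Hadamard‑type gates, with a rejection flag to accommodate a non‑power‑of‑two grid size (contributing only $O(2^{-d})$ slack). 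The endpoint rounding of \autoref{algo:random_t_learn_norm} becomes a reversible classical circuit on this register, reproducing the edge oversampling. The remaining steps---building the block‑encoding $U_{G_t}$ controlled on the index register (it depends on $t$ only through the single rotation by $\arccos(1/t)$ of \autoref{fig:block_encoding_At}, so this costs only $\mathrm{poly}(d)$ extra gates and no extra queries to $U_A$), running the KR circuit of \autoref{fig:main_algo_circuit} (whose QSVT degree $\ell$ is independent of $t$, since $[\kappa^{-1},1]$ is the singular‑value window of every $A_t$ with $t\in[1,\kappa]$), and flagging ``all ancillas $\ket{0}$''---are performed unitarily with all measurements deferred. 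This yields a unitary $V$ using $\ell$ calls each to controlled‑$U_A$ and controlled‑$U_A^\dagger$ and $2\ell$ each to their $U_{\vec{b}}$ counterparts, with $\ell\le\lceil\tfrac{\kappa}{2}\ln(2/\eta)\rceil$ by Eq.~\eqref{eq:ell_kappa}, such that $V\ket{0}=\sqrt{q}\,\ket{1}\ket{\Psi_{\mathrm{good}}}+\sqrt{1-q}\,\ket{0}\ket{\Psi_{\mathrm{bad}}}$, where $q$ is the discretized analogue of $q_{\mathrm{succ}}$ and $\ket{\Psi_{\mathrm{good}}}=\tfrac{1}{\sqrt{q}}\sum_i\sqrt{p_i\,\mathcal{Q}_{t_i}}\,\ket{\tau_i}\ket{\vec{\tilde{x}}_{t_i}}$ is the success branch, with $\mathcal{Q}_t$ the success probability of \autoref{algo:main_algo} at $t$ and $\ket{\vec{\tilde{x}}_t}$ its output as analyzed in \autoref{thm:QLSS_known_norm}. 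Measuring the index register of $\ket{\Psi_{\mathrm{good}}}$ thus returns $(t,\ket{\vec{\tilde{x}}})$ exactly as the classical \autoref{algo:random_t_learn_norm} does, up to the $2^{-d}$ discretization of the law of $\tau$.

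\emph{Lower bound on $q$ and discretization error.} By \autoref{lem:estimate_norm_random_t}, the continuum success probability satisfies $q_{\mathrm{succ}}\ge \frac{(1-\eta)^2}{(1+\eta)^2(\ln(\mathcal{R}/\mathcal{L})+1)}$. Passing to the $2^{-d}$ grid replaces the integrals $\int\sin^2(2\theta_t)\,d\tau$ and $\int\sin^2(\theta_t)\,d\tau$ appearing in that proof by Riemann sums; since the $\tau$‑derivatives of both integrands are $O(1)$, a routine Riemann‑sum estimate shows that every bound in \autoref{lem:estimate_norm_random_t} carries over to the discretized quantities with the additive $\ln(\mathcal{R}/\mathcal{L})^2\,2^{-d-1}$ (respectively multiplicative $(1-\ln(\mathcal{R}/\mathcal{L})^2\,2^{-d-1})^{-1}$) corrections recorded in the statement. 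In particular $q\ge w:=\frac{(1-\eta)^2}{(1+\eta)^2(\ln(\mathcal{R}/\mathcal{L})+1)}\bigl(1-\ln(\mathcal{R}/\mathcal{L})^2\,2^{-d-1}\bigr)$, whose inverse square root equals $\frac{(1+\eta)\sqrt{\ln(\mathcal{R}/\mathcal{L})+1}}{1-\eta}\bigl(1-\ln(\mathcal{R}/\mathcal{L})^2\,2^{-d-1}\bigr)^{-1/2}$.

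\emph{FPAA and conclusion.} With $V$, $V^\dagger$, the phased reflection about $\ket{0}$ on the ancillas, the phased reflection about the ``all flags $\ket{0}$'' subspace, and the explicit bound $w$ in hand, the fixed‑point search of~\cite{yoder2014FixedPointSearch} produces a sequence of $2n+1$ alternating uses of $V/V^\dagger$ interleaved with phased reflections, where $n=\lceil\tfrac{1}{2\sqrt{w}}\ln(2/\sqrt{\delta})-\tfrac12\rceil$, driving $\ket{0}$ to a state with at least $1-\delta$ overlap on the good subspace while keeping the state in the two‑dimensional plane $\spn\{\ket{\Psi_{\mathrm{good}}},\ket{\Psi_{\mathrm{bad}}}\}$; hence conditioned on success the residual state is exactly $\ket{\Psi_{\mathrm{good}}}$. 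Substituting $w$ gives precisely the stated iteration count, and since each of the $2n+1$ calls to $V^{(\dagger)}$ contributes $2\ell\le 2\lceil\tfrac{\kappa}{2}\ln(2/\eta)\rceil$ to the combined $(U_A,U_A^\dagger)$ query count (and twice that for $U_{\vec{b}},U_{\vec{b}}^\dagger$), the bound on $Q$ follows. Measuring all ancillas at the end, success occurs with probability $\ge 1-\delta$, and conditioned on it the index and system registers are in $\ket{\Psi_{\mathrm{good}}}$; measuring the index register then reproduces the conditional law of \autoref{lem:estimate_norm_random_t}, up to the discretization corrections, yielding the two displayed inequalities. The only additional ancillas beyond \autoref{algo:random_t_learn_norm} are the $d+\lceil\log_2\ln(\mathcal{R}/\mathcal{L})\rceil$‑qubit index register; the phased reflections need nothing more than the flag qubit, which can reuse an existing ancilla.

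The FPAA step being essentially black‑box, the main obstacle is the bookkeeping in the two middle steps: realizing the edge‑oversampled, non‑power‑of‑two superposition coherently while perturbing the output statistics by only $O(2^{-d})$, and pushing the Riemann‑sum estimates through the already somewhat delicate bounds of \autoref{lem:estimate_norm_random_t} so that the corrections emerge with the stated $\ln(\mathcal{R}/\mathcal{L})^2\,2^{-d-1}$ constants rather than merely $O(\cdot)$. Neither step is conceptually deep, but both require care.
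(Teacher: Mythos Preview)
Your proposal is correct and follows essentially the same approach as the paper: construct a coherent version of \autoref{algo:random_t_learn_norm} by preparing a superposition over the discretized $t$-grid (with the edge oversampling built in), bound the discretization error by comparing the integrals in the proof of \autoref{lem:estimate_norm_random_t} to Riemann sums using $O(1)$ derivative bounds on $\sin^2(2\theta_t)$ and $\sin^2(\theta_t)$, and then apply fixed-point amplitude amplification~\cite{yoder2014FixedPointSearch} as a black box with the resulting lower bound on the success probability. The paper's proof differs only in cosmetic details (it writes out the explicit state-preparation unitary $\mathcal{P}$ with the two edge points as distinguished basis states rather than obtaining them via coherent rounding, and it records the specific derivative bounds $\tfrac12$ and $1$ that give the $2^{-d-1}$ constants), but the structure and all substantive steps are the same.
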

\begin{proof}
    First, we describe in more detail the modifications that allow fixed-point amplitude amplification to be applied. Let $q = d+\lceil \log_2 \ln(\mathcal{R}/\mathcal{L}) \rceil +1 $ be the number of ancilla qubits. Let $j_{\max} = \lceil \ln(\mathcal{R}/\mathcal{L})2^d\rceil  \leq 2^{q-1}$.  We require a unitary $\mathcal{P}$ that prepares the state
    \begin{equation}
        \mathcal{P}\ket{0}\ket{0^{q}} = \frac{1}{\sqrt{2 \ln(\mathcal{R}/\mathcal{L})+2}}\left(\ket{0}\ket{0^q}+\ket{0}\ket{10^{q-1}} + \frac{\sqrt{2\ln(\mathcal{R}/\mathcal{L})}}{\sqrt{j_{\max}}}\sum_{j=0}^{j_{\max}-1}\ket{1}\ket{j} \right)
    \end{equation}
    We require another unitary $\mathcal{B}$ which applies the block-encoding $U_{A_t}$ for matrix $A_t$, where the value of $t$ is controlled by the setting of the $q+1$ qubits depicted in the equation above. Referencing \autoref{fig:block_encoding_At}, to control the value of $t$, we merely need to control the value of the rotation angle in the block-encoding $U_{A_t}$. Let $t_j = e^{a+j2^{-d}}$, so that $\ln(t_j)$ is the left edgepoint of the interval $[a+j2^{-d}, a+(j+1)2^{-d}]$. Then, we require that
    \begin{equation}
        \mathcal{B} = \ketbra{0^{q+1}} \otimes U_{A_\mathcal{L}} + \ketbra{010^{q-1}} \otimes U_{A_{\mathcal{R}}} + \sum_{j=0}^{2^d-1} \ketbra{1j} \otimes U_{A_{t_j}}
    \end{equation}
    Then, we may run \autoref{algo:random_t_learn_norm} by using $\mathcal{P}$ to generate the random distribution of values of $t$, and replace each occurrence of $U_{A_t}$ by $\mathcal{B}$. The analysis of the algorithm is the same in \autoref{lem:estimate_norm_random_t}, except that the integrals are approximated by discrete sums. To bound the error in this approximation, we invoke the well-known formula that the difference between an integral on the interval $[L,R]$ and a sum of the integrand evaluated at the left edge point of $N$ equally sized subintervals is at most $K_1(R-L)^2/(2N)$, where $K_1$ is an upper bound on the derivative of the integrand on the interval. We note the (non-optimized) upper bounds
    \begin{align}
        \left\lvert \frac{d}{d\tau}(\sin^2(2\theta_t))\right\rvert  = \left\lvert - \sin(2\theta_t)\sin(4\theta_t)/2\right\rvert &\leq 1/2 \\
        \left\lvert \frac{d}{d\tau}(\sin^2(\theta_t))\right\rvert  = \left\lvert - \sin^2(2\theta_t)\right\rvert &\leq 1 \\
    \end{align}
    The approximation errors satisfy
    \begin{align}
        \left\lvert \int_{\ln(\mathcal{L})}^{\ln(\mathcal{R})} d\tau \sin^2(2\theta_t) - 2^{-d}\sum_{j=0}^{j_{\max-1}} \sin^2(2\theta_{t_j})\right\rvert &\leq \ln(\mathcal{R}/\mathcal{L})^2 2^{-d-2} \\
        \left\lvert \int_{\ln(\mathcal{L})}^{\ln(\mathcal{R})} d\tau \sin^2(\theta_t) - 2^{-d}\sum_{j=0}^{j_{\max-1}} \sin^2(\theta_{t_j})\right\rvert &\leq \ln(\mathcal{R}/\mathcal{L})^2 2^{-d-1}
    \end{align}
    These integrals are used in \autoref{lem:estimate_norm_random_t} in Eqs.~\eqref{eq:q_succ_ab_final} and \eqref{eq:evaluation_int_sin^2(theta_t)}. Thus, we instead obtain
    \begin{align}
        q_{\rm succ} &\geq  \frac{(1-\eta)^2}{(1+\eta)^2(\ln(\mathcal{R}/\mathcal{L})+1)} \left(1-\ln(\mathcal{R}/\mathcal{L})^22^{-d-1}\right) \label{eq:q_succ_discrete_lowerbound}\\
        1-\bra{\vec{x}}\tilde{\rho}\ket{\vec{x}} 
        &\leq \frac{2\eta^2}{(1-\eta)^2}\left(\frac{3}{2} + \ln\left(\frac{\mathcal{R}^2+\mathcal{L}^2}{2\mathcal{L}^2}\right) + \ln(\mathcal{R}/\mathcal{L})^22^{-d} \right)\left(1-\ln(\mathcal{R}/\mathcal{L})^22^{-d-1}\right)^{-1} 
    \end{align}
    Through a similar method, we also obtain
        \begin{equation}
        P_> + P_< \leq \frac{1}{\ln(\mathcal{R}/\mathcal{L})+1}\left(\frac{4}{\beta^2+1} + \ln(\mathcal{R}/\mathcal{L})^2 2^{-d-1} + \frac{2\eta^2}{(1+\eta)^2}\left(1+2\ln(\mathcal{R}/\mathcal{L})-4\ln(\beta))\right)\right)\,.
    \end{equation}
    Using this coherent version of \autoref{algo:random_t_full_QLSS}, we can wrap the algorithm in fixed-point amplitude amplification \cite{yoder2014FixedPointSearch}. Given a lower bound $\lambda$ on the success probability (as in Eq.~\eqref{eq:q_succ_discrete_lowerbound}), fixed-point amplitude amplification allows the success probability to be boosted to at least $1-\delta$ with using $L$  calls to the algorithm, where $L$ is the smallest odd integer greater than $\lambda^{-1/2} \ln(2/\sqrt{\delta})$, and an equal number of calls to the oracle that determines success. Here, determining success is simply checking whether the ancilla qubits are $\ket{0}$. Thus, the overall query complexity is the number $L$ times the query complexity of \autoref{algo:main_algo}, yielding the quoted statement. 
\end{proof}

\section{Explicit complexity bounds for  QLSS}

\subsection{Near-optimal QLSSs} \label{app:explicit_near_optimal_QLSS}

Now, we present a full QLSS that does not require knowing an estimate for the norm ahead of time. 
All it needs to know is that $\nrm{\vec{x}} \in [\mathcal{L},\mathcal{R}]$ (we can always take $[\mathcal{L},\mathcal{R}] = [1,\kappa]$ if we have no additional information). In this subsection, we do not yet implement the adiabatic-inspired ideas described in \autoref{sec:estimate_by_adiabatic} to achieve linear-in-$\kappa$ complexity; we so so in \autoref{app:explicit_optimal_QLSS}. The algorithm presented here is a randomized algorithm that outputs an ensemble of pairs $(t,\ket{\vec{\tilde{x}}})$, where $t$ is an estimate for $\nrm{\vec{x}}$ and $\ket{\vec{\tilde{x}}}$ is the associated output of \autoref{algo:main_algo} for that choice of $t$. It works by running \autoref{algo:random_t_learn_norm} to generate a good norm estimate $t$ and associated anstatz state $\ket{\vec{x}_{\rm ans}}$, and then using KP to refine that state. The KP step should be run with precision 
\begin{equation}\label{eq:eta_prime}
    \eta_{\rm KP} = \frac{\varepsilon}{\mu}\frac{\sqrt{1-\mu^2}}{\sqrt{1-\varepsilon^2}}
\end{equation}
where
\begin{equation}
    \mu = \frac{\eta}{1-\eta}\sqrt{3 + 2\ln\left(\frac{\mathcal{R}^2+\mathcal{L}^2}{2\mathcal{L}^2}\right)}
\end{equation}
We show that the ensemble $\tilde{\rho}$ of $\ketbra{\vec{\tilde{x}}}$ that the algorithm outputs is $\varepsilon$-close to $\ketbra{\vec{x}}$.  

\begin{algorithm}
\caption{QLSS using random choice of $t$ + postselection + KP}\label{algo:random_t_full_QLSS}
\DontPrintSemicolon
\SetAlgoLined
\LinesNumbered
\Input{$(A, \vec{b}, [\mathcal{L},\mathcal{R}], \kappa, \eta, \varepsilon)$}
\Output{$(t, \ket{\vec{\tilde{x}}})$}
\BlankLine
Let $\eta_{\rm KP}$ be given by Eq.~\eqref{eq:eta_prime}. \;
Run \autoref{algo:random_t_learn_norm} with parameters $(A,\vec{b},[\mathcal{L},\mathcal{R}],\kappa,\eta)$. If it fails, repeat this step. If it succeeds, denote its output by $(t,\ket{\vec{x_{\rm ans}}})$\; \label{line:find_random_t_log}
Apply KP to $\ket{\vec{x_{\rm ans}}}$ using the matrix $G=Q_{\vec{b}} A$ with parameters $(\kappa,\eta_{\rm KP})$. If it fails, go to \autoref{line:find_random_t_log}. If it succeeds, denote its output by $\ket{\vec{\tilde{x}}}$.\;\label{line:run_KP_log}
\Return{$(t , \ket{\vec{\tilde{x}}})$}
\end{algorithm}

\begin{theorem}\label{thm:QLSS_random_t}
Suppose that $\nrm{\vec{b}} = 1$, that $\vec{b}$ is in the column space of $A$, and that all nonzero singular values of $A$ lie in the interval $[\kappa^{-1},1]$. Let $U_A$ be a $(1,a)$-block-encoding of $A$, and $U_{\vec{b}}$ be a state-preparation unitary for $\ket{\vec{b}}$.  Let $\vec{x}$ denote the unique vector of minimum norm $\nrm{\vec{x}}$ for which $A\vec{x} = \vec{b}$. Fix $\eta > 0$ and $\varepsilon > 0$, and assume $\eta$ satisfies $\eta/(1-\eta) < (3+2\ln((\mathcal{R}^2+\mathcal{L}^2)/(2\mathcal{L})^2))^{-1/2}$. Let $\eta_{\rm KP} > 0$ be given by Eq.~\eqref{eq:eta_prime}.  Consider \autoref{algo:random_t_full_QLSS}, denoting its output by $(t,\ket{\vec{\tilde{x}}})$. 

Then, the ensemble of outputs $\ketbra{\vec{\tilde{x}}}$, denoted by $\tilde{\rho}$, satisfies $\frac{1}{2} \nrm{\ketbra{\vec{x}}-\tilde{\rho}}_1 \leq \varepsilon$, meaning the output solves the QLSP. Furthermore, the expected number of queries required is $Q$ total queries to $U_A$, $U_A^\dagger$ and their controlled versions, as well as $2Q$ total queries to $U_{\vec{b}}$, $U_{\vec{b}}^\dagger$, and their controlled versions, where
\begin{align}\label{eq:query_complexity_exhaustive_QLSS}
    Q \leq \frac{2(1+\eta)^2(\ln(\mathcal{R}/\mathcal{L})+1)}{(1-\eta)^2}\left\lceil \frac{\kappa}{2}\ln(\frac{2}{\eta}) \right\rceil + \frac{1}{1-\frac{\eta^2}{(1-\eta)^2}\left(3 + 2\ln\left(\frac{\mathcal{R}^2+\mathcal{L}^2}{2\mathcal{L}^2}\right)\right)}2\left\lceil \frac{\kappa}{2}\ln(\frac{2}{\eta_{\rm KP}}) \right\rceil
\end{align}
\end{theorem}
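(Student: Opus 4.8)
The plan is to prove the two assertions of \autoref{thm:QLSS_random_t}---correctness of the output state and the expected query bound---in turn, the first being the more routine, obtained by composing the two ingredients already in hand. By \autoref{lem:estimate_norm_random_t}, running \autoref{algo:random_t_learn_norm} until it succeeds (Line~2 of \autoref{algo:random_t_full_QLSS}) produces an ansatz ensemble $\tilde{\rho}_{\rm ans}$ with $\bra{\vec{x}}\tilde{\rho}_{\rm ans}\ket{\vec{x}}\ge 1-\mu^2$ for the $\mu$ quoted in the theorem, and, inherited from \autoref{thm:QLSS_known_norm}, with no support on $\ker A$. I would observe that the full output ensemble of \autoref{algo:random_t_full_QLSS} is exactly the state obtained by applying KP to $\tilde{\rho}_{\rm ans}$ and conditioning on success in the sense of \autoref{lem:KP}: a KP failure simply restarts Line~2, and the restarted run is i.i.d., so the restarts do not bias the conditional output. \autoref{lem:KP} then gives $\tfrac12\nrm{\ketbra{\vec{x}}-\tilde{\rho}}_1\le \mu\eta_{\rm KP}/\sqrt{1-\mu^2+\mu^2\eta_{\rm KP}^2}$; substituting the prescribed $\eta_{\rm KP}$ of Eq.~\eqref{eq:eta_prime}, a one-line simplification yields $1-\mu^2+\mu^2\eta_{\rm KP}^2=(1-\mu^2)/(1-\varepsilon^2)$, so the right-hand side is exactly $\varepsilon$ (indeed $\eta_{\rm KP}$ was reverse-engineered to make this hold). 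Two small points must be cleared: \autoref{lem:KP} is phrased for an input whose overlap with $\vec{x}$ is \emph{exactly} $1-\mu^2$, whereas we only have $\ge 1-\mu^2$, but the bound there is increasing in the true defect (for $\eta_{\rm KP}<1$), so evaluating at $\mu$ is legitimate; and the hypothesis $\eta/(1-\eta)<(3+2\ln(\cdots))^{-1/2}$ is what guarantees $\mu<1$, so $\eta_{\rm KP}$ and $\sqrt{1-\mu^2}$ are well defined (the degenerate regime $\varepsilon\ge\mu$, in which the ansatz already meets the target and KP is trivial, is dispatched separately).

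For the query bound I would model \autoref{algo:random_t_full_QLSS} as a sequence of i.i.d.\ ``master iterations,'' which is precisely what the control flow of Lines~2--3 implements: each iteration draws a fresh $\tau$ and runs \autoref{algo:main_algo} once (costing $\ell$ controlled queries to each of $U_A,U_A^\dagger$ and $2\ell$ to each of $U_{\vec{b}},U_{\vec{b}}^\dagger$, with $\ell\le\lceil\tfrac{\kappa}{2}\ln(2/\eta)\rceil$ by \autoref{thm:QLSS_known_norm}), and, only if that call succeeds, additionally runs KP once (costing $\ell'\le\lceil\tfrac{\kappa}{2}\ln(2/\eta_{\rm KP})\rceil$ controlled queries to each of $U_A,U_A^\dagger$, since every $U_G$ or $U_G^\dagger$ inside KP uses one such query by the block-encoding of \autorefapp{app:block_encodings}); the algorithm halts at the first iteration in which both succeed. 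Writing $p$ for the per-iteration halting probability, $\mathcal{Q}_\tau$ for the success probability of \autoref{algo:main_algo} at $t=e^\tau$, and $q_{\rm succ}=\EV_\tau[\mathcal{Q}_\tau]$ for its average, a renewal (Wald) computation gives expected number of \autoref{algo:main_algo} calls $=1/p$ and expected number of KP calls $=q_{\rm succ}/p$. I would then lower-bound $p=\EV_\tau[\mathcal{Q}_\tau\,\bar{p}_{\mathrm{KP},\tau}]\ge q_{\rm succ}\bra{\vec{x}}\tilde{\rho}_{\rm ans}\ket{\vec{x}}$---here $\bar{p}_{\mathrm{KP},\tau}$ is the KP success probability averaged over the ansatz and KP randomness at $t=e^\tau$---by combining the KP success-probability bound of \autoref{lem:KP} with the identity $\EV_\tau[\mathcal{Q}_\tau\mu_\tau^2]=q_{\rm succ}(1-\bra{\vec{x}}\tilde{\rho}_{\rm ans}\ket{\vec{x}})$ ($\mu_\tau^2$ the ansatz defect at $t=e^\tau$) extracted from the proof of \autoref{lem:estimate_norm_random_t}, and finally inserting that lemma's bounds $q_{\rm succ}\ge(1-\eta)^2/[(1+\eta)^2(\ln(\mathcal{R}/\mathcal{L})+1)]$ and $\bra{\vec{x}}\tilde{\rho}_{\rm ans}\ket{\vec{x}}\ge 1-\mu^2$. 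Multiplying the two expected call counts by their per-call costs and collecting terms produces the two-term bound of Eq.~\eqref{eq:query_complexity_exhaustive_QLSS}, along with the quoted factor-of-two relation between the $U_A$- and $U_{\vec{b}}$-query totals.

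The main obstacle is this repetition bookkeeping. Because a KP failure restarts the entire ``run \autoref{algo:main_algo} until success'' inner loop rather than merely re-running KP, the total count of \autoref{algo:main_algo} calls is not a naive product of ``rounds'' and ``calls per round''; one must track the joint distribution of the number of master iterations and the number of those in which \autoref{algo:main_algo} succeeded, which is where the asymmetric $1/p$ versus $q_{\rm succ}/p$ counts---and hence the split of $Q$ into its two distinct terms---originate. Everything else (the monotonicity check allowing an inequality to be fed into \autoref{lem:KP}, the cancellation pinning down $\varepsilon$, and the substitution of the explicit degrees $\ell,\ell'$) is routine.
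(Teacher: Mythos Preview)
Your correctness argument is essentially the paper's: combine \autoref{lem:estimate_norm_random_t} with \autoref{lem:KP}, and the choice of $\eta_{\rm KP}$ in Eq.~\eqref{eq:eta_prime} makes the bound collapse to $\varepsilon$. Your remarks on monotonicity and on $\mu<1$ are correct and worth keeping.

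There is, however, a genuine gap in the query-count argument that prevents you from reaching the stated bound. Your Wald computation gives expected \autoref{algo:main_algo} calls $=1/p$ and expected KP calls $=q_{\rm succ}/p$, and your inequality $p\ge q_{\rm succ}\,\bra{\vec{x}}\tilde{\rho}_{\rm ans}\ket{\vec{x}}$ is correct. But you then insert the two lower bounds from \autoref{lem:estimate_norm_random_t} \emph{separately}, obtaining $1/p\le\frac{(1+\eta)^2(\ln(\mathcal{R}/\mathcal{L})+1)}{(1-\eta)^2}\cdot\frac{1}{1-\mu^2}$. That extra $(1-\mu^2)^{-1}$ factor survives into the first term of your $Q$, and Eq.~\eqref{eq:query_complexity_exhaustive_QLSS} does not have it.

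The fix is to not split the product. Unwinding the definitions, $q_{\rm succ}\,\bra{\vec{x}}\tilde{\rho}_{\rm ans}\ket{\vec{x}}=\EV_\tau[\mathcal{Q}_\tau(1-\mu_\tau^2)]$, and by Eq.~\eqref{eq:post_step3_KR} this equals $\EV_\tau\big[(1-\delta_1'/2)^2\sin^2(2\theta_t)\big]$, i.e.\ exactly the squared norm of the $\ket{\vec{x}}$-component of the post-KR state. This is precisely the quantity whose $\tau$-average was lower-bounded by $\frac{(1-\eta)^2}{(1+\eta)^2(\ln(\mathcal{R}/\mathcal{L})+1)}$ in the derivation of Eq.~\eqref{eq:q_succ_lowerbound} (the lower bound there was obtained by dropping the $\delta_2'$ term, which is the same as keeping only the $\ket{\vec{x}}$-component). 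Equivalently---and this is how the paper phrases it---the lower bound on $q_{\rm succ}$ already counts only the $\ket{\vec{x}}$-portion, which KP preserves exactly, so the joint success probability $p$ inherits the \emph{same} lower bound as $q_{\rm succ}$ with no $(1-\mu^2)$ penalty. With $1/p$ so bounded, the first term of Eq.~\eqref{eq:query_complexity_exhaustive_QLSS} follows; your treatment of the second term (KP calls $\le(1-\mu^2)^{-1}$) is already correct.
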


\begin{proof}
First, we verify that the output solves the QLSP. From \autoref{lem:estimate_norm_random_t}, the ensemble of outputs of \autoref{algo:random_t_learn_norm}, denoted by $\tilde{\rho}_{\rm ans}$, satisfies
    \begin{equation}
        \bra{\vec{x}}\tilde{\rho}_{\rm ans}\ket{\vec{x}} \geq 1- \frac{2\eta^2}{(1-\eta)^2}\left(\frac{3}{2} + \ln\left(\frac{\mathcal{R}^2+\mathcal{L}^2}{2\mathcal{L}^2}\right) \right)
    \end{equation}
The subsequent KP step can then be analyzed with \autoref{lem:KP}, with 
\begin{equation}
    \mu = \frac{\eta}{1-\eta}\sqrt{3 + 2\ln\left(\frac{\mathcal{R}^2+\mathcal{L}^2}{2\mathcal{L}^2}\right)}
\end{equation}
It asserts that the output of the KP step satisfies
\begin{equation}
    \frac{1}{2}\nrm{\ketbra{\vec{x}}-\tilde{\rho}}_1 \leq \frac{\mu\eta_{\rm KP}}{\sqrt{1-\mu^2+\mu^2\eta_{\rm KP}^2}} = \varepsilon\,.
\end{equation}
where the last equality is ensured by the choice of $\eta_{\rm KP}$ in Eq.~\eqref{eq:eta_prime}

Next we verify the expected query complexity. We examine the expected complexity from each of the two lines separately. For \autoref{line:find_random_t_log}, each time it is called, it has the complexity quoted in \autoref{thm:QLSS_known_norm}. The expected number of times it is called is $r_{\rm succ}^{-1}$, where $r_{\rm succ}$ is the probability that both lines \autoref{line:find_random_t_log} and \autoref{line:run_KP_log} succeed. We can see that this satisfies the same lower bound as $q_{\rm succ}$ from Eq.~\eqref{eq:q_succ_ab_final}, since this lower bound comes entirely from the portion of the state that is left invariant by the KP step. 
\begin{equation}
    r_{\rm succ} \geq \frac{(1-\eta)^2}{(1+\eta)^2(\ln(\mathcal{R}/\mathcal{L})+1)}\,. 
\end{equation}
Thus, the contribution to the expected query complexity from \autoref{line:find_random_t_log} is at most 
\begin{equation}
    \frac{(1+\eta)^2(\ln(\mathcal{R}/\mathcal{L})+1)}{(1-\eta)^2}2\left\lceil \frac{\kappa}{2}\ln(\frac{2}{\eta}) \right\rceil
\end{equation}

On the other hand, \autoref{line:run_KP_log} only executes when \autoref{line:find_random_t_log} succeeds, and the algorithm terminates immediately once \autoref{line:run_KP_log} succeeds for the first time. By \autoref{lem:KP}, the success probability of \autoref{line:run_KP_log} is at least $1-\mu^2$. Thus, the expected number of times \autoref{line:run_KP_log} executes is at most $(1-\mu^2)^{-1}$, and its contribution to the query complexity is upper bounded by 
\begin{equation}
    \frac{2\left\lceil \frac{\kappa}{2}\ln(\frac{2}{\eta_{\rm KP}}) \right\rceil}{1-\frac{\eta^2}{(1-\eta)^2}\left(3 + 2\ln\left(\frac{\mathcal{R}^2+\mathcal{L}^2}{2\mathcal{L}^2}\right)\right)}\,.
\end{equation}
This completes the proof. 
\end{proof}

If one takes $[\mathcal{L},\mathcal{R}] = [1,\kappa]$ and $\eta = \Theta(1/\log(\kappa))$, then $\mu = \Theta(1)$ and $\eta_{\rm KP} = \Theta(\varepsilon)$. The overall query complexity is $O(\kappa \log(\kappa)\log\log(\kappa) + \kappa \log(1/\varepsilon))$. Furthermore, we may perform a closer inspection to derive a simpler formula that holds over a wide range of values of $\kappa$. Suppose that $\kappa \in [3,10^6]$, so that $\sqrt{3+2\ln((\kappa^2+1)/2)} \in [2.49,7.55]$. Choose $\eta$ so that $\mu = 0.25$. This implies 
\begin{align}
    \left(\frac{1+\eta}{1-\eta}\right)^2 = \left(1+\frac{2\eta}{1-\eta}\right)^2 = \left(1+0.5/\sqrt{3+2\ln((\kappa^2+1)/2)}\right)^2 \leq 1.442\,.
\end{align} 
Furthermore 
\begin{align}
    \ln(2/\eta) = \ln(2/\mu) + \ln(\sqrt{3+2\ln((\kappa^2+1)/2)})+ \ln(1+\mu/\sqrt{3+2\ln((\kappa^2+1)/2)}) \leq  4.14
\end{align}
Then, working from Eq.~\eqref{eq:query_complexity_exhaustive_QLSS}, we have
\begin{align}
    Q &\leq 2\cdot 1.442(\ln(\kappa)+1)\lceil (4.14)\frac{\kappa}{2}\rceil + \frac{16}{15}\cdot 2 \cdot \lceil\frac{\kappa}{2}\ln(\varepsilon^{-1}\sqrt{1-\varepsilon^2})-\frac{\kappa}{2}\ln(\sqrt{15}/2) \rceil 
\end{align}
Noting that $\lceil x \rceil \leq x+1$, we can then write
\begin{align}\label{eq:Q_exhaustive_search_practical}
    Q &\leq 5.97 \ln(\kappa)\kappa +5.27\kappa + 1.07 \kappa \ln(\varepsilon^{-1}\sqrt{1-\varepsilon^2}) + 2.89\ln(\kappa) + 5.02\,.
\end{align}
This will be a loose bound compared to Eq.~\eqref{eq:query_complexity_exhaustive_QLSS}, especially for smaller $\kappa$, but it is a genuine upper bound that holds whenever $\kappa \in [3,10^6]$.

We can reduce the prefactor on the first term by a quantity $\sqrt{\log(\kappa)}$ by using (fixed-point) amplitude amplification to boost the chances of successfully preparing $\rho_{\rm ans}$, prior to performing KP. 

\begin{theorem}\label{thm:QLSS_random_t_FPAA}
    Consider \autoref{algo:random_t_full_QLSS}, with  \autoref{line:find_random_t_log} replaced by the fixed-point amplitude amplified version described in \autoref{lem:estimate_norm_random_t_FPAA} for a certain amplification parameter $\delta$, and discretization parameter $d$. Then one can achieve the same results with
    \begin{align}
        Q ={}& \frac{\frac{1}{1-\delta}\left(2\left\lceil \frac{(1+\eta)\sqrt{\ln(\mathcal{R}/\mathcal{L})+1}}{2(1-\eta)} \left(1-\ln(\mathcal{R}/\mathcal{L})^22^{-d-1}\right)^{-1/2} \ln(\frac{2}{\sqrt{\delta}})-\frac{1}{2} \right\rceil +1\right)2\left\lceil \frac{\kappa}{2}\ln(\frac{2}{\eta}) \right\rceil+2\left\lceil \frac{\kappa}{2}\ln(\frac{2}{\eta_{\rm KP}}) \right\rceil}{1-\frac{\eta^2}{(1-\eta)^2}\left(3 + 2\ln\left(\frac{\mathcal{R}^2+\mathcal{L}^2}{2\mathcal{L}^2}\right) + \ln(\mathcal{R}/\mathcal{L})^22^{-d+1} \right)\left(1-\ln(\mathcal{R}/\mathcal{L})^22^{-d-1}\right)^{-1} } 
\end{align}
\end{theorem}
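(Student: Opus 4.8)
The plan is to reuse the proof of \autoref{thm:QLSS_random_t} almost verbatim, substituting \autoref{lem:estimate_norm_random_t_FPAA} for \autoref{lem:estimate_norm_random_t} wherever \autoref{line:find_random_t_log} is analyzed, and then redoing the bookkeeping of the expected number of loop iterations now that that line succeeds with the much larger probability $\ge 1-\delta$.

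For correctness, I would first argue that postselecting the coherent, fixed-point-amplified circuit on the event that all flag qubits are measured in $\ket{0}$ returns exactly the postselected output ensemble of the coherent version of \autoref{algo:random_t_learn_norm}: the phased reflections used by fixed-point amplitude amplification preserve the two-dimensional subspace spanned by the (normalized) success and failure components of the unamplified circuit, so the amplified state lies in that subspace and its projection onto the success component is unchanged --- amplification only raises the weight on that component to $\ge 1-\delta$. Hence the $s$-qubit marginal $\tilde\rho_{\rm ans}$ of the heralded output still obeys $\bra{\vec{x}}\tilde\rho_{\rm ans}\ket{\vec{x}}\ge 1-\mu^2$ with $\mu^2$ the quantity appearing in \autoref{lem:estimate_norm_random_t_FPAA} (which already folds in the discretization error of order $2^{-d}$), and it still has no overlap with $\ker A$ by \autoref{thm:QLSS_known_norm}; entanglement of $\tilde\rho_{\rm ans}$ with the register holding $t$ is harmless because the subsequent KP step uses $G=Q_{\vec{b}}A$, which does not depend on $t$. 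Choosing $\eta_{\rm KP}$ as in Eq.~\eqref{eq:eta_prime} with this value of $\mu$ and invoking \autoref{lem:KP} then yields $\tfrac12\nrm{\ketbra{\vec{x}}-\tilde\rho}_1\le \mu\eta_{\rm KP}/\sqrt{1-\mu^2+\mu^2\eta_{\rm KP}^2}=\varepsilon$; since that quantity is increasing in $\mu$ it is enough to use the stated upper bound on $\mu$, and one only needs $\eta$ small and $d$ large enough that $\mu<1$ --- equivalently, that the denominator of the claimed $Q$ is positive.

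For the query cost I would track the two lines of the algorithm separately, exactly as in \autoref{thm:QLSS_random_t}. Write $C_2$ for the per-call query cost of the amplified \autoref{line:find_random_t_log}, bounded by the expression in \autoref{lem:estimate_norm_random_t_FPAA}, and $C_3 = 2\lceil\tfrac{\kappa}{2}\ln(2/\eta_{\rm KP})\rceil$ for the per-call cost of the KP step. In each pass through the loop, \autoref{line:find_random_t_log} is always executed and succeeds with probability $p_2\ge 1-\delta$; only when it succeeds is the KP step run, and KP then succeeds with probability $p_3\ge 1-\mu^2$ by \autoref{lem:KP}; the algorithm halts precisely when KP succeeds. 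Therefore the expected number of passes is $1/(p_2p_3)$, the expected cost of one pass is $C_2+p_2C_3$, and
\[
Q=\frac{C_2+p_2C_3}{p_2p_3}=\frac{C_2}{p_2p_3}+\frac{C_3}{p_3}\le \frac{1}{1-\mu^2}\Big(\frac{C_2}{1-\delta}+C_3\Big),
\]
which is the asserted formula once $C_2$, $C_3$, and $\mu^2$ are substituted in.

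I expect the main obstacle to be getting the probabilistic accounting exactly right rather than any hard estimate: the naive bound $Q\le (C_2+C_3)/\big((1-\delta)(1-\mu^2)\big)$ is not what is claimed, and recovering the asymmetric $\tfrac{1}{1-\delta}$ multiplying $C_2$ alone requires using the fact that the comparatively expensive KP step runs only on the success branch of \autoref{line:find_random_t_log}. A secondary point needing care is the claim that fixed-point amplification does not weaken the ansatz fidelity fed into \autoref{lem:KP} (and hence the choice of $\eta_{\rm KP}$); this rests on the two-dimensional-subspace invariance noted above, and since the discretization error is already carried inside $\mu^2$ in \autoref{lem:estimate_norm_random_t_FPAA}, no new error analysis is needed.
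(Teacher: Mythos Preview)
Your proposal is correct and follows essentially the same route as the paper: swap \autoref{lem:estimate_norm_random_t_FPAA} for \autoref{lem:estimate_norm_random_t}, then account for the expected number of executions of each line exactly as in \autoref{thm:QLSS_random_t}. Your cost accounting via ``expected cost per pass divided by success probability per pass'' is equivalent to the paper's ``expected number of executions of each line times the per-execution cost,'' and both give $Q\le\frac{1}{1-\mu^2}\bigl(\frac{C_2}{1-\delta}+C_3\bigr)$. If anything, your treatment is more careful than the paper's terse proof: you explicitly justify why fixed-point amplification leaves the postselected output ensemble unchanged (via the two-dimensional subspace invariance), and why the residual entanglement with the $t$-register is harmless for the subsequent KP step, points the paper leaves implicit.
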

\begin{proof}
    This follows by following the same analysis as in \autoref{thm:QLSS_known_norm}, except using \autoref{lem:estimate_norm_random_t_FPAA} in place of \autoref{lem:estimate_norm_random_t}. The probability that both \autoref{line:find_random_t_log} and \autoref{line:run_KP_log} succeed is at least $(1-\delta)(1-\mu^2)$, so the expected number of times the \autoref{line:find_random_t_log} must run is at most the inverse of this quantity. As before \autoref{line:run_KP_log} must be repeated an expected number $(1-\mu^2)^{-1}$ times, leading to the quoted expression. 
\end{proof}

We now turn the complexity statement of \autoref{thm:QLSS_random_t_FPAA} into a looser but simpler upper bound. First of all, take $d \rightarrow \infty$, as the bound on the query complexity is monotonically decreasing with $d$. Let $[\mathcal{L},\mathcal{R}] = [1,\kappa]$. Choose $\delta = 1/4$ and 
note $\ln(2/\sqrt{\delta})/(1-\delta) \leq 1.85$. As previously, choose $\eta$ so that $\mu = 0.25$, and note $\ln(2/\eta) \leq 4.14$ when $\kappa \in [3, 10^6]$. We have
\begin{align}\label{eq:Q_FPAA_practical}
    Q &\leq\frac{\left(1.85 \cdot \sqrt{1.45} \cdot \sqrt{\ln(\kappa)+1} + \frac{8}{3} \right)\left(4.14 \kappa + 2\right) + \kappa \ln(\varepsilon^{-1}\sqrt{1-\varepsilon^2})-\kappa\ln(\sqrt{15}/2) +2}{15/16} \\
    &\leq 9.84 \sqrt{\log(\kappa)+1}\kappa + 11.1 \kappa + 1.07 \kappa \ln(\varepsilon^{-1}\sqrt{1-\varepsilon^2}) + 4.76 \sqrt{\ln(\kappa)+1} + 7.83
\end{align}

\subsection{Optimal QLSS}\label{app:explicit_optimal_QLSS}

In this subsection, we provide a rigorous and explicit version of the full QLSS with optimal dependence on $\kappa$, following a similar idea as that presented in \autoref{sec:estimate_by_adiabatic}. The algorithm leverages the family of matrices $\bar{A}_\sigma$ defined in Eq.~\eqref{eq:barA_sigma} to estimate $\nrm{\vec{x}}$. The algorithm is described in \autoref{algo:full_QLSS_constant_prefactors}, and we give explicit bounds on the expected query complexity in \autoref{thm:optimal_QLSS}. 

The algorithm takes in the QLSP parameters $A, \vec{b}, \kappa, \varepsilon$. It also chooses a few additional free parameters, $\hat{c}, \hat{q}, \hat{\beta},\hat{r}, \hat{\chi}, \hat{\Delta}$; we denote these free parameters with hats for easy reference. These are independent of $\kappa$ and $\varepsilon$, and they must satisfy $\hat{c} > 1$, $\hat{q} > 1$,  $\hat{\beta} > 1$, $0 < \hat{\chi} < 1 $, and $0 < \hat{\Delta} <1$. We give an expression for the query complexity that depends on these parameters. A numerical optimization of this expression suggests a locally optimal choice for the parameters that we report in \autoref{tab:free_params}. 
\begin{table}[h!]
    \centering
    \begin{tabular}{|c|c|c|}
        \hline
        \textbf{param} & \textbf{value} & \textbf{role} \\
        \hline
        $\hat{\beta}$ & 15.4 & target approximation ratio for $\nrm{\vec{x}}$ prior to final step \\
        \hline
        $\hat{\chi}$ & 0.0398 & magnitude of errors in kernel reflection on final step \\
        \hline
        $\hat{c}$ & 20.0 & factor by which effective condition number $\sigma^{-1}$ is increased each step  \\
        \hline
        $\hat{r}$ & 3.37 & factor by which target approximation ratio decreases each step \\
        \hline 
        $\hat{q}$ & 5.41 & factor by which kernel reflection error parameter is increased each step \\
        \hline 
        $\hat{\Delta}$ & 0.00424 & upper bound on total probability of repeated failure of kernel reflection\\
        \hline
    \end{tabular}
    \caption{Summary of free parameters in \autoref{algo:full_QLSS_constant_prefactors}, along with their role. A specific choice for each parameter is provided which leads the final complexity statement to achieve a local minimum.}\label{tab:free_params}
\end{table}

Additionally, in its operation and analysis, we define the following parameters to depend on the above. The index $j$ ranges from 1 to $J$, where $J$ is defined in the first line. 
\begin{equation}\label{eq:optimal_QLSS_param_relations}
\begin{split}
    J &= \lceil \ln(\kappa)/ \ln(\hat{c}    )\rceil \\
    \sigma_j &= \kappa^{-1}\hat{c}^{J-j} \in [\kappa^{-1},1] \\
    \beta_j &= \hat{\beta}\hat{r}^{J-j-1} \\
    \chi_j &= \hat{\chi} \hat{q}^{-J+j} \\
    \eta_j &= \chi_j / (1+\chi_j) \\ 
    m_j &= (J-j+1)\ln(
         \hat{\Delta}^{-1}+1)\frac{(1+\eta_j)^2(\ln(\hat{c}\hat{\beta}^2 \hat{r}^{2J-2j})+1)}{(1-\eta_j)^2}  \\
    \bar{P} &= \frac{4}{\hat{\beta}^2(1-\hat{r}^{-2})}
        +\frac{16\hat{\chi}}{\hat{q}\hat{\beta}^2(1-\hat{r}^{-2}\hat{q}^{-1})}
        +\frac{16\hat{\chi}^2}{\hat{q}^2\hat{\beta}^2(1-\hat{r}^{-2}\hat{q}^{-2})}
        +  \frac{2\hat{\chi}^2\ln(e\hat{c}^2\hat{r}^4)}{\hat{q}^2-1} \\
    \mu &= \sqrt{1-\left(1-\hat{\Delta} - \bar{P}\right)\left(1 - 2\hat{\chi}^2\left(\frac{3}{2} + \ln\left(\frac{\hat{c}\hat{\beta}^2+1}{2}\right) \right)\right)}\\
    \eta_{\mathrm KP} &= \frac{\varepsilon}{\mu}\frac{\sqrt{1-\mu^2}}{\sqrt{1-\varepsilon^2}} 
\end{split}
\end{equation}

\begin{algorithm}
\caption{Full optimal  QLSS}\label{algo:full_QLSS_constant_prefactors}
\DontPrintSemicolon
\SetAlgoLined
\LinesNumbered
\Input{$(A, \vec{b}, \kappa, \varepsilon)$ and tunable parameters $(\hat{c}, \hat{q}, \hat{\beta},\hat{r}, \hat{\chi}, \hat{\Delta})$}
\Output{$(t, \ket{\vec{\tilde{x}}})$}
\BlankLine
Let $J$ be given as in Eq.~\eqref{eq:optimal_QLSS_param_relations} \label{line:fullQLSS_line1}\; 
Let $\beta_0 = 1$ and $t_0 = 1$\;
\For{$j = 1,\ldots, J$}
{
Let $\sigma_{j}$, $\eta_j$, $m_j$, $\beta_j$, be given as in Eq.~\eqref{eq:optimal_QLSS_param_relations}\;
Let $[\mathcal{L}_j, \mathcal{R}_j] = [\max(1,t_{j-1}/\beta_{j-1}), \min(\sigma_j^{-1}, \hat{c}t_{j-1} \beta_{j-1})]$\;
Let $\bar{A}_{\sigma_j}$ be given as in Eq.~\eqref{eq:barA_sigma} and its block-encoding be constructed as in \autoref{fig:block_encoding_barAsigma}\;
If $j < J$, run \autoref{algo:random_t_learn_norm} with parameters $(\bar{A}_{\sigma_j},\vec{b},[\mathcal{L}_j,\mathcal{R}_j],\sigma_j^{-1},\eta_j)$, and if $j=J$ use parameters $(A,\vec{b},[\mathcal{L}_j,\mathcal{R}_j],\kappa,\eta_j)$. If it fails, repeat this step up to $m_j$ times. If all $m_j$ repetitions fail, return to \autoref{line:fullQLSS_line1}. Upon first success, denote its output by $(t_j,\ket{\vec{x_j}})$ and continue. \label{line:optimal_QLSS_stepj}\; 
}
Let $\eta_{\mathrm{KP}}$ be given as in Eq.~\eqref{eq:optimal_QLSS_param_relations} \;
Apply KP to $\ket{\vec{x_{J}}}$ using the matrix $G$ with parameters $(\kappa,\eta_{\rm KP})$. If it fails, go to \autoref{line:fullQLSS_line1}. If it succeeds, denote its output by $\ket{\vec{\tilde{x}}}$.\label{line:optimal_QLSS_KP}\;\label{line:run_KP_optimal_QLSS}
\Return{$(t = t_J , \ket{\vec{\tilde{x}}})$}
\end{algorithm}

\begin{theorem}\label{thm:optimal_QLSS}
Suppose that $\nrm{\vec{b}} = 1$, that $\vec{b}$ is in the column space of $A$, and that all nonzero singular values of $A$ lie in the interval $[\kappa^{-1},1]$. Let $U_A$ be a $(1,a)$-block-encoding of $A$, and $U_{\vec{b}}$ be a state-preparation unitary for $\ket{\vec{b}}$.  Let $\vec{x}$ denote the unique vector of minimum norm $\nrm{\vec{x}}$ for which $A\vec{x} = \vec{b}$. Fix $\varepsilon > 0$. Consider \autoref{algo:full_QLSS_constant_prefactors}, denoting its output by $(t,\ket{\vec{\tilde{x}}})$, using the parameter choices listed in \autoref{tab:free_params}.  

Then, the ensemble of outputs $\ketbra{\vec{\tilde{x}}}$, denoted by $\tilde{\rho}$, satisfies $\frac{1}{2} \nrm{\ketbra{\vec{x}}-\tilde{\rho}}_1 \leq \varepsilon$, meaning the output solves the QLSP. Furthermore, the expected number of queries required is $Q$ total queries to $U_A$, $U_A^\dagger$ and their controlled versions, as well as $2Q$ total queries to $U_{\vec{b}}$, $U_{\vec{b}}^\dagger$, and their controlled versions, where
\begin{align}\label{eq:query_complexity_optimal_QLSS}
    Q \leq 56.0 \kappa + 1.05 \kappa \ln(\frac{\sqrt{1-\varepsilon^2}}{\varepsilon})  + 2.78 \ln(\kappa)^3 + 3.17
\end{align}
\end{theorem}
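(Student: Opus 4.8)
The plan is to analyze \autoref{algo:full_QLSS_constant_prefactors} step by step, combining the machinery already developed: the norm-estimation guarantees of \autoref{lem:estimate_norm_random_t}, the KP-refinement bound of \autoref{lem:KP}, and properties 1--3 of the family $\bar{A}_\sigma$ established just above Eq.~\eqref{eq:barx_sigma}. First I would set up the inductive invariant for the $j$-loop: with high probability, $t_j$ is a multiplicative $\beta_j$-approximation to $\nrm{\vec{\bar{x}_{\sigma_j}}}$, where $\bar{A}_{\sigma_J} = A$ so that $t_J$ is a $\beta_J = \hat\beta$-approximation to $\nrm{\vec{x}}$. The key structural fact, from property 3, is that passing from $\sigma_{j-1}$ to $\sigma_j = \sigma_{j-1}/\hat c$ changes the norm by at most a factor $\hat c$, so if $\nrm{\vec{\bar{x}_{\sigma_{j-1}}}} \in [t_{j-1}/\beta_{j-1}, \beta_{j-1}t_{j-1}]$ then $\nrm{\vec{\bar{x}_{\sigma_j}}}$ lies in the interval $[\mathcal{L}_j,\mathcal{R}_j]$ defined in the algorithm, of multiplicative width $O(\hat c \hat\beta_{j-1}^2)$. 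Thus each call to \autoref{algo:random_t_learn_norm} operates on an interval whose log-width $\ln(\mathcal{R}_j/\mathcal{L}_j)$ is $O(1)$ (not $O(\log\kappa)$), and the effective condition number is only $\sigma_j^{-1} = \kappa^{-1}\hat c^{J-j} \ll \kappa$ except at the final step.

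Next I would bound the per-step failure probability and the per-step cost. For step $j$, one invocation of \autoref{algo:random_t_learn_norm} fails with probability $1-q_{\rm succ}$, where by Eq.~\eqref{eq:q_succ_lowerbound} $q_{\rm succ}^{-1} = O(\ln(\mathcal{R}_j/\mathcal{L}_j)+1) = O(1)$, so the expected number of invocations is $O(1)$; repeating up to $m_j$ times and then conditioning drives the ``total failure of all $m_j$ tries'' probability down to $\hat\Delta^{(J-j+1)}$ (geometric in $J-j$), which is the log-log-trick manifestation — more amplification early where $\sigma_j^{-1}$ is small and queries are cheap. The probability that $t_j$ is not a valid $\beta_j$-approximation, conditioned on $t_{j-1}$ being valid, is controlled by Eq.~\eqref{eq:prob_deviate_more_than_beta} with $\beta = \hat r$ (so $\beta_j = \beta_{j-1}/\hat r$ on the rescaled norm, absorbing the $\hat c$-widening); summing these over $j$ and adding the $\hat\Delta$-type terms gives the constant $\bar P$ in Eq.~\eqref{eq:optimal_QLSS_param_relations}, and the union bound shows all steps succeed with probability $\ge 1-\hat\Delta-\bar P$. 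The expected cost of step $j$ is $O(m_j \cdot \sigma_j^{-1}\ln(1/\eta_j))$; since $m_j = O((J-j+1))$, $\sigma_j^{-1} = \kappa \hat c^{-(J-j)}$, and $\ln(1/\eta_j) = O(J-j)$, this is $O(\kappa\, \hat c^{-(J-j)}(J-j)^2)$, and summing the geometric-times-polynomial series over $j$ (substituting $j' = J-j$, extending to $\infty$) yields $O(\kappa)$ total for all the norm-estimation steps, with the leading constant assembled from the free parameters in \autoref{tab:free_params}.

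Finally I would handle the concluding KP step (\autoref{line:run_KP_optimal_QLSS}): conditioned on all $j$-steps succeeding, the state $\ket{\vec{x_J}}$ is the ansatz state of \autoref{thm:QLSS_known_norm} with $t = t_J$, $\beta = \hat\beta$, so it has overlap $1-\mu^2$ with $\ket{\vec{x}}$ where $\mu$ is as in Eq.~\eqref{eq:optimal_QLSS_param_relations} (combining the $\bar P$/$\hat\Delta$ conditioning loss with the intrinsic $O(\hat\chi^2)$ error of KR, using the bound on $\bra{\vec{x}}\tilde\rho\ket{\vec{x}}$ from \autoref{lem:estimate_norm_random_t}); then \autoref{lem:KP} with $\eta_{\rm KP}$ chosen as in Eq.~\eqref{eq:optimal_QLSS_param_relations} gives trace distance exactly $\varepsilon$, and the KP step runs an expected $(1-\mu^2)^{-1}$ times at cost $O(\kappa\ln(1/\eta_{\rm KP})) = O(\kappa\ln(\sqrt{1-\varepsilon^2}/\varepsilon))$, which is the $1.05\,\kappa\ln(\sqrt{1-\varepsilon^2}/\varepsilon)$ term. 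Adding the three contributions — $O(\kappa)$ from norm estimation, $O(\kappa\ln(1/\varepsilon))$ from final KP, and an $O(\ln(\kappa)^3)$ additive term coming from the $\lceil\cdot\rceil$ ceilings in the $m_j$ and $\ell$ expressions (the worst ceiling contributes $O(\ln(\kappa))$ per step across $O(\ln\kappa)$ steps, with an extra $\ln\kappa$ from $m_j$'s linear growth) — and plugging in \autoref{tab:free_params}, produces the bound Eq.~\eqref{eq:query_complexity_optimal_QLSS}. I expect the main obstacle to be the careful bookkeeping of the two interacting probabilistic failure modes (the ``all $m_j$ repetitions fail, restart'' events versus the ``$t_j$ is an invalid approximation'' events) and ensuring the union bound over all $J = O(\log\kappa)$ steps still leaves a constant success probability with the chosen parameters; secondarily, the tedious but routine task of tracking every constant through the geometric sums to certify the explicit $56.0$ and $1.05$ prefactors.
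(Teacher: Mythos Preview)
Your proposal is correct and follows essentially the same approach as the paper's proof: the inductive good-estimate invariant via property~3, the two failure modes (bad $t_j$ versus all-$m_j$-repetitions-fail) combined by union bound into $\bar P$ and $\hat\Delta$, the geometric-times-polynomial summation over $j$, and the final KP step via \autoref{lem:KP}. Two minor bookkeeping slips to fix when you write it out: in Eq.~\eqref{eq:prob_deviate_more_than_beta} the parameter should be $\beta = \beta_j = \hat\beta\hat r^{J-j-1}$ (not $\hat r$), which is what makes $\sum_j p_j$ a geometric series; and $m_j$ grows as $O((J-j)^2)$, not $O(J-j+1)$, because of the $\ln(\hat r^{2J-2j})$ factor in $\ln(\mathcal{R}_j/\mathcal{L}_j)$---this pushes the per-step cost to $O(\kappa\,\hat c^{-(J-j)}(J-j)^3)$, which still sums to $O(\kappa)$ and is precisely what generates the $Z_3$ term and the $\ln(\kappa)^3$ additive piece in the paper's final bound.
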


\begin{proof}

    The first part of the algorithm consists of computing a sequence of estimates $t_j$ for $j=1,\ldots,J = \lceil \ln(\kappa)/\ln(\hat{c})\rceil$. The final step of this sequence also produces a state $\ket{\vec{x_J}}$. For any value of $j$, we say that step $j$ ``succeeds'' if the algorithm advances to step $j+1$, and step $j$ ``fails'' if all $m_j$ repetitions within \autoref{line:optimal_QLSS_stepj} fail, causing the algorithm to restart and return to \autoref{line:fullQLSS_line1}. We say that the algorithm begins a new ``cycle'' when one of the steps fails and it returns to \autoref{line:fullQLSS_line1}, with each cycle being completely independent of the previous cycles.  
    
    Conditioned on steps $1,2,\ldots,j$ all succeeding, there is some fixed ensemble of outputs $t_j$. We argue that this ensemble is concentrated near $\nrm{\vec{\bar{x}_{\sigma_j}}}$ for all $j$. In particular, we will call an estimate $t_j$ ``good'' if $t_j \in [\beta_j^{-1}\nrm{\vec{\bar{x}_{\sigma_j}}}, \beta_j \nrm{\vec{\bar{x}_{\sigma_j}}}]$ and ``bad'' otherwise. Furthermore, we recall from Property 3 shown in \autoref{sec:estimate_by_adiabatic} that \begin{equation}
        \nrm{\vec{\bar{x}_{\sigma_{j}}}} \leq \nrm{\vec{\bar{x}_{\sigma_{j+1}}}} \leq \nrm{\vec{\bar{x}_{\sigma_{j}}}} \sigma_{j} / \sigma_{j+1} = \hat{c}\nrm{\vec{\bar{x}_{\sigma_{j-1}}}}
    \end{equation}
    Thus, if $t_j$ is good, it holds that $\nrm{\vec{\bar{x}_{\sigma_{j+1}}}} \in [\beta_j^{-1}t_j, \beta_j \hat{c}t_j]$. Separately we know that $\nrm{\vec{\bar{x}_{\sigma_{j+1}}}} \in [1,\sigma_{j+1}^{-1}]$ since $\bar{x}_{\sigma_{j+1}}$ is the solution to a linear system with condition number at most $\sigma_{j+1}^{-1}$, so we conclude that $\nrm{\vec{\bar{x}_{\sigma_{j+1}}}} \in [\beta_j^{-1}t_j, \beta_j \hat{c}t_j] \cap [1,\sigma_{j+1}^{-1}] = [\mathcal{L}_{j+1}, \mathcal{R}_{j+1}]$. 

    Let $p_j$ denote the probability that $t_j$ is bad conditioned on $\nrm{\vec{\bar{x}_{\sigma_j}}} \in [\mathcal{L}_j, \mathcal{R}_j]$, and note that by \autoref{lem:estimate_norm_random_t}, we can bound $p_j$ by 
    \begin{equation}\label{eq:pj_bound}
        p_j := \Pr[t_j \text{ is bad}\;\; \big|\;\; \nrm{\vec{\bar{x}_{\sigma_j}}} \in [\mathcal{L}_j, \mathcal{R}_j] ] \leq \frac{(1+\eta_j)^2}{(1-\eta_j)^2}\frac{4}{\beta_j^2+1} + \frac{2\eta_j^2(2\ln(\mathcal{R}_j/\mathcal{L}_j)+1-4\ln(\beta_j))}{(1-\eta_j)^2}\,.
    \end{equation}

    Moreover, let $\delta_j$ be the probability that all $m_j$ repetitions of \autoref{algo:random_t_learn_norm} within step $j$ fail, causing the algorithm to return to \autoref{line:fullQLSS_line1}, conditioned on $\nrm{\vec{\bar{x}_{\sigma_j}}} \in [\mathcal{L}_j, \mathcal{R}_j]$. We know from \autoref{lem:estimate_norm_random_t} that the probability of success of each run of \autoref{algo:random_t_learn_norm} is at least $q_{\rm succ}$ where $q_{\rm succ}$ is bounded in Eq.~\eqref{eq:q_succ_lowerbound}, and thus $\delta_j \leq (1-q_{\rm succ})^{m_j}$. That is,
    \begin{align}\label{eq:deltaj_bound}
        \delta_j \leq \left(1-\frac{(1-\eta_j)^2}{(1+\eta_j)^2(\ln(\mathcal{R}_j/\mathcal{L}_j)+1)}\right)^{m_j}\,.
    \end{align}

    Define
    \begin{align}
        P &= \sum_{j=1}^{J-1} p_j \\
        \Delta &= \sum_{j=1}^{J} \delta_j\,.
    \end{align}
    If the relation $\nrm{\vec{\bar{x}_{\sigma_{j}}}} \in [\mathcal{L}_{j}, \mathcal{R}_{j}]$ is satisfied, then step $j$ will succeed and output a $t_{j}$ that is good, except with probability at most $\delta_j + p_j$. Then, as argued above, when $t_j$ is good, $\nrm{\vec{\bar{x}_{\sigma_{j+1}}}} \in [\mathcal{L}_{j+1}, \mathcal{R}_{j+1}]$ is satisfied. By the union bound, the probability that one of the steps fails (resetting the cycle before it completes) or all succeed but one of the $t_1, \ldots, t_{J-1}$ is bad, can be upper bounded by $P+\Delta$. \textit{In other words, with probability at least $1-P-\Delta$, steps $1,\ldots,j$ succeed and all outputs $t_1,\ldots, t_{j-1}$ are good. }

    Now, we verify that the output of the algorithm solves the QLSP. With probability $1-P-\Delta$, in any given cycle, step $J$ is executed on input parameters satisfying $\nrm{\vec{\bar{x}_{\sigma_J}}} = \nrm{\vec{x}} \in [\mathcal{L}_J, \mathcal{R}_J]$, and it succeeds. In this scenario, \autoref{lem:estimate_norm_random_t} applies, and the ensemble $\tilde{\rho}_{J, \rm good}$ of ouptuts $\ketbra{\vec{\tilde{x}_J}}$ satisfies
    \begin{align}
        \bra{\vec{x}}\tilde{\rho}_{J, \rm good}\ket{\vec{x}} &\geq 1- \frac{2\eta_J^2}{(1-\eta_J)^2}\left(\frac{3}{2} + \ln\left(\frac{\mathcal{R}_J^2+\mathcal{L}_J^2}{2\mathcal{L}_J^2}\right) \right) 
    \end{align}
    With probability at most $P+\Delta$, either the cycle does not reach the KP step (\autoref{line:optimal_QLSS_KP}), or it does reach the KP step but $\nrm{\vec{\bar{x}_{\sigma_J}}} \in [\mathcal{L}_J, \mathcal{R}_J]$ does not hold, and thus \autoref{lem:estimate_norm_random_t} does not apply. In this scenario, we may still say that the ensemble $\tilde{\rho}_{J, \rm bad}$ of outputs of step $J$ (conditioned on an output existing) satisfies  $
        \bra{\vec{x}}\tilde{\rho}_{J, \rm bad}\ket{\vec{x}} \geq 0$. 
    Weighting these two ensembles by their (worst-case) probabilities, we conclude that the ensemble $\tilde{\rho}_J$ of outputs of step $J$ satisfies
    \begin{equation}\label{eq:<x|rho_J|x>_prelim}
        \bra{\vec{x}}\tilde{\rho}_J\ket{\vec{x}} \geq (1-P-\Delta)\bra{\vec{x}}\tilde{\rho}_{J, \rm good}\ket{\vec{x}} \geq  (1-P-\Delta)\left(1- \frac{2\eta_J^2}{(1-\eta_J)^2}\left(\frac{3}{2} + \ln\left(\frac{\mathcal{R}_J^2+\mathcal{L}_J^2}{2\mathcal{L}_J^2}\right) \right) \right)
    \end{equation}

    We now compute some bounds on the above using the particular parameter choices we have made, with the goal of showing that $\bra{\vec{x}}\tilde{\rho}_J\ket{\vec{x}} \geq 1-\mu^2$, with $\mu$ given in Eq.~\eqref{eq:optimal_QLSS_param_relations}. First, we note that by our definitions, for all $j$, $\eta_j /(1+\eta_j) = \chi_j$, and $(1+\eta_j)/(1-\eta_j) = 1+2\chi_j$. Furthermore, for $j \geq 2$, we have $\mathcal{R}_j/\mathcal{L}_j \leq \hat{c} \hat{\beta}_{j-1}^2 = \hat{c}\hat{\beta}^2 \hat{r}^{2J-2j}$. For $j=1$, we have $\mathcal{R}_1/\mathcal{L}_1 = \sigma_1^{-1} \leq \hat{c} \leq \hat{c}\hat{\beta}^2 \hat{r}^{2J}$ (since $\hat{\beta} > 1$ and $\hat{r} > 1$). Thus, referencing Eq.~\eqref{eq:pj_bound} and substituting $k = J-j-1$, we have
    \begin{align}
        P &= \sum_{j=1}^{J-1} p_j \leq \sum_{j=1}^{J-1} \left[\frac{(1+\eta_j)^2}{(1-\eta_j)^2}\frac{4}{\beta_j^2+1} + \frac{2\eta_j^2(2\ln(\mathcal{R}_j/\mathcal{L}_j)+1-4\ln(\beta_j))}{(1-\eta_j)^2} \right]\\
        &\leq \sum_{j=1}^{J-1}\left[4(1+2\chi_j)^2\beta_j^{-2}+ 2\chi_j^2\left(2\ln(\hat{c}\hat{\beta}_{j-1}^2)+1-4\ln(\beta_j)\right) \right] \\
        &= \sum_{j=1}^{J-1}\left[ 4(1+2\hat{\chi}\hat{q}^{-J+j})^2\hat{\beta}^{-2}\hat{r}^{-2J+2j} + 2\hat{\chi}^2\hat{q}^{-2J+2j}\left(2\ln(\hat{c}\hat{\beta}^2 \hat{r}^{2J-2j})+1-4\ln(\hat{\beta}^2 \hat{r}^{J-j-1})\right)\right] \\
        &= \sum_{k=0}^{J-2} 
        \left[4(1+2\hat{\chi}\hat{q}^{-k-1})^2\hat{\beta}^{-2}\hat{r}^{-2k} + 2\hat{\chi}^2\hat{q}^{-2k-2}\ln(e\hat{c}^2\hat{r}^{4})\right] \\
        &\leq \left[\frac{4}{\hat{\beta}^2}\sum_{k=0}^\infty \hat{r}^{-2k} \right] +\left[\frac{16\hat{\chi}}{\hat{q}\hat{\beta}^2}\sum_{k=0}^\infty \hat{r}^{-2k}\hat{q}^{-k} \right] +\left[\frac{16\hat{\chi}^2}{\hat{q}^2\hat{\beta}^2}\sum_{k=0}^\infty \hat{r}^{-2k}\hat{q}^{-2k} \right] +  \left[\frac{2\hat{\hat{\chi}}^2\ln(e\hat{c}^2\hat{r}^4)}{\hat{q}^2}\sum_{k=0}^\infty \hat{q}^{-2k} \right] \\
        &= \frac{4}{\hat{\beta}^2(1-\hat{r}^{-2})}
        +\frac{16\hat{\chi}}{\hat{q}\hat{\beta}^2(1-\hat{r}^{-2}\hat{q}^{-1})}
        +\frac{16\hat{\chi}^2}{\hat{q}^2\hat{\beta}^2(1-\hat{r}^{-2}\hat{q}^{-2})}
        +  \frac{2\hat{\chi}^2\ln(e\hat{c}^2\hat{r}^4)}{\hat{q}^2(1-\hat{q}^{-2})} \\
        &= \bar{P}
    \end{align}
    where $\bar{P}$ was defined in Eq.~\eqref{eq:optimal_QLSS_param_relations}. Similarly, working from Eq.~\eqref{eq:deltaj_bound}, we bound $\Delta$, here utilizing the relation $1/x \geq -1/\ln(1-x)$, the substitution $k=J-j+1$, and the definition of $m_j$ in Eq.~\eqref{eq:optimal_QLSS_param_relations},
    \begin{align}
        \Delta &= \sum_{j=1}^J \delta_j \leq \sum_{j=   }^J\left(1-\frac{(1-\eta_j)^2}{(1+\eta_j)^2(\ln(\mathcal{R}_j/\mathcal{L}_j)+1)}\right)^{m_j} \\
        &\leq \sum_{j=1}^J\left(1-\frac{(1-\eta_j)^2}{(1+\eta_j)^2\left(\ln(\hat{c}\hat{\beta}^2 \hat{r}^{2J-2j})+1\right)}\right)^{(J-j+1)\ln(
        \hat{\Delta}^{-1}+1)\frac{(1+\eta_j)^2(\ln(\hat{c}\hat{\beta}^2 \hat{r}^{2J-2j})+1)}{(1-\eta_j)^2}} \\
        &\leq \sum_{j=1}^J\left(1-\frac{(1-\eta_j)^2}{(1+\eta_j)^2(\ln(\hat{c}\hat{\beta}^2 \hat{r}^{2J-2j})+1)}\right)^{(J-j+1)\frac{\ln(\frac{1}{
        \hat{\Delta}^{-1}+1})}{\ln(1-\frac{(1-\eta_j)^2}{(1+\eta_j)^2(\ln(\hat{c}\hat{\beta}^2 \hat{r}^{2J-2j})+1)})}} \\
        &= \sum_{j=1}^J \left(\frac{\hat{\Delta}}{1+\hat{\Delta}}\right)^{J-j+1} \leq \sum_{k=1}^\infty \frac{\hat{\Delta}^k}{(1+\hat{\Delta})^k} = \hat{\Delta}
    \end{align}
    Thus we have
    \begin{align}
        \bra{\vec{x}}\tilde{\rho}_J\ket{\vec{x}} &\geq   (1-\hat{\Delta} - \bar{P})\left(1 - \frac{2\eta_J^2}{(1-\eta_J)^2}\left(\frac{3}{2} + \ln\left(\frac{\mathcal{R}_J^2+\mathcal{L}_J^2}{2\mathcal{L}_J^2}\right) \right)\right) \\
        &\geq 1-\left(1-\left(1-\hat{\Delta} - \bar{P}\right)\left(1 - 2\hat{\chi}^2\left(\frac{3}{2} + \ln\left(\frac{\hat{c}\hat{\beta}^2+1}{2}\right) \right)\right)\right)\\
        &= 1-\mu^2\,.
    \end{align}
    where $\mu$ was defined in Eq.~\eqref{eq:optimal_QLSS_param_relations}. 
    The subsequent KP step can then be analyzed with \autoref{lem:KP}. We observe that the output $\ket{\vec{x_J}}$ has no overlap with the kernel of $A$ (see \autoref{thm:QLSS_known_norm}). Thus, \autoref{lem:KP} asserts that the output of the KP step satisfies
\begin{equation}
    \frac{1}{2}\nrm{\ketbra{\vec{x}}-\tilde{\rho}}_1 \leq \frac{\mu\eta_{\rm KP}}{\sqrt{1-\mu^2+\mu^2\eta_{\rm KP}^2}} = \varepsilon\,,
\end{equation}
where the last equality is ensured by the choice of $\eta_{\rm KP}$ in Eq.~\eqref{eq:optimal_QLSS_param_relations}. Thus we have verified that the ensemble output by the KP step, when it succeeds, solves the QLSP. 

Next, we give an upper bound on the overall expected query complexity. Since each cycle is independent of the previous cycle, this query complexity is given by the expected number of cycles times the expected number of queries per cycle. The expected number of cycles is the inverse of the probability that all steps $1, \ldots, J$ succeed, and then the KP step also succeeds. This quantity is at least the probability that all steps $1, \ldots, J$ succeed and that all estimates $t_1,\ldots, t_{J-1}$ are good, times the probability the KP step succeeds conditioned on the output $\tilde{\rho}_{J, \rm good}$. This can then be lower bounded by  $(1-P-\Delta)\bra{\vec{x}}\tilde{\rho}_{J,\rm good}\ket{\vec{x}}$, which was itself lower bounded by $1-\mu^2$ above. Thus, we conclude that the expected  number of cycles is at most $(1-\mu^2)^{-1}$. 

Let $\mathcal{Q}_j$ be the expected query complexity incurred by step $j$ within any given cycle, for $j=1,\ldots, J$, and let $\mathcal{Q}_{\rm KP}$ be the expected query complexity incurred by the KP step (\autoref{line:optimal_QLSS_KP}) in any given cycle. We may then make the upper bound on the total expected query complexity 
\begin{equation}\label{eq:Qbound_fromQjQKP}
    Q \leq \frac{\mathcal{Q}_{\rm KP} + \sum_{j=1}^J \mathcal{Q}_{j}}{1-\mu^2}
\end{equation}
We now provide expressions upper bounding $\mathcal{Q}_j$ and $\mathcal{Q}_{\rm KP}$. We know that when step $j$ executes,  the probability that $\nrm{\vec{\bar{x}_{\sigma_j}}} \in [\mathcal{L}_j, \mathcal{R}_j]$ is at least $1-P-\Delta$. When this is the case, each call to \autoref{algo:random_t_learn_norm} within step $j$ succeeds with probability lower bounded by Eq.~\eqref{eq:q_succ_lowerbound}, and the expected number of calls to  \autoref{algo:random_t_learn_norm} is at most the inverse of that quantity. With probability at most $P+\Delta$, we have no such guarantee, but we may still say that the expected number of calls to \autoref{algo:random_t_learn_norm} is $m_j$. Each time \autoref{algo:random_t_learn_norm} is called, it has a fixed query complexity equal to $2\lceil \sigma_j^{-1} \ln(2/\eta_j)/2\rceil$. Thus, in total we have
\begin{equation}\label{eq:Qj_bound}
    \mathcal{Q}_j \leq \left((1-P-\Delta)\frac{(1+\eta_j)^2\left(\ln(\hat{c}\hat{\beta}_{j-1}^2)+1\right)}{(1-\eta_j)^2} + (P+\Delta) m_j\right) 2\lceil \sigma_j^{-1} \ln(2/\eta_j)/2\rceil
\end{equation}
On the other hand, the KP step executes at most once per cycle, and it has the same query complexity every time it executes, given by $2\lceil \kappa \ln(2/\eta_{\rm KP})/2\rceil $. Thus we have
\begin{equation}\label{eq:QKP_bound}
    \mathcal{Q}_{\rm KP} \leq 2\lceil \kappa \ln(2/\eta_{\rm KP})/2\rceil \leq \kappa \ln(\frac{\sqrt{1-\varepsilon^2}}{\varepsilon}) - \kappa \ln(\frac{\sqrt{1-\mu^2}}{2\mu})+2
\end{equation}
where we have used the bound $\lceil x \rceil \leq x+1$ and substituted the definition of $\eta_{\rm KP}$ from Eq.~\eqref{eq:optimal_QLSS_param_relations}. 

We now update the bound on $\mathcal{Q}_j$ for our particular parameter choices. We note the relations $(1+\eta_j)/(1-\eta_j) \leq 1+2\hat{\chi}$ and $1/\eta_j = (1+\chi_j)/\chi_j \leq (1+\hat{\chi})/\chi_j = (1+\hat{\chi})\hat{q}^{J-j}/\hat{\chi}$. We work from Eq.~\eqref{eq:Qj_bound}, using the definition of $m_j$ in Eq.~\eqref{eq:optimal_QLSS_param_relations}, to write
\begin{align}
    \mathcal{Q}_j \leq{}&\left(\frac{\ln(\hat{c}\hat{\beta}^2\hat{r}^{2J-2j})+1}{(1-\eta_j)^2(1+\eta_j)^{-2}} + (P+\Delta)\left(m_j- \frac{\ln(\hat{c}\hat{\beta}^2\hat{r}^{2J-2j})+1}{(1-\eta_j)^2(1+\eta_j)^{-2}}\right)\right) 2\lceil \sigma_j^{-1} \ln(2/\eta_j)/2\rceil \\
    \leq{}& \frac{\ln(\hat{c}\hat{\beta}^2\hat{r}^{2J-2j})+1}{(1-\eta_j)^2(1+\eta_j)^{-2}}\left(1 + (\bar{P}+\hat{\Delta})\left(\ln(\hat{\Delta}^{-1}+1)(J-j+1)-1\right)\right) \left( \sigma_j^{-1} \ln(2/\eta_j)+2\right) \\
    \nonumber ={}& (1+2\hat{\chi})^2\left(\ln(e\hat{c}\hat{\beta}^2) + (2J-2j)\ln(\hat{r})\right)\left(1 + (\bar{P}+\hat{\Delta})\left(\ln(\hat{\Delta}^{-1}+1)(J-j+1)-1\right)\right) \\
        &\qquad \times \left( \kappa \hat{c}^{-J+j}\ln(2\hat{\chi}^{-1}(1+\hat{\chi}))+\kappa \hat{c}^{-J+j}\ln(\hat{q})(J-j) +2\right)
\end{align}
Working from this last expression, we take the sum over $j$ and group terms by their $j$ dependence, as follows. 
\begin{align}\label{eq:sum_Qj_bound}
    \nonumber \sum_{j=1}^J \mathcal{Q}_j 
    &\leq{} \kappa Z_0 \left[\sum_{j=1}^J \hat{c}^{-J+j}\right] + \kappa Z_1 \left[\sum_{j=1}^J (J-j)\hat{c}^{-J+j}\right] + \kappa Z_2 \left[\sum_{j=1}^J (J-j)^2\hat{c}^{-J+j}\right] \\
    & \qquad + \kappa Z_3\left[\sum_{j=1}^J (J-j)^3\hat{c}^{-J+j}\right] + JZ_4 + J^2 Z_5  \left[\sum_{j=1}^J\frac{J-j}{J^2} \right]+ J^3 Z_6 \left[\sum_{j=1}^J\frac{(J-j)^2}{J^3} \right]
\end{align}
where $Z_p$ are constants (independent of $\kappa$ and $\varepsilon$), given by
\begin{align}
     Z_0 &= (1+2\hat{\chi})^2\ln(\frac{2(1+\hat{\chi})}{\hat{\chi}})\ln(e\hat{c}\hat{\beta}^2)\left(1-\bar{P}-\hat{\Delta} + (\bar{P} + \hat{\Delta})\ln(\hat{\Delta}^{-1}+1)\right)  \\
    Z_1 &=  (1+2\hat{\chi})^2\ln(\frac{2(1+\hat{\chi})}{\hat{\chi}})\ln(e\hat{c}\hat{\beta}^2)(\bar{P} + \hat{\Delta})\ln(\hat{\Delta}^{-1}+1) \nonumber\\
    & \qquad + 2(1+2\hat{\chi})^2\ln(\frac{2(1+\hat{\chi})}{\hat{\chi}})\ln(\hat{r})\left(1-\bar{P}-\hat{\Delta} + (\bar{P} + \hat{\Delta})\ln(\hat{\Delta}^{-1}+1)\right) \nonumber \\
    & \qquad + (1+2\hat{\chi})^2\ln(\hat{q})\ln(e\hat{c}\hat{\beta}^2)\left(1-\bar{P}-\hat{\Delta} + (\bar{P} + \hat{\Delta})\ln(\hat{\Delta}^{-1}+1)\right) \\
    Z_2 &=  2(1+2\hat{\chi})^2\ln(\frac{2(1+\hat{\chi})}{\hat{\chi}})\ln(\hat{r})(\bar{P} + \hat{\Delta})\ln(\hat{\Delta}^{-1}+1) \nonumber\\
    & \qquad + 2(1+2\hat{\chi})^2\ln(\hat{q})\ln(\hat{r})\left(1-\bar{P}-\hat{\Delta} + (\bar{P} + \hat{\Delta})\ln(\hat{\Delta}^{-1}+1)\right) \nonumber \\
    & \qquad + (1+2\hat{\chi})^2\ln(\hat{q})\ln(e\hat{c}\hat{\beta}^2)(\bar{P} + \hat{\Delta})\ln(\hat{\Delta}^{-1}+1) \\
    Z_3 &= 2(1+2\hat{\chi})^2\ln(\hat{q})\ln(\hat{r})(\bar{P} + \hat{\Delta})\ln(\hat{\Delta}^{-1}+1) \\
    Z_4 &= 2(1+2\hat{\chi})^2\ln(e\hat{c}\hat{\beta}^2)\left(1 -\bar{P}- \hat{\Delta} + (\bar{P}+\hat{\Delta})\ln(\hat{\Delta}^{-1}+1)\right)\\
    Z_5 &= 4(1+2\hat{\chi})^2\ln(\hat{r})\left(1 -\bar{P}- \hat{\Delta} + (\bar{P}+\hat{\Delta})\ln(\hat{\Delta}^{-1}+1)\right) + 2(1+2\hat{\chi})^2\ln(e\hat{c}\hat{\beta}^2)(\bar{P}+ \hat{\Delta})\ln(\hat{\Delta}^{-1}+1)\\
    Z_6 &= 4(1+2\hat{\chi})^2\ln(\hat{r})(\bar{P} + \hat{\Delta})\ln(\hat{\Delta}^{-1}+1)
\end{align}
Furthermore, each of the sums in brackets evaluates to a quantity that is upper bounded by a constant, independent of $\kappa$. We can compute upper bounds with the substitution $k = J-j$ and extending the sum to infinity.
\begin{align}
    &\sum_{j=1}^J \hat{c}^{-J+j} \leq \sum_{k=0}^\infty \hat{c}^{-k} = \frac{1}{1-\hat{c}^{-1}}\\
    &\sum_{j=1}^J (J-j) \hat{c}^{-J+j} \leq \sum_{k=0}^\infty k \hat{c}^{-k} = \frac{\hat{c}^{-1}}{(1-\hat{c}^{-1})^2} \\
    &\sum_{j=1}^J (J-j)^2 \hat{c}^{-J+j} \leq \sum_{k=0}^\infty k^2 \hat{c}^{-k} = \frac{\hat{c}^{-2} + \hat{c}^{-1}}{(1-\hat{c}^{-1})^3} \\
    &\sum_{j=1}^J (J-j)^3 \hat{c}^{-J+j} \leq \sum_{k=0}^\infty k^2 \hat{c}^{-k} = \frac{\hat{c}^{-3} + 4 \hat{c}^{-2} + \hat{c}^{-1}}{(1-\hat{c}^{-1})^4} \\
    &\sum_{j=1}^J \frac{J-j}{J^2} = \frac{1}{2} - \frac{1}{2J} \leq \frac{1}{2} \\
    &\sum_{j=1}^J \frac{(J-j)^2}{J^2} = \frac{(J-1)(2J-1)}{6J^2} \leq  \frac{1}{3}
\end{align}

We are now ready to conclude. We plug our bound on $\sum_{j=1}^J \mathcal{Q}_j$ from Eq.~\eqref{eq:sum_Qj_bound} and our bound on $\mathcal{Q}_{\rm KP}$ in Eq.~\eqref{eq:QKP_bound} into Eq.~\eqref{eq:Qbound_fromQjQKP}. We find an upper bound on $Q$ equal to
\begin{align}
    &\frac{\kappa}{1-\mu^2}\left[ \ln(\frac{\sqrt{1-\varepsilon^2}}{\varepsilon}) + \frac{Z_0}{1-\hat{c}^{-1}} + \frac{Z_1\hat{c}^{-1}}{(1-\hat{c}^{-1})^2} + \frac{Z_2(\hat{c}^{-2}+\hat{c}^{-1})}{(1-\hat{c}^{-1})^{3}} + \frac{Z_3(\hat{c}^{-3} +4\hat{c}^{-2}+\hat{c}^{-1})}{(1-\hat{c}^{-1})^4} -\ln(\frac{\sqrt{1-\mu^2}}{2\mu})\right] \nonumber \\
    & \qquad + 
    \lceil \log_{\hat{c}}(\kappa)\rceil \left(\frac{Z_4}{1-\mu^2}\right)+ \lceil \log_{\hat{c}}(\kappa)\rceil^2\left(\frac{Z_5}{2(1-\mu^2)}\right) + \lceil \log_{\hat{c}}(\kappa)\rceil^3\left(\frac{Z_6}{2(1-\mu^2)}\right)+ \left(\frac{3}{1-\mu^2}\right)
\end{align}
where $Z_p$ are given above. As long as $\kappa \geq \hat{c}$, we can say $\lceil \log_{\hat{c}}(\kappa) \rceil \leq 2(\ln(\kappa)/\ln(\hat{c}))^3$, $\lceil \log_{\hat{c}}(\kappa) \rceil^2 \leq 4(\ln(\kappa)/\ln(\hat{c}))^3$ and $\lceil \log_{\hat{c}}(\kappa) \rceil^2 \leq 8(\ln(\kappa)/\ln(\hat{c}))^3$. This allows us to upper bound the final line (sublinear in $\kappa$) by
\begin{equation}
    \frac{\ln(\kappa)^3}{\ln(\hat{c})^3} \left(\left(\frac{2Z_4}{1-\mu^2}\right)+\left(\frac{2Z_5}{1-\mu^2}\right) + \left(\frac{4Z_6}{1-\mu^2}\right)\right)+ \left(\frac{3}{1-\mu^2}\right) 
\end{equation}
This verifies that the final complexity is $O(\kappa) + O(\kappa \log(1/\varepsilon))$. By plugging in the parameter choices from \autoref{tab:free_params}, the numerical values reported can be verified from this final expression. 
\end{proof}

\section{Lower bounds on the complexity of norm estimation}\label{app:norm_query_lower_bound}

Our method has demonstrated that estimating the norm $\nrm{\vec{x}}$ to within a constant factor is a key step toward achieving an optimal QLSS with $O(\kappa)$ complexity. One might hope that estimating the norm could be easier than producing the state $\ket{\vec{x}}$. However, here we show an $\Omega(\kappa)$ lower bound on estimating the norm to within a constant factor less than 5/4. Our proof extends the method in Ref.~\cite{Orsucci2021solvingclassesof}, which showed an $\Omega(\kappa)$ lower bound on the query complexity of the QLSP even for positive semi-definite matrices. 

The key idea of that method was to reduce the \texttt{PromiseMajority} problem to the QLSP, such that solving the QLSP yields a solution to \texttt{PromiseMajority}. Known lower bounds on \texttt{PromiseMajority} then imply a lower bound the QLSP. Here we do the same, modifying the construction of Ref.~\cite{Orsucci2021solvingclassesof} so that we reduce to the problem of estimating the norm, rather than the QLSP.  Technically, as stated, our bound leaves open the possibility of achieving a multiplicative-factor approximation worse than $5/4$ in $o(\kappa)$ complexity. 

\begin{theorem}\label{thm:norm_query_lower_bound}
    Let $\mathcal{A}$ be a quantum algorithm for esitmating the norm in the following sense. On any input $\kappa \in [3,\infty)$, $\varepsilon \in (0,1/4]$, and given (controlled) access to a $(1,a)$-block-encoding $U_A$ for the $N \times N$ matrix $A$ with singular values contained in $[\kappa^{-1},1]$, and state-preparation unitary $U_{\vec{b}}$ for the $N$-dimensional vector $\vec{b}$, the algorithm $\mathcal{A}$ outputs a value $t$, where, with probability at least 2/3, $t \in [(1+\varepsilon)^{-1}\nrm{\vec{x}}, (1+\varepsilon) \nrm{\vec{x}}]$ where $
    \vec{x}$ is the solution of minimum norm to the equation $A\vec{x} =\vec{b}$.  Then $\mathcal{A}$ must make at least
    \begin{equation}
        \Omega( \min(\kappa \varepsilon^{-1}, N))
    \end{equation}
    queries to $U_A$ and to $U_{\vec{b}}$. 
\end{theorem}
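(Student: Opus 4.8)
The plan is to prove the lower bound by a reduction from a promise version of the Hamming‑weight (approximate counting) problem, following and extending the strategy of Ref.~\cite{Orsucci2021solvingclassesof}, which reduced \texttt{PromiseMajority} to the QLSP to get an $\Omega(\kappa)$ bound. Set $n = \Theta(\min(\kappa\varepsilon^{-1},N))$ and consider $n$ hidden bits $z\in\{0,1\}^n$ accessed by a standard bit‑query oracle. The core is to construct, for each $z$, a linear system $(A_z,\vec b)$ with $A_z$ an $N\times N$ matrix whose nonzero singular values all lie in $[\kappa^{-1},1]$, such that (i) each query to a block‑encoding $U_{A_z}$, and to $U_{\vec b}$, can be simulated with $O(1)$ queries to the $z$‑oracle plus $z$‑independent gates, and (ii) the norm $\nrm{\vec{x_z}}$ of the minimum‑norm solution lies in one of two intervals $I_0,I_1$ with $\inf I_1 > (1+\varepsilon)^2\sup I_0$, according to whether $z$ is a $0$‑ or $1$‑instance of the counting problem. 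Given such a construction, any $q$‑query algorithm $\mathcal A$ for norm estimation yields an $O(q)$‑query quantum algorithm deciding the counting problem with success probability $\ge 2/3$ (feed $\mathcal A$ the instance $(A_z,\vec b)$, simulate its controlled oracle calls, and threshold the returned $t$ against the gap between $I_0$ and $I_1$); the standard $\Omega(n/g)$ lower bound for distinguishing Hamming weight $\le w$ from $\ge w+g$ with $w$ near $n/2$ (with $g=1$ giving exact counting and $\Omega(n)$) then forces $q=\Omega(\min(\kappa\varepsilon^{-1},N))$.

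The construction routes the hidden count into a single near‑resonant singular value. Fix balanced signs $\epsilon_1,\dots,\epsilon_n\in\{\pm1\}$ (half of each) and let $s(z)=\tfrac1n\sum_i\epsilon_iz_i\in[-\tfrac12,\tfrac12]$; using one superposition query to the $z$‑oracle one builds a block‑encoding of the scalar $s(z)$, and combining it with a fixed $\kappa^{-1}$ block (standard block‑encoding arithmetic) gives a uniformly well‑conditioned $A_z$ with one right singular vector $\vec v$ (independent of $z$) carrying singular value $\lambda(z)=\sqrt{\kappa^{-2}+s(z)^2}$, whose left singular vector is $\vec b$, all other singular values being in $[\Omega(1),1]$. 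Then $\vec{x_z}=\lambda(z)^{-1}\vec v$, so
\begin{equation*}
  \nrm{\vec{x_z}}^2=\frac{1}{\kappa^{-2}+s(z)^2}\in[\,\Theta(1),\ \kappa^2\,].
\end{equation*}
After recoding the bits with $\epsilon_i=-1$, the quantity $\sum_i\epsilon_iz_i$ is an ordinary Hamming weight on $n$ bits, sitting near the middle of its range exactly when $s(z)\approx0$. Choosing an operating point $s_\star=\Theta(1/\kappa)$ (so $\nrm{\vec{x_z}}^2=\Theta(\kappa^2)$ there, consistent with $\nrm{\vec x}\le\kappa$) and linearizing, a change $\delta s$ in $s(z)$ multiplies $\nrm{\vec{x_z}}^2$ by $1+\Theta(\kappa\,\delta s)$; hence the two norm values are separated by more than a factor $(1+\varepsilon)^2$ once the signed count changes by $g=\Theta(\varepsilon n/\kappa)$. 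If $\min(\kappa\varepsilon^{-1},N)=\kappa\varepsilon^{-1}\le N$, take $n=\Theta(\kappa\varepsilon^{-1})$ so $g=\Theta(1)$ and the counting bound gives $\Omega(n/g)=\Omega(\kappa\varepsilon^{-1})$; if $N<\kappa\varepsilon^{-1}$, take $n=N$, whereupon $\kappa/n>\varepsilon$ makes even a unit change of the count move $\nrm{\vec{x_z}}^2$ past the detection threshold, so the reduction is from exact counting on $n=N$ bits and yields $\Omega(N)$. In both cases we get $\Omega(\min(\kappa\varepsilon^{-1},N))$.

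I would carry this out in the order: (1) state the quantum query lower bound for gapped/exact counting and fix the two regimes for $n$; (2) give the explicit block‑encoding $U_{A_z}$ realizing the resonant $\lambda(z)$ from $O(1)$ $z$‑queries, and verify $\sigma(A_z)\subseteq[\kappa^{-1},1]$ for every $z$ and that $U_{\vec b}$ is $z$‑independent; (3) compute $\nrm{\vec{x_z}}$, pick $s_\star$ and $g$, and check the factor‑$(1+\varepsilon)^2$ separation; (4) assemble the reduction, carefully implementing each (controlled) oracle call of $\mathcal A$ with $O(1)$ calls to the $z$‑oracle, and noting a $2/3$‑correct $t$ gives a $2/3$‑correct decision. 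The main obstacle is steps (2)–(3): engineering $\{A_z\}$ to be uniformly well‑conditioned over all $2^n$ strings while exposing a singular value that depends on the hidden count through a superposition‑accessible amplitude, and then balancing the condition number $\kappa$, the detection precision $\varepsilon$, and the resonance width so that the induced norm gap matches a counting gap of exactly $g=\Theta(\varepsilon n/\kappa)$ — rather than $\Theta(\varepsilon n)$ (which only gives $\Omega(\varepsilon^{-1})$) or a problem that degenerates into a Grover‑type regime (which only gives $\Omega(\sqrt{\cdot})$). That balance is precisely what yields the extra factor $\kappa$; the restrictions $\kappa\ge3$, $\varepsilon\le1/4$ and the $O(1)$ block‑encoding overheads are absorbed into the constants.
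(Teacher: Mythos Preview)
Your high-level strategy matches the paper's: reduce a gapped majority/counting problem to norm estimation by engineering the solution norm to depend on the hidden count with sensitivity $\Theta(\kappa)$, so that a $(1+\varepsilon)$-estimate separates promise cases differing by $g=\Theta(\varepsilon n/\kappa)$ bits, giving $\Omega(n/g)=\Omega(\kappa/\varepsilon)$. Your choice of operating point $s_\star=\Theta(1/\kappa)$ (rather than $s=0$, where the first-order sensitivity vanishes and one would only get $\Omega(\kappa/\sqrt{\varepsilon})$) is the right one, and the case split on whether $N\lessgtr\kappa/\varepsilon$ is correct. The paper's concrete instance is different from yours: it sets $N'=N/2$, $M=\Theta(\max(1,N\varepsilon/\kappa))$, and writes $A$ explicitly as a product of unitaries and the single non-unitary factor $C=I_{N'}-(1-\kappa^{-1})\vec 1\vec 1^\dagger$, conjugated by the diagonal sign matrix $D=\mathrm{diag}((-1)^{y_i})$; the norm $\nrm{\vec x}^2$ then comes out as a closed-form rational function of $\vec 1^\dagger D\vec 1=\pm M/N'$, and the spectral constraint $\sigma(A)\subset[\kappa^{-1},1]$ is immediate from the factorization. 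This buys the paper an explicit $O(1)$-query block-encoding with no approximation and a clean verification of the norm gap.

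Two points in your sketch need tightening before it is a proof. First, producing a singular value exactly equal to $\sqrt{\kappa^{-2}+s(z)^2}$ with a $z$-independent right singular vector via ``standard block-encoding arithmetic'' is not quite right: if you embed the row $[\kappa^{-1},\,s(z)]$ into $A_z$, the right singular vector \emph{does} depend on $z$ (which is harmless for $\nrm{\vec x_z}$, since only the left singular vector must equal $\vec b$), and naive $2\times2$ completions of that row push one singular value outside $[\kappa^{-1},1]$ for generic $s$. You should either allow a zero singular value and use a rank-deficient embedding of the $1\times 2$ row (the paper's framework permits this), or imitate the paper's unitary-sandwich construction; either way the claim needs to be checked uniformly over all $z$. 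Second, your reduction as written only lower-bounds queries to $U_A$, since your $U_{\vec b}$ is $z$-independent; the theorem also asserts the same bound on $U_{\vec b}$ queries, and the paper obtains that via a dual construction that moves the $z$-dependence from $A$ into $\vec b$.
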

\begin{proof}
    First we define the $\texttt{PromiseMajority}(M,N')$ problem following Ref.~\cite{Orsucci2021solvingclassesof}: Given a vector $\vec{y} \in \{0,1\}^{N'}$ and a value of $M \in \{1,\ldots,N'\}$ (where $M+N'$ is even), and the promise that either (i) $y_i = 0$ for $(N'+M)/2$ of the entries, or (ii) $y_i=1$ for $(N'+M)/2$ of the entries, determine whether (i) or (ii) is the case.  Suppose a quantum algorithm has query access to the entries of $\vec{y}$ via a unitary $\mathcal{P}_{\vec{y}}$ acting as $\mathcal{P}_{\vec{y}}\ket{i}\ket{z} = \ket{i}\ket{ z \oplus y_i}$. Then, the number of queries to $\mathcal{P}_{\vec{y}}$ that the quantum algorithm must make  to solve the $\texttt{PromiseMajority}(M,N')$ problem with at least 2/3 probability of correctness is $\Omega(N'/M)$ (see \cite[Lemma 19]{Orsucci2021solvingclassesof} and \cite[Corollary 1.2]{nayak1999queryComplexityMedian}). 

    Now, we will define a family of linear systems, parameterized by $\kappa$, $\varepsilon$, and $N$. The linear systems will be size $N \times N$ (without loss of generality, we let $N$ be a power of 2) and have condition number $\kappa$. We will show that learning the norm of the solution to these linear systems to within factor $1+\varepsilon$ yields a solution to the $\texttt{PromiseMajority}(M,N')$ problem, where the relationship between parameters $(M,N')$ and $(\kappa, \varepsilon, N)$ is
    \begin{align}
        N' &= N/2 \\
        M &= 2 \max(1, 4 N\varepsilon/\kappa)\,.
    \end{align}
    Furthermore, we will show that the linear systems can be constructed such that the block-encoding $U_A$ can be carried out with two queries to the oracle $\mathcal{P}_{\vec{y}}$, while $U_{\vec{b}}$ requires zero queries; or it can be constructed such that $U_A$ requires zero queries, and $U_{\vec{b}}$ can be accomplished in one query. Thus, the lower bound of $\Omega(N'/M)$ queries to $\mathcal{P}_{\vec{y}}$ will imply a query lower bound on the norm estimation problem of $\Omega(\min(\kappa/\varepsilon,N))$. 
    
    To define the family, let $\vec{1} = (1;1;\ldots;1)/\sqrt{N'}$ denote the unit vector of length $N'$ with equal entries. Let $D$ be the $N' \times N'$ diagonal unitary matrix for which the $i$th diagonal entry of $D$ is equal to $(-1)^{y_i}$, and define unit vector
    \begin{equation}
        \vec{d} = D\vec{1} = \frac{1}{\sqrt{N'
        }}\begin{bmatrix}
            (-1)^{y_0} \\
            \vdots \\
            (-1)^{y_{N'-1}}
        \end{bmatrix}\,.
    \end{equation}
    Finally, let 
    \begin{equation}
        C = I_{N'} -(1-\kappa^{-1})\vec{1}\vec{1}^\dagger
    \end{equation}
    where $I_{N'}$ is the $N' \times N'$ identity matrix. We can see that all singular values of $C$ lie in the interval $[\kappa^{-1},1]$. We can also see that $C$ is invertible and that its inverse is given by 
    \begin{equation}
        C^{-1} = I_{N'} + (\kappa-1)\vec{1}\vec{1}^\dagger \,.
    \end{equation}
    Then, define the Hermitian $N\times N$ matrix $A$ and vector $\vec{b}$ by
    \begin{equation}
       A = 
       \frac{1}{2}\begin{bmatrix}
            C + I_{N'} & (C-I_{N'})D \\
            D(C-I_{N'}) & C+I_{N'}
        \end{bmatrix} \qquad \qquad  
        \vec{b} = \frac{1}{\sqrt{1 + \varepsilon^2N^2/M^2}}\begin{bmatrix}
            \vec{1} \\
            \varepsilon\frac{N}{M} \vec{1}
        \end{bmatrix}\,.
    \end{equation}
    We note the factorizations
    \begin{align}
        A = &\underbrace{\begin{bmatrix}
           I_{N'} & 0 \\
           0 & D \end{bmatrix}}_{\text{unitary}}
        \underbrace{\begin{bmatrix}
           I_{N'}/\sqrt{2} & I_{N'}/\sqrt{2} \\
           I_{N'}/\sqrt{2} & -I_{N'}/\sqrt{2}
        \end{bmatrix}}_{\text{unitary}}
        \begin{bmatrix}
           C & 0 \\
           0 & I_{N'} 
       \end{bmatrix} 
        \underbrace{\begin{bmatrix}
           I_{N'}/\sqrt{2} & I_{N'}/\sqrt{2} \\
           I_{N'}/\sqrt{2} & -I_{N'}/\sqrt{2}
        \end{bmatrix}}_{\text{unitary}}
       \underbrace{\begin{bmatrix}
           I_{N'} & 0 \\
           0 & D 
       \end{bmatrix}}_{\text{unitary}}\label{eq:A_factorization}\\
       \vec{b} = \frac{1}{\sqrt{2}\sqrt{1 + \varepsilon^2N^2/M^2}}&\underbrace{\begin{bmatrix}
           I_{N'} & 0 \\
           0 & D 
       \end{bmatrix}}_{\text{unitary}}
       \underbrace{\begin{bmatrix}
           I_{N'}/\sqrt{2} & I_{N'}/\sqrt{2} \\
           I_{N'}/\sqrt{2} & -I_{N'}/\sqrt{2}
        \end{bmatrix}}_{\text{unitary}}
        \begin{bmatrix}
            \vec{1} + \varepsilon\frac{N}{M} \vec{d} \\
            \vec{1} - \varepsilon\frac{N}{M} \vec{d}
        \end{bmatrix}
    \end{align}
    Since all factors of $A$ are unitary except for the third factor, and $C$ has singular values in $[\kappa^{-1},1]$, we conclude that $A$ has singular values in $[\kappa^{-1},1]$. This also implies that $A$ is invertible and $A\vec{x} = \vec{b}$ has a unique solution.  Furthermore, from the factorization we see that this unique solution is
    \begin{align}
        \vec{x} &= \frac{1}{\sqrt{2}{\sqrt{1 + \varepsilon^2N^2/M^2}}}\underbrace{\begin{bmatrix}
           I_{N'} & 0 \\
           0 & D \end{bmatrix}}_{\text{unitary}}
        \underbrace{\begin{bmatrix}
           I_{N'}/\sqrt{2} & I_{N'}/\sqrt{2} \\
           I_{N'}/\sqrt{2} & -I_{N'}/\sqrt{2}
        \end{bmatrix}}_{\text{unitary}}
        \begin{bmatrix}
           C^{-1} & 0 \\
           0 & I_{N'} 
       \end{bmatrix} 
        \begin{bmatrix}
            \vec{1} + \varepsilon\frac{N}{M} \vec{d} \\
            \vec{1} - \varepsilon\frac{N}{M} \vec{d}
        \end{bmatrix} \\
        &= \frac{1}{\sqrt{2}\sqrt{1+\varepsilon^2N^2/M^2}}\underbrace{\begin{bmatrix}
           I_{N'} & 0 \\
           0 & D \end{bmatrix}}_{\text{unitary}}
        \underbrace{\begin{bmatrix}
           I_{N'}/\sqrt{2} & I_{N'}/\sqrt{2} \\
           I_{N'}/\sqrt{2} & -I_{N'}/\sqrt{2}
        \end{bmatrix}}_{\text{unitary}}
        \begin{bmatrix}
            \varepsilon\frac{N}{M}\vec{d} +\left(\kappa + \varepsilon(\kappa-1)\frac{N}{M}\vec{1}^\dagger\vec{d}\right) \vec{1}  \\
            \vec{1} - \varepsilon\frac{N}{M} \vec{d}
        \end{bmatrix}
    \end{align}
We recall that unitary matrices do not change norms, and compute the norm of $\vec{x}$ as
\begin{align}
    \nrm{\vec{x}}^2 &= \frac{
    1+2\left(\varepsilon\frac{N}{M}\right)^2
    +\left(\kappa + \varepsilon(\kappa-1)\frac{N}{M}\vec{1}^\dagger \vec{d}\right)^2
    -2\left(\varepsilon\frac{N}{M}\right)\vec{1}^\dagger \vec{d} 
    +2\left(\varepsilon\frac{N}{M}\right)\left(\kappa + \varepsilon(\kappa-1)\frac{N}{M}\vec{1}^\dagger \vec{d}\right)\vec{1}^\dagger \vec{d}}
    %
    {2(1+\varepsilon^2N^2/M^2)} \\
\end{align}
Now we observe that the value of $\vec{1}^\dagger\vec{d}$ differs in case (i) and case (ii) of the \texttt{PromiseMajority} problem. 
\begin{equation}
    \vec{1}^\dagger\vec{d} = \begin{cases}
        M/N' = 2M/N & \text{in case (i) }\\
        -M/N' = -2M/N & \text{in case (ii)}
    \end{cases}
\end{equation}
Let $\nrm{\vec{x}}_{(i)}$ and $\nrm{\vec{x}}_{(ii)}$ denote the norm values in the two caes. We have
\begin{align}
    \frac{\nrm{\vec{x}}_{(i)}^2}{\nrm{\vec{x}}_{(ii)}^2} &= 
    \frac{
    %
    1+2\left(\varepsilon\frac{N}{M}\right)^2+\left(\kappa + 2\varepsilon(\kappa-1)\right)^2-4\varepsilon +4\varepsilon\left(\kappa + 2\varepsilon(\kappa-1)\right)}
    %
    {1+2\left(\varepsilon\frac{N}{M}\right)^2+\left(\kappa - 2\varepsilon(\kappa-1)\right)^2+4\varepsilon -4\varepsilon\left(\kappa - 2\varepsilon(\kappa-1)\right)} \\
    &= \frac{
    \left(\kappa^2+1+2\left(\varepsilon\frac{N}{M}\right)^2+4\varepsilon^2 (\kappa^2-1)\right) + 4\varepsilon(\kappa^2-1)}
    {\left(\kappa^2+1+2\left(\varepsilon\frac{N}{M}\right)^2+4\varepsilon^2 (\kappa^2-1)\right) - 4\varepsilon(\kappa^2-1)} \\
    &=\frac{
    1+ \frac{4\frac{\kappa^2-1}{\kappa^2+1}\varepsilon}{1+2\left(\varepsilon\frac{N}{M}\right)^2\frac{1}{\kappa^2+1}+4\frac{\kappa^2-1}{\kappa^2+1}\varepsilon^2}}
    {1- \frac{4\frac{\kappa^2-1}{\kappa^2+1}\varepsilon}{1+2\left(\varepsilon\frac{N}{M}\right)^2\frac{1}{\kappa^2+1}+4\frac{\kappa^2-1}{\kappa^2+1}\varepsilon^2}} 
\end{align}
Now we recall some of our parameter relations. We have chosen $M$ such that $M \geq 4\varepsilon N/\kappa$. We have assumed $\kappa \geq 3$, so $(\kappa^2-1)/(\kappa^2+1) \geq 4/5$. Finally, we have assumed $\varepsilon \leq 1/4$, so $4\frac{\kappa^2-1}{\kappa^2+1}\varepsilon^2 \leq 1/4$. This allows us to assert
\begin{align}
    \frac{\nrm{\vec{x}}_{(i)}^2}{\nrm{\vec{x}}_{(ii)}^2} &\geq 
    \frac{1+\frac{\frac{16}{5}\varepsilon}{1+\frac{1}{8} + \frac{1}{4}}}{1+\frac{\frac{16}{5}\varepsilon}{1+\frac{1}{8} + \frac{1}{4}}} \\
    &= \frac{1+\frac{128}{55}\varepsilon}{1-\frac{128}{55}\varepsilon}\label{eq:ratio_norms_bound}
\end{align}
Now, if the algorithm $\mathcal{A}$ produces an estimate $t$ satisfying $t \in [(1+\varepsilon)^{-1} \nrm{\vec{x}}, (1+\varepsilon)\nrm{\vec{x}}]$, 
then 
\begin{equation}
    t^2 \in [(1+2\varepsilon+\varepsilon^2)^{-1} \nrm{\vec{x}}^2, (1+2\varepsilon +\varepsilon)\nrm{\vec{x}}^2] \subseteq [(1+9\varepsilon/4)^{-1}\nrm{\vec{x}}^2,(1+9\varepsilon/4)\nrm{\vec{x}}^2 ]\label{eq:tsquared_inclusion}
\end{equation}
where the last inclusion follows under the assumption $\varepsilon \leq 1/4$. Since $9/4 < 128/55$,  Eqs.~\eqref{eq:ratio_norms_bound} and \eqref{eq:tsquared_inclusion} together imply that by taking the output $t$ and determining whether $t^2$ is closer to $\nrm{\vec{x}}_{(i)}^2$ or $\nrm{\vec{x}}_{(ii)}^2$, we can determine whether we are in case (i) or case (ii) and solve the \texttt{PromiseMajority} problem. 

It remains to show that we can construct a block-encoding $U_A$ and state-preparation untiary $U_{\vec{b}}$ for $A$ and $\vec{b}$ as defined. Constucting $U_{\vec{b}}$ is simple as the vector does not depend on $\vec{y}$ and need not query $\mathcal{P}_{\vec{y}}$, and all entries of the superposition are known. To construct $U_A$, we refer to its factorization in Eq.~\eqref{eq:A_factorization}. The block-encoding for $A$ multiplies block-encodings for each of the factors:
\begin{itemize}
    \item The unitary operation $\left[\begin{smallmatrix}
        I_{N'} & 0 \\
        0 & D
    \end{smallmatrix}\right] = \ketbra{0}\otimes I_{N'} + \ketbra{1}\otimes \mathcal{P}_{\vec{y}}$ is accomplished with a controlled-$P_{\vec{y}}$ query. Controlled-$P_{
    \vec{y}}$ can be built from $P_{\vec{y}}$ as follows. Prior to beginning the algorithm, we make make $T = O(1)$ queries $P_{\vec{y}}$ on random inputs $i_1, \ldots, i_T$, learning $y_{i_1}, \ldots, y_{i_T}$. With high probability we find a value $i^*$ for which $y_{i^*} = 0$ (unless $M \approx N$ and we are in case (ii), in which case we can already easily differentiate case (i) and case (ii) in $O(1)$ classical queries). We prepare the state $\ket{i^*}$ in an ancilla register. In order to apply controlled-$P_{\vec{y}}$ on the state $\ket{i^*}\ket{c}\ket{i}\ket{z}$, where the first qubit is the ancilla, the second qubit is the control, and the final two qubits are the target of the query, we perform a controlled swap between the registers holding $\ket{i^*}$ and $\ket{i}$, controlled on the register holding $\ket{c}$. Then, we query the final two registers, and undo the controlled swap. Since $y_{i^*} = 0$, there is no action on the final register when the control is set to 1. 
    \item The unitary operation $\frac{1}{\sqrt{2}}\left[\begin{smallmatrix}
        I_{N'} & I_{N'} \\
        I_{N'} & -I_{N'}
    \end{smallmatrix}\right]$ is equivalent to a Hadamard gate, $H \otimes I_{N'}$. No queries to $\mathcal{P}_{\vec{y}}$ are required. 
    \item The matrix $\left[\begin{smallmatrix}
        C & 0 \\
        0 & I_{N'}
    \end{smallmatrix}\right]= \ketbra{0}\otimes C + \ketbra{1} \otimes I_{N'}$ is accomplished by controlled-$U_C$ operation, where $U_C$ is a block-encoding of $C$. We block-encode $C$ as follows. Let $\theta = \arccos(\kappa^{-1})$. First, perform a parallel layer of Hadmard gates, which maps $\ket{\vec{1}}$ to $\ket{\vec{e_0}}$. Then, introduce an ancilla qubit and apply $\ketbra{\vec{e_0}}\otimes e^{i\theta Y} + (I_{N'}-\ketbra{0}) \otimes I_2$. Finally, apply another layer of Hadamards to send $\ket{0}$ back to $\ket{\vec{1}}$. Since $\bra{0}e^{i\theta Y}\ket{0} = \kappa^{-1}$, we have the correct $\bra{\vec{1}}C \ket{\vec{1}}$ matrix element. For any vector orthogonal to $\ket{\vec{1}}$, the $R_y$ gate is not triggered, so $U_C$ acts as identity.  
\end{itemize}
Overall, each $U_A$ operation requires two $\mathcal{P}_{\vec{y}}$ queries, and each $U_{\vec{b}}$ operation requires no queries. This completes the proof that the number of queries to $U_A$ must be at least $\Omega(\min(\kappa \varepsilon^{-1}, N))$.

The method above does not lower bound the number of times $\mathcal{A}$ must query the state-preparation unitary $U_{\vec{b}}$. To make this lower bound, we slightly modify the construction to
    \begin{align}
        A = &
        \underbrace{\begin{bmatrix}
           I_{N'}/\sqrt{2} & I_{N'}/\sqrt{2} \\
           I_{N'}/\sqrt{2} & -I_{N'}/\sqrt{2}
        \end{bmatrix}}_{\text{unitary}}
        \begin{bmatrix}
           C & 0 \\
           0 & I_{N'} 
       \end{bmatrix} 
        \underbrace{\begin{bmatrix}
           I_{N'}/\sqrt{2} & I_{N'}/\sqrt{2} \\
           I_{N'}/\sqrt{2} & -I_{N'}/\sqrt{2}
        \end{bmatrix}}_{\text{unitary}}
        \\
       \vec{b} = \frac{1}{\sqrt{2}\sqrt{1 + \varepsilon^2N^2/M^2}}
       &\underbrace{\begin{bmatrix}
           I_{N'} & 0\\
           0 & D
        \end{bmatrix}}_{\text{unitary}}
        \begin{bmatrix}
            \vec{1}\\
             \varepsilon\frac{N}{M} \vec{1}
        \end{bmatrix}
    \end{align}
    Here, the block-encoding $U_A$ requires zero queries to $\mathcal{P}_{\vec{y}}$, and the unitary $U_{\vec{b}}$ can be implemented with one query. Furthermore, the solution $\vec{x}$ is unitarily related to the solution stated above, and thus has the same norm in case (i) and case (ii). The same analysis then implies that the number of queries to the state preparation unitary cannot be smaller than $\Omega(\kappa\varepsilon^{-1},N)$.
\end{proof}

\section{Optimal QLSS based on quantum Zeno effect}\label{app:Zeno}

In this section, we sketch an analysis for an optimal QLSS based on the Zeno method. This is essentially the same as the near-optimal method from Ref.~\cite{lin2019OptimalQEigenstateFiltering}, applied to the adiabatic path  \autoref{sec:estimate_by_adiabatic}, and incorporating fixed-point amplitude amplifciation and log log trick to achieve optimality. 

We assume without loss of generality that $\kappa$ is a power of 2. First, we note the identity that when $\kappa^{-1} \leq \varsigma \leq 1$ and $\kappa^{-1} \leq \sigma < \sigma' \leq 1$, we have
\begin{align}
    \frac{f(\sigma)f(\sigma') + \varsigma^2 \sqrt{(1-f(\sigma)^2)(1-f(\sigma')^2)}}{f(\sigma')^2 + \varsigma^2(1-f(\sigma')^2)} &\geq \frac{f(\sigma)f(\sigma') + \varsigma^2 (1-f(\sigma')^2) }{f(\sigma')^2 + \varsigma^2(1-f(\sigma')^2)}  \\
    &= \frac{f(\sigma')^2 + \varsigma^2 (1-f(\sigma')^2)  - f(\sigma')(f(\sigma')-f(\sigma))}{f(\sigma')^2 + \varsigma^2(1-f(\sigma')^2)}\,, \\
    &= 1- (f(\sigma') - f(\sigma))\frac{f(\sigma')}{f(\sigma')^2 + \varsigma^2(1-f(\sigma')^2)} \\
    &\geq 1- (f(\sigma') - f(\sigma))\frac{f(\sigma')}{\sigma'^2} \\
    &\geq 1- \frac{f(\sigma') - f(\sigma)}{\sigma'} \label{eq:Zeno_identity}
\end{align}
where the first line uses monotonicity of $f$ (i.e., $f(\sigma) < f(\sigma')$), the second to last line uses the identity $f(\sigma')^2 + \kappa^{-2}(1-f(\sigma')^2) = \sigma'^2$ and the fact that $\varsigma \geq \kappa^{-1}$, and the final line uses $\sigma' \geq f(\sigma')$. The key fact is that the overlap between nearby  states along the adiabatic path remains close to one (see \autoref{sec:estimate_by_adiabatic} for definitions):
\begin{align}
    \lvert \braket{\vec{\bar{x}_{\sigma}}}{\vec{\bar{x}_{\sigma'}}} \rvert &= 
    \frac{\vec{\bar{x}_{\sigma'}}^\dagger \vec{\bar{x}_{\sigma}} }{\nrm{\vec{\bar{x}_{\sigma'}}}\nrm{\vec{\bar{x}_{\sigma}}}} \geq 
    \frac{\vec{\bar{x}_{\sigma'}}^\dagger \vec{\bar{x}_{\sigma}} }{\nrm{\vec{\bar{x}_{\sigma}}}^2}
    =  \frac{\sum_{j} |w_j|^2\frac{f(\sigma)f(\sigma') + \varsigma_j^2\sqrt{(1-f(\sigma)^2)(1-f(\sigma')^2)}}{(f(\sigma)^2 + (1-f(\sigma)^2) \varsigma_j^2)(f(\sigma')^2 + (1-f(\sigma')^2) \varsigma_j^2)}}{\sum_j \frac{|w_j|^2}{f(\sigma)^2 + (1-f(\sigma)^2)\varsigma_j^2}  } \\
    &\geq  \left(1-\frac{f(\sigma')-f(\sigma)}{\sigma'}\right)\frac{\sum_{j} \frac{|w_j|^2}{f(\sigma)^2 + (1-f(\sigma)^2) \varsigma_j^2}}{\sum_j \frac{|w_j|^2}{f(\sigma)^2 + (1-f(\sigma)^2)\varsigma_j^2}  } \\
    &= 1-\frac{f(\sigma')-f(\sigma)}{\sigma'}\,. \label{eq:overlap_zeno}
\end{align}
where we have used the identity of Eq.~\eqref{eq:Zeno_identity} in the second line.  Note that since $\sigma' \geq f(\sigma')$, the right-hand side is greater than $f(\sigma)/f(\sigma')$. When $\sigma, \sigma' \geq 2 \kappa^{-1}$, $f(\sigma) \approx \sigma$ holds reasonably well, so we may roughly think of the overlap as being lower bounded by approximately $\sigma/\sigma'$, as mentioned in the main text. 

We may now construct a Zeno-inspired algorithm using a similar sequence of $\sigma$ values as was used for norm estimation in \autoref{sec:estimate_by_adiabatic}.  There, we chose the geometric sequence: $1, 2^{-1},2^{-2},\ldots, 2^{-\log_2(\kappa)}\equiv \kappa^{-1}$. Here, we choose the same sequence with the inverse of $f$ applied, appended with a final point where $\sigma = \kappa^{-1}$:
\begin{align}
    1\equiv f^{-1}(1), f^{-1}(2^{-1}), f^{-1}(2^{-2}), \ldots, f^{-1}(\kappa^{-1}), f^{-1}(0.5\kappa^{-1}), f^{-1}(0) \equiv \kappa^{-1}
\end{align}
This sequence has the property that the overlap between subsequent states in the sequence is always lower bounded by 1/2. We can see this by applying the identity in Eq.~\eqref{eq:overlap_zeno} and obtaining
\begin{align}
    \lvert \braket{\vec{\bar{x}_{f^{-1}(2^{-j})}}}{\vec{\bar{x}_{f^{-1}(2^{-j-1})}}} \rvert \geq  \frac{f(f^{-1}(2^{-j-1}))}{f(f^{-1}(2^{-j}))}= \frac{1}{2}
\end{align}
and also (for the last jump in the sequence)
\begin{align}
    \lvert \braket{\vec{\bar{x}_{f^{-1}(0.5\kappa^{-1})}}}{\vec{\bar{x}_{f^{-1}(0)}}} \rvert \geq 1-\frac{0.5\kappa^{-1}}{f^{-1}(0.5\kappa^{-1})} \geq \frac{1}{2}\,,
\end{align}
where the last inequality is true since the the function $f^{-1}$ maps all inputs to values greater than $\kappa^{-1}$. 

The main idea of the Zeno algorithm is to jump from one state in the sequence to the next. Namely, given that we have prepared the state $\ket{\vec{\bar{x}_{\sigma'}}}$,  we approximately implement the jump $\ket{\vec{\bar{x}_{\sigma'}}}\mapsto \ket{\vec{\bar{x}_{\sigma}}}$ by  fixed-point amplitude amplification \cite{yoder2014FixedPointSearch}; this requires the ability to reflect about both $\ket{\vec{\bar{x}_{\sigma'}}}$
 and $\ket{\vec{\bar{x}_{\sigma'}}}$, which is supplied by KR to error $\delta/2$ at cost $O(\sigma^{-1}\log(1/\delta))$. Since the overlap is lower bounded by a constant, we may amplify the jump to success probability $1-\delta/2$ at cost $O(\log(1/\delta))$ calls to KR. Noting that $\sigma \geq f(\sigma)$, we may conclude that each jump is performed to error $\delta$ at cost $O(f(\sigma)^{-1} \log^2(1/\delta))$. As in the analysis in \autoref{sec:estimate_by_adiabatic}, we invoke the log log trick, choosing the value of $\delta$ on the jump from $f^{-1}(2^{-j})$ to $f^{-1}(2^{-j-1})$ to be $e^{\Omega(2+\log_2(\kappa)-j)}$, and choosing the value of $\delta$ on the final jump to be $O(1)$.  The total error over the $\log_2(\kappa)+2$ jumps is at most $O(1)$. The total cost is given by the sum (cf.~Eq.~\eqref{eq:Q_adiabatic_norm_optimal})
\begin{align}
 Q&=\underbrace{O(\kappa)}_{\text{final jump}} + \sum_{j=1}^{\log_2(\kappa)+1} O(2^j(2+\log_2(\kappa)-j)^2) \\
 &\leq  O(\kappa) \sum_{j'=0}^{\infty } 2^{-j'}(1+j')^2
 \leq O(\kappa)\,.\label{eq:Q_Zeno_norm_optimal}
\end{align}
This shows that a Zeno-based QLSS equipped with fixed-point amplitude amplification and the log log trick can achieve optimal complexity. 

\end{document}